\newcolumntype{M}[1]{>{\centering\arraybackslash}m{#1}} 
\definecolor{newgreen}{rgb}{0,0.5,0}
\definecolor{newblue}{rgb}{0,0,0.6}
\let\origdoublepage\cleardoublepage
\newcommand{\clearemptydoublepage}{%
  \clearpage{\pagestyle{empty}\origdoublepage}}
\let\cleardoublepage\clearemptydoublepage
\theoremstyle{plain}
\newtheorem{theorem}{Theorem}[chapter]
\newtheorem{lemma}[theorem]{Lemma}
\theoremstyle{definition}
\newtheorem*{definition}{Definition}
\newtheorem*{example}{Example}
\newtheorem*{conjecture}{Conjecture}
\theoremstyle{remark}
\newtheorem*{remark}{Remark}
\newtheorem*{note}{Note}
\DeclareMathOperator{\GL}{GL}
\DeclareMathOperator{\SL}{SL}
\DeclareMathOperator{\ESL}{ESL}
\DeclareMathOperator{\PSL}{PSL}
\DeclareMathOperator{\F}{\mathbb{F}}
\DeclareMathOperator{\PGL}{PGL}
\DeclareMathOperator{\tr}{tr}
\DeclareMathOperator{\Tr}{Tr}
\DeclareMathOperator{\Gal}{Gal}
\newcommand{\be}{\begin{equation}}
\newcommand{\ee}{\end{equation}}
\newcommand{\bmp}{\begin{pmatrix}}
\newcommand{\emp}{\end{pmatrix}}
\newcommand{\fd}[1]{\mathbb{#1}}
\newcommand{\bra}[1]{\left\langle{#1}\right\vert}
\newcommand{\ket}[1]{\left\vert{#1}\right\rangle}
\newcommand{\abs}[1]{\left\vert{#1}\right\vert}
\newcommand{\braket}[2]{\langle#1|#2\rangle}
\newcommand{\ketbra}[1]{\vert#1\rangle\langle#1\vert}
\newcommand{\eye}{\mathbbm{1}}
\newcommand{\qs}{\ket{\psi}}
\newcommand{\qsp}{\ket{\phi}}
\newcommand{\Dp}{D_\mathbf{p}}
\newcommand{\p}{\mathbf{p}}
\newcommand{\bu}{\mathbf{u}}
\newcommand{\bv}{\mathbf{v}}
\newcommand{\bp}{\mathbf{p}}
\newcommand{\Fd}{\fd{F}_d}
\newcommand{\Fdinf}{\fd{F}_d \cup \{\infty\}}
\newcommand{\Fp}{\fd{F}_p}
\newcommand{\etal}{{\it et al }}
\newcommand{\SLbar}{SL(2,$\fd{Z}_{\bar{d}}$)~}
\newcommand{\Zdbar}{$\fd{Z}_{\bar{d}}$~}
\newcommand{\mcH}{\mathcal{H}}
\newcommand{\mcZ}{\mathcal{Z}}
\newcommand{\mcM}{\mathcal{M}}
\newcommand{\mcW}{\mathcal{W}}
\newcommand{\UZ}{U_{\mathcal{Z}}}
\newcommand{\UW}{U_{\mathcal{W}}}
\newenvironment{chapquote}[2][4em]
  {\setlength{\@tempdima}{#1}%
   \def\chapquote@author{#2}%
   \parshape 1 \@tempdima \dimexpr\textwidth-2\@tempdima\relax%
   \itshape}
  {\par\normalfont\hfill--\ \chapquote@author\hspace*{\@tempdima}\par\bigskip}
\newglossaryentry{d}{type=notation,name={$d$}, description={the dimensionality of the Hilbert space under consideration}, sort={1a}}
\newglossaryentry{omega}{type=notation,name={$\omega$}, 
description={$d$-th root of unity: $\omega = e^{2\pi i /d}$}, sort={1b}} 
\newglossaryentry{Zd}{type=notation,name={$\fd{Z}_d$}, 
description={the set of integers modulo $d$}, sort={1c}} 
\newglossaryentry{Fd}{type=notation,name={$\Fd$}, 
description={the finite field of order $d$ (only when $d$ is a prime power)}, sort={1d}} 
\newglossaryentry{X}{type=notation,name={$X$}, 
description={the shift operator}, sort={2a}} 
\newglossaryentry{Z}{type=notation,name={$Z$}, 
description={the phase operator},sort={2b}} 
\newglossaryentry{Du}{type=notation,name={$D_{\bu}$}, 
description={the displacement operator indexed by $\bu=(u_1,u_2) \in \fd{Z}_d^2$ or $\Fd^2$ },sort={2c}}
\newglossaryentry{Omega}{type=notation,name={$\Omega(\bu,\bv)$}, 
description={the symplectic form, or symplectic area},sort={2d}} 
\newglossaryentry{SL}{type=notation,name={$\SL(2,\F_d)$}, 
description={the special linear group (aka the symplectic group) consisting of $2 \times 2$ matrices over the field $\Fd$, whose determinants are 1}, sort={4a}} 
\newglossaryentry{GL}{type=notation,name={$\GL(2,\F_d)$}, 
description={the general linear group consisting of $2 \times 2$ matrices over the field $\Fd$, whose determinants are non-zero},sort={4b}}
\newglossaryentry{GLp}{type=notation,name={$\GL_p(2,\F_d)$}, 
description={the subgroup of $\GL(2,\F_d)$ consisting of matrices whose determinants belong to the subfield $\F_p \subset \F_d$},sort={4c}}
\newglossaryentry{ESL}{type=notation,name={$\ESL(2,\Fd)$}, 
description={the extended special linear group consisting of $2 \times 2$ matrices over the field $\Fd$, whose determinants are $\pm1$},sort={$\SL2$},sort={4d}}  	
\newglossaryentry{PSL}{type=notation,name={$\PSL(2,\F_d)$}, 
description={the projective special linear group: quotient group of SL obtained by setting SL elements $S$ and $S'$ equivalent if $S=cS'$ for some constant $c \in \Fd$},sort={4e}} 	
\newglossaryentry{PGL}{type=notation,name={$\PGL(2,\F_d)$}, 
description={the projective linear group: quotient group of GL obtained by setting GL elements $G$ and $G'$ equivalent if $G=cG'$ for some constant $c \in \Fd$},sort={4f}} 	
\newglossaryentry{Zauner}{type=notation,name={$\mathcal{Z}$}, 
description={the Zauner symplectic matrix $\mathcal{Z} = \bmp 0 & -1 \\ 1 & -1\emp$}, sort={5a}} 
\newglossaryentry{UZ}{type=notation,name={$U_{\mathcal{Z}}$}, 
description={the Zauner unitary}, sort={5b}}
\newglossaryentry{S}{type=notation,name={$S$}, 
description={generally denotes a symplectic matrix in $\SL(2,\F_d)$}, sort={5c}}  
\newglossaryentry{US}{type=notation,name={$U_S$}, 
description={generally denotes a symplectic unitary}, sort={5d}}  
\newglossaryentry{G}{type=notation,name={$G$}, 
description={generally denotes an element of $\GL(2,\F_d)$}, sort={5e}}  
\newglossaryentry{UG}{type=notation,name={$U_G$}, 
description={generally denotes a Galois-unitary}, sort={5f}}  
\newglossaryentry{Delta}{type=notation,name={$\Delta$}, 
description={the determinant of a GL element}, sort={5g}}  
\newglossaryentry{K}{type=notation,name={$K$}, 
description={complex conjugation}, sort={5h}}  
\newglossaryentry{gk}{type=notation,name={$g_k$}, 
description={the Galois automorphism mapping $\omega \mapsto \omega^k$}, sort={5i}}  
\newglossaryentry{Trace}{type=notation,name={$\Tr$}, 
description={trace of a matrix or a linear operator}, sort={6a}} 
\newglossaryentry{trace}{type=notation,name={$\tr$}, 
description={field theoretic trace}, sort={6b}} 
\newglossaryentry{real}{type=notation,name={$\fd{R}$}, 
description={the field of real numbers}, sort={7a}}
\newglossaryentry{rational}{type=notation,name={$\fd{Q}$}, 
description={the field of rational numbers}, sort={7b}} 
\newglossaryentry{complex}{type=notation,name={$\fd{C}$}, 
description={the field of complex numbers}, sort={7c}} 
\newglossaryentry{Qw}{type=notation,name={$\fd{Q}(\omega)$}, 
description={the cyclotomic field generated from $\omega$ and the rationals}, sort={7e}} 
\newglossaryentry{Qwd}{type=notation,name={$\fd{Q}(\omega)^d$}, 
description={the $d$-dimensional vector space over the cyclotomic field}, sort={7f}}
\newglossaryentry{Legendre}{type=notation,name={$l(x)$}, 
description={the Legendre symbol}, sort={9a}} 
\newglossaryentry{Q}{type=notation,name={${\bf Q}$}, 
description={the set of quadratic residues consisting of elements of a field that can be written as the square of another non-zero element},sort={9b}} 
\newglossaryentry{N}{type=notation,name={${\bf N}$}, 
description={the set of quadratic non-residues consisting of elements of a field that can not be written as the square of any other element},sort={9c}} 
\newacronym{mub}{MUB}{Mutually Unbiased Bases}
\newacronym[shortplural=MUS]{mus}{MUS}{Minimum Uncertainty State}
\newacronym{povm}{POVM}{Positive Operator-Valued Measure}
\newglossaryentry{sicpovm}{type = acronym, name = {SIC-POVM}, description = {Symmetric Informationally-Complete Positive Operator-Valued Measure}, long={Symmetric Informationally-Complete Positive Operator-Valued Measure}, short={SIC-POVM}, longplural = {Symmetric Informationally-Complete Positive Operator-Valued Measures}, shortplural = {SIC-POVMs}, first = {Symmetric Informationally-Complete Positive Operator-Valued Measure (SIC-POVM)} }
\newacronym{wh}{WH}{Weyl-Heisenberg}
\newcommand{\MUB}{\gls{mub}}
\newcommand{\MUBs}{\glspl{mub}}
\newcommand{\MUS}{\gls{mus}}
\newcommand{\MUSs}{\glspl{mus}}
\newcommand{\WH}{\gls{wh}}
\newcommand{\off}{\phantom{u}}
\newcommand{\twotofour}[2][]{\makebox[0pt]{\smash{\raisebox{.5\normalbaselineskip}{#2}\hspace*{4\arraycolsep}}}}
\begin{document}

\dominitoc 
\setcounter{mtc}{4} 

\pagestyle{empty}
\pagenumbering{roman}

\begin{titlepage} 
        \begin{center}
        \vspace*{1.0cm}
	\Huge
	{\bf Preface to the arXiv version}
	\normalsize
        \vspace*{1.0cm}
	 \end{center}
This thesis was accepted by the University of Waterloo (Ontario, Canada) in 2015. It concerns studies on symmetric quantum structures in finite dimensional Hilbert spaces such as symmetric informationally-complete states (SIC-POVMs) and mutually unbiased bases (MUBs). In one of the studies on MUBs, we made use of Galois-unitary operators (g-unitaries for short), which are a generalized notion of anti-unitary operators. The concept of g-unitaries was first introduced in 2013 by Appleby, Yadsan-Appleby, and Zauner in the context of the SIC-POVM existence problem. Despite certain bizarre behaviors of theirs, g-unitaries are deeply interesting and useful for solving certain problems in quantum information and they deserve to be further investigated. In making this thesis available online, I hope that it will become a resourceful introduction for those who are interested in this subject.\\

\vspace*{1cm}

\hspace*{10.4cm} Hoan Bui Dang\\
\hspace*{11cm} Waterloo, Canada, 2015\\

\end{titlepage}

\begin{titlepage}
        \begin{center}
        \vspace*{1.0cm}

        \Huge
        {\bf Studies of symmetries that give special quantum states \\the ``right to exist''}

        \vspace*{1.0cm}

        \normalsize

        \vspace*{1.0cm}

        \Large
        Hoan Bui Dang \\

        \vspace*{3cm}

        \normalsize
        A thesis \\
        presented to the University of Waterloo \\ 
        in fulfillment of the \\
        thesis requirement for the degree of \\
        Doctor of Philosophy \\
        in \\
        Physics \\

        \vspace*{2.0cm}

        Waterloo, Ontario, Canada, 2015 \\

        \vspace*{1.0cm}

        \copyright\ Hoan Bui Dang 2015 \\
        \end{center}
\end{titlepage}

\pagestyle{plain}
\setcounter{page}{2}

\cleardoublepage 

  \noindent
I hereby declare that I am the sole author of this thesis. This is a true copy of the thesis, including any required final revisions, as accepted by my examiners.

  \bigskip
  
  \noindent
I understand that my thesis may be made electronically available to the public.

\cleardoublepage


\begin{center}\textbf{Abstract}\end{center}

In this thesis we study symmetric structures in Hilbert spaces known as symmetric informationally complete positive operator-valued measures (SIC-POVMs), mutually unbiased bases (MUBs), and MUB-balanced states \cite{Appleby2014S, Dang2013, Appleby2014G}. Our tools include symmetries such as the Weyl-Heisenberg (WH) group symmetry, Clifford unitaries, Zauner symmetry, and Galois-unitaries (g-unitaries). In the study of SIC-POVMs, we found their geometric significance as the ``most orthogonal'' bases on the cone of non-negative operators. While investigating SICs, we discovered a linear dependency property of the orbit of an arbitrary vector with the Zauner symmetry under the WH group. In dimension $d=3$, the linear dependency structures arising from certain special SIC states are identified with the Hesse configuration known from the study of elliptic curves in mathematics. We provide an analytical explanation for linear dependencies in every dimension, and a numerical analysis based on exhaustive numerical searches in dimensions $d=4$ to 9. We also study the relations among normal vectors of the hyperplanes spanned by the linearly dependent sets, and found 2-dimensional SICs embedded in the Hilbert space of dimension $d=6$, and 3-dimensional SICs for $d=9$. A full explanation is given for the case $d=6$. Another study in the thesis focuses on the roles of g-unitaries in the theory of mutually unbiased bases. G-unitaries are, in general, non-linear operators defined to generalize the notion of anti-unitaries. Due to Wigner's theorem \cite{Wigner1931}, their action has to be restricted to a smaller region of the Hilbert space, which consists of vectors whose components belong to a specific number field. G-unitaries are relevant to MUBs when this number field is the cyclotomic field. In this case, we found that g-unitaries simply permuted the bases in the standard set of MUBs in odd prime-power dimensions. With their action further restricted only to MUB vectors, g-unitaries can be represented by rotations in the Bloch space, just as ordinary unitary operators can. We identify g-unitaries that cycle through all $d+1$ bases in prime power dimensions $d=p^n$ where $n$ is odd (the problem in even prime power dimensions has been solved using ordinary unitaries). Each of these MUB-cycling g-unitaries always leaves one state in the Hilbert space invariant. We provide a method for calculating these eigenvectors. Furthermore, we prove that when $d=3$ mod 4, they are MUB-balanced states in the sense of Wootters and Sussman \cite{Wootters2007} and Amburg \etal \cite{Amburg2014}.

\cleardoublepage


\begin{center}\textbf{Acknowledgements}\end{center}

First and foremost, I would like to thank my co-supervisor, Prof. Christopher Fuchs, who inspired me to study quantum information and  to investigate the SIC problem particularly. Working with Chris, whether at Bell Labs or in the QBism group at Perimeter Institute, has always been an enjoyable and motivating experience: I cannot recall a time walking out of his office without fresh new ideas or inspired thoughts. Despite the circumstance that only allowed us to work remotely with each other in the last years of my PhD program, Chris has always been behind to support me whenever I was in need.

I am deeply grateful to my collaborators, Prof. Ingemar Bengtsson and Dr. Marcus Appleby, for their mentoring and tremendous research support. Being in three different continents means that in order for us to have a discussion via Skype, Marcus would need to get ready at 6am, while Ingemar would have to stay up until midnight. They have meant much more to me than just collaborators, and I cannot thank them enough for the kindness, availability and knowledge they have been providing.

I am very thankful to my supervisor, Prof. Joseph Emerson, for being willing to take me as a student into his research group during the final year of my program and for his support to my research. I would also like to thank all the past and current members of my committee, Prof. Norbert Lutkenhaus, Prof. Debbie Leung, Prof. Lucien Hardy, and Prof. Daniel Gottesman, for their attention and many research comments and suggestions that they have been providing me throughout the course of my PhD. I am very grateful to Prof. William Wootters for his kindness in offering to come a long way from Williamstown (Massachusetts) to be an external examiner in my thesis defense, and for many detailed and helpful comments on this thesis.

My groupmates Matthew Graydon and Gelo Tabia have been my closest friends during my time at Waterloo. We have taken courses and traveled to conferences together. We have discussed physics uncountably many times (not in a set theoretic sense). They have always been available whenever I need help or just someone to talk to. I also want to thank my colleagues {\r A}sa Ericsson,  Huangjun Zhu, Kate Blanchfield, and David Andersson for many fruitful research discussions and collaborations.

Finally, nothing that I have accomplished would have been possible without my family. Whether living with me or being thousands of miles away, they always give me their full love, trust, and support. I dedicate this thesis to my family, although I know that nothing I do will be enough to thank them.

My research was financially supported in part by the U.S. Office of Naval Research (Grant No. N00014-09-1-0247), and by the Natural Sciences and Engineering Research Council of Canada via the Vanier Canada Graduate Scholarship. Research at Perimeter Institute is supported by the Government of Canada through Industry Canada and by the Province of Ontario through the Ministry of Research and Innovation.

\cleardoublepage

\phantom{a}
\vskip 4cm
{\it
In memory of my grandfather, Dr. Hoan Trong Bui (1929-2014), who named me after himself, wishing that whenever I succeed our name will be praised, and whenever I fail we will take the blame together.}
\cleardoublepage

\renewcommand\contentsname{Table of Contents}
\addtocontents{toc}{\protect{\pdfbookmark[0]{\contentsname}{toc}}}
\tableofcontents
\cleardoublepage
\phantomsection

\addcontentsline{toc}{chapter}{List of Tables}
\listoftables
\cleardoublepage
\phantomsection		

\addcontentsline{toc}{chapter}{List of Figures}
\listoffigures
\cleardoublepage
\phantomsection		

\addcontentsline{toc}{chapter}{\textbf{List of Acronyms}} 
\setlist[description]{leftmargin=!, labelwidth=6.5em} 
\printglossary[type = acronym, title={List of Acronyms}, nopostdot]
\setlist[description]{style=standard} 
\cleardoublepage
 \phantomsection 

\addcontentsline{toc}{chapter}{\textbf{List of Notations}} 
\setlist[description]{leftmargin=!, labelwidth=6.5em} 
\printglossary[type = notation, title={List of Notations}, nopostdot]
\setlist[description]{style=standard} 
\cleardoublepage
 \phantomsection 

\pagenumbering{arabic}

\chapter{Introduction}\label{chap:introduction}

\begin{chapquote}{Philip Anderson (1972)}
``It is only slightly overstating the case to say that physics is the study of symmetry.''
\end{chapquote}

\section{Overview}\label{sec:motivation}
A symmetry is a property of an object that remains the same under certain transformations. Although this might sound like a purely mathematical concept, symmetries can be found almost everywhere in the world surrounding us. For example, a bicycle has a left-right reflection symmetry: its left half is (mostly) the mirror image of its right half. Without this property, an unbalanced bicycle might be unpleasant to ride. A circular shape has a full rotational symmetry: if someone rotates your round dinner plate about its center by an arbitrary angle while you are away, you will not be able to tell the difference when you come back. Not only have we all made use of this property when we learned how to use a compass in elementary school, or when we played ball games as kids, we are all now living in a world of modern machinery that is largely based on inventions with rotational symmetry, such as wheels and gears. Nature is full of symmetries as well. The bodies of most animals have a bilateral (left-right) symmetry. Flowers often have radially or bilaterally symmetric shapes, which have been found to aid bumble bees in their foraging process \cite{West1998}. In general, one can find symmetries from a minuscule scale such as in atomic or molecular lattices, all the way to the cosmic scale such as in galaxies that are hundreds of thousands light years in diameter across.

In physics, symmetry plays a fundamental role. Studying physical phenomena can be broken down into two components \cite{Gross1996}. The first component is the given initial conditions, which might be very complicated and unpredictable, and therefore they have to be ``given,'' i.e. there is not much we can do but to accept them as they are. The second component consists of rules that capture all the patterns and regularities that are independent of the initial conditions. This is where the physics lies. In other words, when we say we understand the physics of a phenomenon, it means we have figured out what some of these rules are, and the more rules we have found, the more deeply have we understood. In this sense, one major goal of physics is to discover rules of regularities that can be applied to a broad range of phenomena. However, this is a very difficult task, as regularities are often buried under a vast amount of irregularity from the initial conditions. For example, who would have thought that there is a similarity between the elliptic orbits of planets in the Solar system and the falling of an apple? Or who would have thought that electric fields and magnetic fields can be transformed into each other, and how this could start a train of thoughts leading to the explanation of the perihelion precession of Mercury's orbit? This is where symmetry comes to help.

Symmetries in physics, often coming in the form of invariance or equivalence principles, help filter out the irrelevant complications to reveal the regularities at the heart of physical phenomena. For example, the weak equivalence principle (also known as the Galilean equivalence principle, which lays the foundation for theories of gravity) states that the trajectory of a point mass in a gravitational field depends only on its initial position and velocity, and not on its mass or composition. In this example, the principle was deduced from experimental observations by Galileo in the late 16th century. However, as symmetries have become an increasingly powerful tool, it often is the case that symmetries dictate the laws in modern physics and even provide predictions that predate experimental discoveries. For example, Lorentz symmetry formed the backbone of relativity and led to the derivation of Dirac's equation and the prediction of anti-particles. The gauge symmetries underlay the development of electromagnetism, quantum electrodynamics, quantum chromodynamics, and the Standard Model. The symmetry of exchanging identical particles in quantum mechanics classified all elementary particles into bosons or fermions whose behaviors are very distinctive. The list can go on. However, the intimate connection between symmetry and physics is not merely based on historical evidence. It has been rigorously proved that every continuous symmetry of the action of a physical system implies a conserved physical quantity, a result known as Noether's theorem \cite{Noether1918}. As Philip Anderson has put it, ``it is only slightly overstating the case to say that physics is the study of symmetry.''

In this thesis, we are interested in the study of symmetries in quantum physics. Quantum theory is considered one of the most successful theories in physics in many different ways: 1) no experiment has ever violated its predictions, 2) the theory has provided the most accurate experimental tests to date, for example the determination of the fine structure constant $\alpha$ with an agreement to one part per billion \cite{Odom2006}, 3) the theory is applied in most areas of modern physics including condensed matter physics, atomic, molecular and optical physics, particle physics, astrophysics etc., and finally 4) it has a huge impact on today's world, with a wide range of applications such as transistors for computing devices, lasers, light emitting diodes, liquid crystal displays, nuclear magnetic resonance, and magnetic resonance imaging, just to name a few. On the other hand, quantum theory is also considered one of the strangest theories. It has been developed for over a century now, but many questions since its birth are still under debate: what is the nature of the wave function? do wave functions collapse? is the theory non-local because of ``spooky action at a distance''? and many more. Indeed, the field of quantum foundations is still an active research area, and many questions have to be answered before quantum theory can be fully comprehended.

During the last few decades, research in quantum foundations has received a boost from developments in the new field of quantum information. Quantum information makes use of special features in quantum theory to help accomplish information-related tasks that are impossible using classical physics. For example, entanglement is used in superdense coding \cite{Bennett1992} and quantum teleportation \cite{Bennett1993}. Another example is quantum key distribution \cite{Bennett2014}, which relies on the quantum information-disturbance trade-off to help generate and distribute secure encryption keys. At the same time, quantum information brings tools from information theory into quantum physics, and helps provide us with an information theoretic framework to study quantum theory. An example of this is the quantification of quantum information using von Neumann entropy, which is an analogue of the Shannon entropy used in classical information theory \cite{Nielsen2010}.

The research presented in this thesis arises from problems in quantum information involving various symmetric structures in the space of quantum states such as SICs, MUBs, and MUB-balanced states (their definitions will be provided later). These structures display such a high degree of symmetry that makes it seem as though they have no ``right to exist,'' as Amburg \etal \cite{Amburg2014} have described MUB-balanced states. Our hopes in investigating these problems are not only to make use of their symmetries to discover new properties and new applications in quantum information, but also to gain a deeper understanding about symmetries in quantum state spaces and quantum theory. The content of the thesis is organized as follows.

\section{Organization of the thesis}\label{sec:organization}

The thesis consists of various studies, from symmetric informationally complete states and linear dependency structures in Weyl-Heisenberg orbits, to the study of Galois-unitaries with applications to the theory of mutually unbiased bases. The results of these studies are organized into two main chapters according the relevant symmetry: \cref{chap:WH} contains results related to the Weyl-Heisenberg symmetry, as well as Clifford unitaries and Zauner symmetry, and \cref{chap:gunitary} provides the results from a study of Galois-unitary symmetry.

\cref{chap:WH} starts with a historical introduction of the Weyl-Heisenberg group in \cref{sec:historical}. We then focus on a class of symmetric structures in the Hilbert space known as SICs in \cref{sec:SICs}. We provide a brief history of the development of the SIC problem, and an extensive list of SICs' applications and major known results. We discuss group symmetries that are intimately related to SICs such as Weyl-Heisenberg covariance, Clifford group, and Zauner symmetry. We show that SICs form the most orthogonal bases on the cone of non-negative operators. Then, in \cref{sec:linde}, we present our results from studies of linear dependencies in Weyl-Heisenberg orbits, which include an analysis in dimension $d=3$ and the connection to elliptic curves via Hesse configuration, an analytical explanation of linear dependencies in all dimensions where the initial vector is an eigenvector of the Zauner unitary, a detailed numerical report in low dimensions which shows extra linearly dependent relations that cannot be accounted for by our theorem, and a robust construction of ``small SICs'' resulted from the linear dependency structure.

\cref{chap:gunitary} contains our results from the study of a novel symmetry called Galois-unitary, applied to the theory of mutually unbiased bases. \cref{sec:motivations} describes the motivations for our study, including the context in which g-unitaries were first constructed. We first introduce mutually unbiased bases and describe their Clifford-based construction in \cref{sec:MUBs}. We then provide a representation for the general linear group, using Clifford group extended by g-unitary operators in \cref{sec:gunitaries}. The treatment is divided into cases, when the dimension $d=p$ is an odd prime, and when $d=p^n$ is an odd prime power. We proved that the representation is faithful if $n$ is odd, and is ``almost'' faithful if $n$ is even. We provide some basic arithmetic of g-unitaries in \cref{sec:arithmetic}. In \cref{sec:geometric} we describe a type of  geometric object called complementarity polytopes and use their symmetry groups to provide a geometric interpretation of g-unitaries. \cref{sec:simulation} proposes a scheme to simulate g-unitaries using unitary operators in a larger Hilbert space. The MUB-cycling problem is discuss in \cref{sec:MUBcyclers}, in which we prove that MUB-cyclers exist in every odd prime power dimensions $d=p^n$ where $n$ is odd, and they do not exist when $n$ is even. We also provide a characterization of all MUB-cyclers when they do exist. In \cref{sec:eigenvectors}, we prove that every MUB-cycler has a unique (up to a phase) eigenvector, and provide a way to calculate this eigenvector. In \cref{sec:MUBbalanced}, we show that when $d=3$ mod 4, the eigenvectors of MUB-cyclers are MUB-balanced states, that they form a single orbit under the extended Clifford group, and that they are identical to those constructed in Amburg \etal \cite{Amburg2014}.

\cref{chap:summary} provides a summary of our main results, and suggests a list of open problems and ideas for future investigation.

\cref{sec:fieldtheory} provides an introduction to field theory, which covers the basic concepts used in the thesis such as field extensions, Galois automorphisms, finite fields, cyclotomic fields, etc. \cref{sec:Clifford-app} describes faithful representations of the Clifford group specifically for the case of prime and prime power dimensions.

\section{List of specific contributions}\label{sec:contributions}

Results in this thesis that represent my own specific contributions include:
\begin{enumerate}
\item \cref{sub:orthogonal}, published in \cite{Appleby2014S}.
\item \cref{sub:analytical,sub:numerical,sub:smallSICs}, published in \cite{Dang2013}.
\item \cref{sec:arithmetic,sec:simulation}, from my own research notes.
\item Numerical analysis of g-unitary rotations leading to \cref{sec:geometric}, published in \cite{Appleby2014G}.
\item \cref{lem-Am,thm-GLtypes} in \cref{sec:MUBcyclers}, published in \cite{Appleby2014G}.
\item Contribution to the proof of \cref{thm-eigenvectors,lem-UG2m0,lem-dimS1,lem-UGeigenvector} in \cref{sec:eigenvectors}, published in \cite{Appleby2014G}.
\item Contribution to the proof of \cref{thm-MUBbalanced} in \cref{sec:MUBbalanced}, published in \cite{Appleby2014G}.
\end{enumerate}

\chapter{Weyl-Heisenberg group symmetry}\label{chap:WH}
\minitoc
\section{Historical background}\label{sec:historical}

The use of group theory in quantum mechanics dates back to the very early days of the theory. In 1925, Hermann Weyl learned from Born the recent developments in quantum mechanics made by Born, Jordan, and Heisenberg, and he immediately tried his own approach  from the perspective of the representation theory of groups \cite{Scholz2007}. This work was published in 1927 \cite{Weyl1927}, and further developed in his book \cite{Weyl1931}, in which Weyl used the ray representations of the Abelian group of rotations to develop a quantum formalism that is applicable to both finite and infinite dimensions.

Weyl realized that the canonical commutation relation between the position operator $\hat{q}$ and momentum operator $\hat{p}$ (this is sometimes called the Heisenberg commutation rule, although it first appeared in a paper by Born and Jordan \cite{Born1925,Fedak2009})
\be\label{eq-Heisenberg}
[\hat{q},\hat{p}] = i
\ee
does not admit finite-dimensional representations (we have set the unit to $\hbar$). In other words, for any dimension $d$ that is finite, there do not exist $d \times d$ matrices $Q$ and $P$ that can satisfy \cref{eq-Heisenberg}. One can see that by taking the trace of \cref{eq-Heisenberg} and observing that the left hand side is zero due to the cyclic property of the trace function, while the right hand side is non-zero. Moreover, $\hat{p}$ and $\hat{q}$ are unbounded operators and they are not defined on the whole Hilbert space. 

For a finite dimension $d$, Weyl instead introduced Hermitian matrices $P$ and $Q$, which are defined by
\be
P = \frac{1}{i\alpha} \log X \hskip 15mm
Q = \frac{1}{i\beta} \log Z
\ee
so that 
\be
X = e^{i\alpha P} \hskip 15mm
Z = e^{i\beta Q},
\ee
where $X$ and $Z$ are $d \times d$ unitary matrices satisfying Weyl's commutation relation
\be\label{eq-Weyl}
XZ = \omega^{-1}ZX \hskip 15mm \omega = e^{2\pi i /d}.
\ee
The commutator $[Q,P]$, as one takes the limit $d \rightarrow \infty$ while keeping $\alpha\beta = 2\pi/d$, can be calculated to be \cite{Santhanam1977}
\be
[Q,P]_{r,s} = i\delta(r-s),
\ee
which recovers Heisenberg commutation relation. For a detailed construction of quantum mechanics in finite dimensions based on Weyl's commutation relation \cref{eq-Weyl}, we refer the readers to a series of papers by Jagannathan and Santhanam {\it et al} \cite{Santhanam1976, Santhanam1977, Jagannathan1981, Jagannathan1982, Santhanam1982}. Here, let us focus on how Weyl constructed unitary $X$ and $Z$ in finite dimensions that satisfy such a commutation relation. The following argument should not be taken as a rigorous mathematical derivation, but should rather be considered as a train of thoughts leading to the construction of the Weyl-Heisenberg group.

Let $X$ and $Z$ be two elements of the group of unitary rotations in a $(d-1)$-dimensional ray space, meaning that they are $d \times d$ unitary matrices. We further assume that they satisfy the following commutation relation
\be\label{eq-XZ} XZ = \omega^{-1} ZX,
\ee
where $\omega=e^{2\pi i/d}$ is a primitive $d$-th root of unity. It follows that
\be
X^j Z^k = \omega^{-jk}Z^k X^j
\ee
 for all integers $j$ and $k$. If either $j$ or $k$ is equal to $d$, then $\omega^{-jk}=1$ and it follows that $X^d$ commutes with $Z$ and $Z^d$ commutes with $X$. Under the extra assumption that the representation is irreducible, by Schur's lemma we conclude that
\be
X^d = Z^d = \eye.
\ee
We can choose a basis in which $Z$ is diagonal and write it in the form
\be \label{def-Zmatrix}
Z = \bmp 1& 0& 0&\cdots& 0 \\ 0& \omega& 0&\cdots& 0 \\ 0& 0& \omega^2& \cdots & 0\\ \vdots& \vdots& \vdots& \ddots& \vdots \\ 0& 0& 0&\cdots& \omega^{d-1} \emp.
\ee
In this basis, $X$ takes the form of a cyclic permutation matrix
\be \label{def-Xmatrix}
X = \bmp 0& 0& \cdots& 0& 1 \\ 1& 0& \cdots& 0& 0 \\ 0& 1& \cdots& 0& 0 \\ \vdots& \vdots & \ddots &\vdots & \vdots\\0& 0& \cdots& 1& 0 \emp.
\ee
The two operators $X$ and $Z$ we arrive at are known as the shift and the clock (or phase) operators, respectively. They were introduced by Sylvester in 1882 in the very early days of matrix theory \cite{Sylvester2012}. Here they naturally arise from the construction of a group-based quantum theory built upon Weyl's commutation relation.

The set of operators of the form $\omega^i X^j Z^k$, with $i,j$ and $k$ taking integer values in the range $[0,d-1]$, forms a group under ordinary matrix multiplication. This group is called the discrete \WH ~group (to be distinguished from the continuous WH group in infinite dimensions, although we will drop the label ``discrete'' from now on, as it should be clear from the context of the thesis that we are working in finite dimensions only). The group elements
\be 
D_{j,k}=X^j Z^k
\ee 
are called displacement operators (please note that displacement operators may be defined with different phases to suit different situations, and we will make the definition precise when it comes to each situation). 

Besides its structural role in the foundations of quantum mechanics, the \WH~group has found applications and connections to many other fields of science. For example, in modern mathematics, it naturally appears in the theory of elliptic curves and theta functions \cite{Mumford2007}. In classical signal processing, WH group is used in the development of adaptive radar and error-correcting codes in communications \cite{Howard2006}. In quantum information, this group is also known as the generalized Pauli group, and it has numerous applications, for example in superdense coding \cite{Bennett1992}, quantum error correction \cite{Gottesman1997,Gottesman1998} and the theory of mutually unbiased bases ~\cite{Appleby2009, Thas2009, Appleby2014S, Blanchfield2014}. Its intimate relation to SIC-POVMs will be discussed in the next section.

\section{Symmetric informationally-complete POVMs}\label{sec:SICs}
The Weyl-Heisenberg group symmetry plays an indisputable role in the studies of a special class of symmetric structures in quantum state space known as \glspl{sicpovm}, or SICs for short. We will prove a number of results about SICs in this thesis. Moreover, the SIC problem was part of the motivation for the study of linear dependencies in \cref{sec:linde} and g-unitaries in \cref{chap:gunitary}. We devote this section to give an introduction to SIC-POVMs and their relation to the WH group, and to present our result on SICs being the closest to orthonormal bases with respect to a class of orthogonality measures.

\subsection{Definitions, significances, and the existence problem}\label{sub:SICexistence}
There is more than one way to define a SIC-POVM. We start with the one that explains the meaning of its name.
\begin{definition}
A set of $n$ Hermitian operators $\{E_i\}_{i=1}^n$ on a $d$-dimensional Hilbert space is called a \gls{povm} if they satisfy
\be \label{def-POVM1} E_i \ge 0 \ee for all $i = 1,2,...,n$ and
\be \label{def-POVM2} \sum_i E_i = \eye. \ee

\end{definition}

\begin{example}
The projection operators $P_i = \ketbra{i}$ of a projective (Von Neumann) measurement, where $\ket{i}$ are states constituting an orthonormal basis in a $d$-dimensional Hilbert space, form a POVM of $d$ elements.
\end{example}

A POVM $\{E_i\}_{i=1}^n$ can be thought of as a generalized quantum measurement, with $n$ outcomes labeled by $i$, whose probabilities are given by the Born rule
\be \label{eq-Bornrule}
p(i) = \Tr(\rho E_i).
\ee
One can see that condition \cref{def-POVM1} in the definition is to enforce that all the probabilities are non-negative, while condition \cref{def-POVM2}, often called the completeness condition, guarantees that they add up to one, as should be the case for a normalized probability distribution.
\begin{definition}
A POVM is said to be informationally complete if the unknown measured quantum state $\rho$ is completely specified by the measurement outcome probabilities $p(i)$.
\end{definition}

Informational completeness is equivalent to saying that the POVM elements $E_i$ span the space of Hermitian operators regarded as a $d^2$-dimensional real vector space equipped with the Hilbert-Schmidt inner product
\be \langle H_1,H_2 \rangle = \Tr(H_1 H_2).\ee
An informationally complete POVM therefore must have a minimum of $d^2$ elements.

\begin{definition}
A SIC-POVM is a POVM with $d^2$ elements $\{\Pi_i/d\}_{i=1}^{d^2}$, where $\Pi_i$ are rank-1 projection operators satisfying the symmetric property
\be \label{def-SIC}
\Tr(\Pi_i \Pi_j) = \alpha \hskip 5mm \forall i \ne j
\ee
for some constant $\alpha$.
\end{definition}
\begin{note}
We loosely call $\{\Pi_i\}$ a SIC, even though the POVM elements are technically $\Pi_i/d$.
\end{note}
There are a few things one can quickly deduce from the definition of SIC-POVMs. First, the value of the constant $\alpha$ can be determined from the dimension of the Hilbert space. From the POVM completeness condition, we have
\be
\sum_{i,j=1}^{d^2} \Pi_i \Pi_j = \left(\sum_{i=1}^{d^2} \Pi_i\right)^2 = d^2 \eye.
\ee
Taking the trace of both sides and making use of the symmetric property, one finds
\be \alpha = \frac{1}{d+1} ~.\ee

Secondly, although it is not explicit, the definition of a SIC-POVM implies that it is informationally complete. To see this, we will first show that the operators $\Pi_i$ are linearly independent. Suppose
\be \label{eq-biPii}\sum_i b_i \Pi_i = 0
\ee
for some set of numbers $b_i$. Multiplying both sides of the equation by $\Pi_k$ for some $k$ and then taking the trace, we obtain
\be b_k+ \alpha\sum_{i\ne k} b_i = 0 .\ee
On the other hand, $\Pi_i$ have unit trace as they are rank-1 projectors, so just taking the trace of \cref{eq-biPii} yields
\be \sum_i b_i = 0 .\ee
Given that $\alpha \ne 1$, it follows that 
\be \label{eq-lininde}
b_k=0 \hskip 5mm \forall k
\ee
and $\Pi_i$ are indeed linearly independent. There are $d^2$ of them, so they span the $d^2$-dimensional space of Hermitian operators. Thus, the POVM is informationally complete.

If one prefers to work with quantum states rather than with quantum measurements, there is an alternative definition of SIC-POVMs.

\begin{definition}
A set of $d^2$ normalized quantum states $\{\ket{\psi_i}\}_{i=1}^{d^2}$ is called a SIC set if it has a constant overlap between any two distinct states:
\be \label{def-SIC2}
\abs{\braket{\psi_i}{\psi_j}} = \frac{1}{\sqrt{d+1}} \hskip 5mm \forall i \ne j.
\ee
\end{definition}
If we define the projection operators $\Pi_i = \ketbra{\psi_i}$, then they are linearly independent and they span the space of Hermitian operators (following the same argument as before). Therefore the identity matrix can be written as a linear combination
\be \eye = \sum_i c_i \Pi_i.
\ee
Using the previous trick of taking the trace of the equation above and taking its trace after multiplying both sides by some $\Pi_k$, one can show that $c_k = 1/d$ for all $k$. So the set $\{\Pi_i/d\}$ is a POVM, and the two definitions of SIC-POVMs are indeed equivalent.

The second definition has a geometrical interpretation. Any vector $\qs \in \fd{C}^d$ spans a one-dimensional subspace of $\fd{C}^d$ called a line, which consists of all vectors of the form $a\qs$ for any scalar $a$. If two lines are represented by normalized vectors $\ket{\psi_1}$ and $\ket{\psi_2}$, then the angle $\theta$ between the two lines is given by
\be
\cos \theta = \abs{\braket{\psi_1}{\psi_2}}.
\ee
This means that the lines represented by vectors in a SIC set have a constant pairwise angle. Such lines are called equiangular lines, and a SIC set therefore is a set of equiangular lines, not just any set but a maximal one (there cannot be more than $d^2$ elements in the set because of the linear independence and the dimensionality of the space of Hermitian operators). The question is: do SICs exist in every dimension $d$?

Mathematicians have long been interested in figuring out the largest number of equiangular lines a vector space can admit (let us denote that number by $N(V)$, where $V$ refers to the vector space), and in constructing these maximal sets of equiangular lines. 

We start with the simplest type of vector spaces: the real ones $\fd{R}^d$. In dimension $d=2$, one can draw a maximum of 3 equiangular lines on a 2-dimensional plane (imagine the three hands of a watch at 20 minutes past 8 o'clock), so $N(\fd{R}^2) = 3$. In three dimensions, $N(\fd{R}^3) = 6$ and the 6 equiangular lines can be constructed by connecting antipodal vertices of a regular icosahedron, one of the five Platonic solids. This result, together with $N(\fd{R}^4) = 6$, has been known since 1948 in work by Haantjes \cite{Haantjes1948}. Lint and Seidel further investigated the problem, and obtained results in a number of higher dimensions \cite{Lint1966}:
\be
N(\fd{R}^5) = 10, \hskip 5mm
N(\fd{R}^6) = 16, \hskip 5mm
N(\fd{R}^7) = 28.
\ee
Lemmens and Seidel's paper in 1973 \cite{Lemmens1973} contains some important results about real equiangular lines. One result is Gerzon's theorem, which provides an upper bound for $N(\fd{R}^d)$
\be \label{eq-NRdbound}
N(\fd{R}^d) \le d(d+1)/2.
\ee
This actually can be seen from the linear independence argument in \cref{eq-lininde} together with the fact that a real $d \times d$ orthogonal matrix is specified by $d(d+1)/2$ real parameters. When the bound is saturated, one can calculate the angle to be
\be \label{eq-angle-real}
\cos\theta = \frac{1}{\sqrt{d+2}}.
\ee
Another key result, mentioned in \cite{Lemmens1973} as P. Neumann's theorem, states that if there exist $m$ equiangular lines in $\fd{R}^d$, where $m>2d$, and if the pairwise angle among them is $\theta$, then $(\cos \theta)^{-1}$ is an odd integer. Together with \cref{eq-angle-real}, this implies that for $d>3$, a necessary condition for the bound in \cref{eq-NRdbound} to be achieved is that $d=a^2 - 2$ for some odd integer $a$. The converse is not true, for example in dimension $d= 47 = 7^2-2$, where it has been proved that the bound cannot be achieved \cite{Makhnev2002}.

We will skip the detailed developments of this problem during the last 40 years or so, but we want to note that many dimensions have been investigated, many sets (not necessarily maximal) of real equiangular lines have been constructed, and many improvements have been made to the bounds of $N(\fd{R}^d)$ since 1948. However, the exact value for $N(\fd{R}^d)$ largely remains unknown even in small dimensions, and as of now in 2015, this is still an on-going line of research \cite{Tremain2008,Greaves2014}.

One would have thought that if the problem has been so difficult for the case of real vector spaces, its counterpart in complex vector spaces would be hopeless. Perhaps this is why the problem of complex equiangular lines did not get a lot of attention until much later. Surprisingly, the complex version of the problem seems to be more tractable than the real one. This is just to say that $N(\fd{C}^d)$ seems to be of a nice and simple form. Proving so, on the other hand, is a totally different matter and is in fact one of the most challenging open problems in quantum information and algebraic combinatorics. 

The upper bound for the number of complex equiangular lines
\be
N(\fd{R}^d) \le d^2
\ee
was proved by Delsarte \etal in their 1975 paper \cite{Delsarte1975}, which also mentioned that the bound can be saturated for $d=2$ and 3 without giving further details. Hoggar later provided solutions to the complex equiangular lines problem for $d=2,$ 3 and 8 in 1982 \cite{Hoggar1982}. In 1999, Zauner introduced the problem in the context of quantum designs in his PhD thesis \cite{Zauner1999}, in which he conjectured that
\be
N(\fd{C}^d) = d^2
\ee
for every dimension $d$, and gave concrete constructions for dimensions $d=2$ to 5. 

The term SIC-POVM was coined in a paper by Renes \etal in 2003 \cite{Renes2004}, in which they constructed SICs numerically all the way up to dimension $d=45$, thereby adding considerable weight to Zauner conjecture. In addition to its geometric name ``complex equiangular lines'', SICs are also known as ``minimal spherical 2-designs'' in studies of quantum t-designs for quantum information theory, and as ``equiangular tight frames'' in the theory of signal processing for engineering. They have practical applications in quantum tomography \cite{Rehacek2004, Grassl2005, Scott2006, Ballester2007, Zhu2011, Petz2012, Kalev2012}, quantum cryptography and communication \cite{Englert2008, Renes2005, Du2006, Durt2008, Fuchs2003, Fuchs2007, Kim2007, Bodmann2007, Oreshkov2011}, and radar and classical signal processing \cite{Howard2006, Bodmann2008, Herman2009, Oreshkov2011, Balan2009}. In addition, SIC-POVMs play important roles in foundational studies in quantum physics such as the QBist interpretation of quantum mechanics \cite{Fuchs2010, Appleby2011P, Fuchs2011, Fuchs2013, Gelo2013}. They also have deep mathematical connections to Lie algebras \cite{Appleby2011, Appleby2013G}, elliptic curves \cite{Hughston2007, Bengtsson2010}, and Galois theory \cite{Appleby2013}. Despite an enormous amount of research on SIC-POVMs in the recent years \cite{Zauner1999, Renes2004, 
Rehacek2004, Grassl2005, Scott2006, Ballester2007, Zhu2011, Petz2012, Kalev2012,
Englert2008, Renes2005, Du2006, Durt2008, Fuchs2003, Fuchs2007, Kim2007, Bodmann2007, Oreshkov2011,
Howard2006, Bodmann2008, Herman2009, Oreshkov2011, Balan2009, 
Fuchs2010, Appleby2011P, Fuchs2011, Fuchs2013, Gelo2013, Bengtsson2012,
Appleby2011, Appleby2013G, Hughston2007, Bengtsson2010, Appleby2013,
Scott2010, Appleby2012, Appleby2014Systems,
Appleby2005, Klappenecker2005, Colin2005, Flammia2006, Khatirinejad2007, Bos2007, Appleby2007, Appleby2014S, Grassl2008, Grassl2009,  Appleby2009S, Bengtsson2009, Fickus2009, Godsil2009, Zhu2010, Filippov2010,  Medendorp2011, Tabia2012, Rastegin2014, Zhu2010T, Gour2014}
and strong numerical evidences of their existence (published for every dimension up to $d=67$ \cite{Scott2010}) as well as analytical solutions in many small dimensions ($d=2$ to 15, 16, 19, 24, 28, 35, and 48 \cite{Scott2010, Appleby2012, Appleby2014Systems}), a general analytical construction or an existence proof for all dimensions is still missing.

\subsection{Weyl-Heisenberg group covariance}

One may attempt to find a SIC set by solving the defining system of equations \cref{def-SIC2}. But one would quickly realize that this set of non-linear equations is highly over-constrained: including normalization there are a total of $d^4$ equations, while the number of real parameters needed to describe $d^2$ vectors in $\fd{C}^d$ is $2d^3$. Taking the conjugate symmetry of the inner product into account, the number of equations reduces to $(d^4+d^2)/2$, but that is still one order in $d$ higher than the number of variables. It seems that the existence of a solution must be a miracle. This is what it means when we say SIC states have no ``right to exist'' (the expression is borrowed from Amburg \etal \cite{Amburg2014}, in which they actually talk about MUB-balanced states). However, given strong evidences of SIC states' existence, there is another viewpoint one could take: their existence is not a miracle, but rather an indication of deep underlying symmetries. In fact, one such symmetry has been observed, namely the Weyl-Heisenberg symmetry.\\

\begin{definition}We define the shift operator \gls{X} and the phase operator \gls{Z} by their action on the basis states $\{\ket{x}\}_{x=0}^{d-1}$ in a $d$-dimensional Hilbert space:
\be X\ket{x} = \ket{x+1} \hskip 15mm Z\ket{x} = \omega^x\ket{x}, \ee
where $\omega = e^{2 \pi i/d}$ is a primitive $d$-th root of unity, and the arithmetic inside Dirac's kets is modulo $d$. $X$ and $Z$ in matrix form are given in \cref{def-Xmatrix} and \cref{def-Zmatrix}.\\
\end{definition}
\begin{definition}
The Weyl-Heisenberg displacement operators \gls{Du}, labeled by a two-component vector $\bu$, are defined to be
\be \label{def-Du}
D_{\bu} \equiv \tau^{u_1 u_2}X^{u_1}Z^{u_2} \hskip 15mm \bu=\bmp u_1 \\ u_2 \emp, \ee
where $\tau = -e^{\pi i /d}$ and the two components $u_1$ and $u_2$ are integers modulo $\bar{d}$, which is defined to be
\be \label{def-dbar}
\bar{d} \equiv \begin{cases}
d & \text{if $d$ is odd} \\
2d & \text{if $d$ is even}. \end{cases}
\ee
to allow us to conveniently deal with both cases of odd $d$ and even $d$ at the same time \cite{Appleby2005}. Note that $\tau^{\bar{d}}=1$, $D_{\bu} = D_{\bv}$ if and only if $\bu = \bv \mod \bar{d}$, and $D_\bu$ are all traceless except when $\bu = \mathbf{0}$ mod $d$.
\end{definition}
The particular choice of the phase factors $\tau^{u_1u_2}$ in the definition above is so that
\be D_{\bu}^{\dagger} = D_{-\bu}, \ee
\be \label{eq-grouplaw}
D_{\bu} D_{\bv} = \tau^{\Omega(\bu,\bv)} D_{\bu+\bv}, \ee
where 
\be \Omega(\bu,\bv) \equiv u_2 v_1 - u_1 v_2 \ee 
is the symplectic form of $\bu$ and $\bv$.\\

\begin{definition}  The \gls{wh} group is defined to be the set of operators
\be \label{def-WH}
\mathcal{W}_d = \{\tau^s D_{\bu}: s \in \fd{Z}_{\bar{d}}, \bu \in \fd{Z}_{\bar{d}}^2\}.
\ee
\end{definition}
This is a group under matrix multiplication, with the group law given by \cref{eq-grouplaw}. Although $\mathcal{W}_d$ technically has $d^3$ elements if $d$ is odd or $8d^3$ elements if $d$ is even, if we ignore overall phases of its elements this number reduces to $d^2$. For example, the Weyl-Heisenberg (WH) orbit of a given quantum state, i.e. a set of states obtained by applying all the displacement operators to the initial state, can only have at most $d^2$ distinct states because the overall phases of quantum states carry no physical meaning. For this reason, from now on when we consider a WH orbit, we will use  $\fd{Z}_d^2$ instead of $\fd{Z}_{\bar{d}}^2$ to index the displacement operators.\\

\begin{definition}
A SIC set is said to be Weyl-Heisenberg covariant if it is an orbit under the WH group. In other words, it can be written as $\{D_{\bu} \qs : \bu \in \fd{Z}_d^2 \}$ for some $\qs$. Such a state $\qs$ is called a Weyl-Heisenberg SIC fiducial.
\end{definition}

If we assume WH covariance, the problem of finding a SIC set turns into the problem of finding a single normalized fiducial state $\qs$ such that
\be \label{def-SICWH}
\abs{\bra{\psi} D_{\bu} \qs}^2 = \frac{1}{d+1} \hskip 10mm \forall \bu \in \fd{Z}_d^2 \backslash \{\mathbf{0}\}.
\ee
One can see that the Weyl-Heisenberg symmetry helps reduce the number of equations to the order of $d^2$. Although the number of (real) variables is now $2d$, and this system of equations is still over-constrained, it significantly simplifies the problem. Almost all known analytical solutions to the SIC problem, as well as all numerical solutions to date \cite{Renes2004,Scott2010}, are Weyl-Heisenberg covariant, with one exception being the construction in $d=8$ by Hoggar \cite{Hoggar1982}, which is covariant with respect to a 3-fold tensor product of WH groups for $d=2$. It has even been proved that in prime dimensions, if a SIC set with group covariance exists, the group must be the WH group \cite{Zhu2010}. It therefore looks as though Weyl-Heisenberg covariance is an intrinsic symmetry of SIC-POVMs.\\

\subsection{Clifford unitaries and Zauner symmetry} \label{sub:Clifford}
On top of the Weyl-Heisenberg covariance, another order-3 symmetry on SIC-POVMs was observed by Zauner \cite{Zauner1999} and later explicitly worked out by Appleby \cite{Appleby2005}. Before we get there, we first need to define the Clifford group and provide a unitary representation.
\begin{definition}
The Clifford group $\mathcal{C}_d$ is defined to be the normalizer of the Weyl-Heisenberg group $\mathcal{W}_d$ within the unitary group $U(d)$. In other words, a unitary operator $U$ belongs to $\mathcal{C}_d$ if and only if
\be
U \mathcal{W}_d U^{\dagger} = \mathcal{W}_d.
\ee
\end{definition}
\begin{remark}
It should be mentioned that there are several different versions of the Weyl-Heisenberg and Clifford groups. The version that we have just defined is the ordinary one, which can be defined for any dimension $d$. This ordinary version is relevant to the current chapter when we discuss SICs and linear dependencies in WH orbits. In \cref{chap:gunitary}, we will use another version which is applicable only to odd prime power dimensions, where one takes advantage of finite fields to define a Galoisian variant of the WH group. Corresponding to this Galoisian WH group are two Galoisian variants of the Clifford group: the full and the restricted one \cite{Appleby2009}, of which we will only use the latter. To avoid confusion, the definitions of the Galoisian WH and Clifford groups are put in \cref{sec:Clifford-app}. For now, in this chapter, we use the name WH and Clifford groups to refer to the ordinary version.
\end{remark}
Clifford unitaries can be constructed from symplectic matrices \cite{Appleby2005}. We define the symplectic group \SLbar to be the set of all $2 \times 2$ matrices 
\be S = \bmp \alpha & \beta \\ \gamma & \delta \emp 
\hskip 10mm \alpha, \beta,\gamma,\delta \in \fd{Z}_{\bar{d}}\ee
such that $\det(G) = 1$ mod $\bar{d}$. If $\beta$ has a multiplicative inverse $\beta^{-1}$ in \Zdbar, we can associate $S$ with a unitary $U_S$ defined explicitly by
\be \label{def-US}
U_S = \frac{e^{i\phi}}{\sqrt{d}}\sum_{x,y=0}^{d-1} \tau^{\beta^{-1}(\alpha y^2 - 2xy + \delta x^2)} \ket{x}\bra{y}, \ee
where $e^{i\phi}$ is an arbitrary phase. If $\beta$ does not admit an inverse, we can always decompose $S$ into a product of two symplectic matrices \cite{Appleby2005}
\be
S = S_1 S_2 =  \bmp \alpha_1 & \beta_1 \\ \gamma_1 & \delta_1 \emp 
 \bmp \alpha_2 & \beta_2 \\ \gamma_2 & \delta_2 \emp 
\ee
such that $\beta_1$ and $\beta_2$ have inverses, and then define $U_S = U_{S_1} U_{S_2}$. Such unitaries $U_S$ with arbitrary overall phases are called symplectic unitaries. They are particularly constructed to satisfy
\be
U_S D_{\bu} U_S^{\dagger} = D_{S\bu}
\ee
and
\be
U_S U_{S'} \doteq U_{SS'}
\ee
for any $S,S' \in$ \SLbar and $u \in$ \Zdbar, where ``$\doteq$'' means equal up to an overall phase. Clifford unitaries (modulo overall phases) are then products of symplectic unitaries and displacement operators $U_S D_{\bu}$.

Every known WH covariant SIC fiducial vector is invariant (ignoring a global phase) under an order 3 Clifford unitary, and conversely, every canonical order 3 Clifford unitary (corresponding to a symplectic matrix of trace -1) has a SIC fiducial as one of its eigenvectors in all dimensions where an exhaustive search has been done \cite{Scott2010}. There is a particular choice for an order 3 Clifford unitary that can be conveniently written in the same form in all dimensions, which we call the Zauner unitary.

\begin{definition} The Zauner unitary is defined to be the symplectic unitary \gls{UZ} corresponding to the Zauner symplectic matrix \gls{Zauner}
\be \label{def-Zauner}
\mcZ \equiv \bmp 0 & -1 \\ 1 & -1 \emp.\ee
\end{definition}
One can easily verify that $\mcZ^3=\eye$ so that $\mcZ$ is indeed of order 3.\\
\begin{conjecture}[Zauner-Appleby \cite{Zauner1999,Appleby2005}]\label{con-ZaunerAppleby}
In every $d$-dimensional Hilbert space, there exists a Weyl-Heisenberg SIC fiducial which is an eigenvector in the largest eigen subspace of the Zauner unitary $U_\mcZ$.
\end{conjecture}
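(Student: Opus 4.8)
The plan is to turn the conjecture into a finite-dimensional search and then confront its genuinely hard existence content. First I would diagonalize the Zauner unitary. Since $\mcZ^3=\eye$, the unitary $\UZ$ has order $3$ up to a phase, so $\mcH$ splits into (at most) three eigenspaces whose eigenvalues are cube roots of unity. A multiplicity count shows each eigenspace has dimension on the order of $d/3$, with the exact split determined by the residue of $d$ and by the trace of $\UZ$; the ``largest eigen subspace'' named in the statement is simply the one of maximal multiplicity. Restricting the candidate fiducial $\qs$ to that subspace cuts the free real parameters from $2d$ to roughly $2d/3$, while Weyl-Heisenberg covariance guarantees that the full SIC constraint set \cref{def-SICWH} is retained.

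Second, inside this eigenspace I would write the SIC equations $\abs{\bra{\psi} D_{\bu}\qs}^2 = 1/(d+1)$ for all $\bu\neq\mathbf 0$ as a polynomial system in the reduced coordinates. Using the group law \cref{eq-grouplaw} together with the covariance relation $\UZ D_{\bu}\UZ^\dagger = D_{\mcZ\bu}$, the $d^2-1$ conditions organize into orbits of the induced $\mcZ$-action on $\fd{Z}_d^2$, so that equations within one orbit are equivalent and the number of genuinely independent constraints drops by a factor of up to three. The object of the exercise is then to exhibit a common solution of this reduced, $\mcZ$-symmetric system.

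Third, and this is where the real difficulty lies, one must prove that the reduced system actually possesses a solution in \emph{every} dimension. This is precisely the open SIC existence problem, and no dimension-independent construction is known; the symmetry reduction above explains why the apparently over-constrained system is less rigid than a naive parameter count suggests, but it does not by itself produce a fiducial. The most promising route is arithmetic: the exact fiducials computed to date have entries generating specific abelian extensions of the real quadratic field $\fd{Q}\!\left(\sqrt{(d-3)(d+1)}\right)$, and the overlap data $\bra{\psi}D_{\bu}\qs$ appear to be governed by Stark-type units in ray class fields of that field. A proof along these lines would have to invoke (currently conjectural) input from class field theory and the Stark conjectures to manufacture the algebraic numbers realizing the overlaps, and then verify that they assemble into a genuine normalized vector fixed by $\UZ$.

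The main obstacle is exactly this last step. The Zauner diagonalization and the orbit-counting reduction are the tractable, essentially bookkeeping, part of the argument; they account for the known structure of solutions and for why the eigenvector constraint is compatible with the SIC equations. Converting the overwhelming numerical evidence into a uniform existence proof, however, is tied to deep and unresolved questions in algebraic number theory, which is why the statement remains a conjecture rather than a theorem.
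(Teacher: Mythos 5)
There is no proof to compare against: the statement you were asked to prove is stated in the paper as a \emph{conjecture} (the Zauner--Appleby conjecture), and the paper explicitly says that a general analytical construction or existence proof for SICs in all dimensions is still missing. Your proposal correctly recognizes this, and indeed it is not a proof --- your own third step concedes that the existence of a solution to the reduced system is exactly the open SIC existence problem. So, judged strictly as a proof, there is a genuine gap, and it is the entire mathematical content of the statement: nothing in your argument produces a fiducial in even a single new dimension, let alone all of them. No amount of symmetry bookkeeping closes this, because the reduction from $2d$ real parameters to roughly $2d/3$, and the collapse of the $d^2-1$ constraints \cref{def-SICWH} into $\mcZ$-orbits via $\UZ D_{\bu}\UZ^\dagger = D_{\mcZ\bu}$, still leaves an over-constrained polynomial system whose solvability is unproven.

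That said, the parts of your outline that can be checked are accurate and consistent with the paper. The eigenspace multiplicities of $\UZ$ you invoke are exactly those in \cref{tab-UZeigenspaces}, with the ``largest eigen subspace'' being $\mcH_1$ (of dimension $\lfloor d/3\rfloor + 1$) once the phase of $\UZ$ is fixed as in the paper; the orbit argument you use to identify equivalent constraints is the same covariance computation the paper uses elsewhere (e.g.\ in the proof of \cref{thm-smallSICs}, where $\bra{\psi}D_{\mathbf{p}}\ket{\psi} = \bra{\psi}D_{\mcZ\mathbf{p}}\ket{\psi}$ for Zauner eigenvectors); and the arithmetic route you sketch --- fiducial overlaps governed by units in abelian extensions of $\fd{Q}\bigl(\sqrt{(d-3)(d+1)}\bigr)$, conditional on Stark-type conjectures --- is the direction taken by the number-theoretic literature the paper cites in connection with Galois theory and SICs. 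The honest conclusion, which you reach, is that the statement must remain a conjecture; a referee should only insist that you present your text as a reduction and research program, not as a proof attempt.
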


We want to note that in addition to being an extra symmetry for SIC-POVMs on top of the WH covariance, the Zauner symmetry also plays a pivotal role in the study of linear dependencies in Weyl-Heisenberg orbits in \cref{sec:linde}.

\subsection{Analogues to orthonormal bases}\label{sub:orthogonal}

While the discussion is still on SIC-POVMs, there is one nice property of them that we would like to introduce, namely they are as close as possible to being an orthonormal basis on the cone of non-negative operators \cite{Appleby2014S}.

The set operators acting on $d$-dimensional vectors in $\fd{C}^d$ can be considered as a $d^2$-dimensional Hilbert space with the Hilbert Schmidt inner product given by
\be \label{eq-inner-SIC}
\langle A,B \rangle = \Tr(A^\dagger B).
\ee
Let $\{B_i\}_{i=1}^{d^2}$ be an orthogonal basis for this space of operators. One might wonder if it is possible to put some restrictions on $B_i$. For example, can they all be Hermitian? Or unitary? The answers for both are yes. The Hermitian operators themselves form a $d^2$-dimensional real vector space with the same inner product as defined in \cref{eq-inner-SIC}, so one can have a set of $d^2$ Hermitian operators $H_i$ that forms an orthogonal basis for the space of Hermitian operators. One can show that this basis spans the whole space of operators by noticing that any operators (not necessarily Hermitian) can be written as
\be A = H_+ -iH_-,\ee
where 
\be H_+ = (A + A^\dagger)/2, \hskip 15mm H_- = (iA-iA^\dagger)/2 \ee
 are clearly Hermitian. As for an orthogonal unitary basis, one example is the set of Weyl-Heisenberg displacements operators $\{D_\bu: \bu \in \fd{Z}_d^2\}$ defined in \cref{def-Du}. These displacement operators are orthogonal to each other because for any $\bu \ne \bv$ 
\be \langle D_\bu, D_\bv \rangle = \Tr( D_\bu^\dagger D_\bv) =  \tau^{-\Omega(\bu,\bv)}\Tr(D_{\bv-\bu}) = 0 .\ee

However, imposing positive semi-definiteness on an orthogonal basis for the space of operators is impossible, as we will show. Let $\{A_i\}$ be a set of $d^2$ positive semi-definite operators and assume that $A_i$ are normalized, meaning that $\Tr(A_i^2) = 1$. We would like to quantify the extent to which this set is orthogonal. A natural class of ``orthogonality measures'' is defined by
\be \label{def-Kt}
K_t \equiv \sum_{i \ne j} \abs{\langle A_i, A_j \rangle}^t = \sum_{i\ne j} \left(\Tr (A_i A_j)\right)^t
\ee
for any real number $t \ge 1$. This sum consists of $d^4-d^2$ terms and it vanishes if and only if $A_i$ are orthogonal to each other. However, as we will see from the following theorem, this can never happen, since $K_t$ is bounded below by a positive number.

\begin{theorem}\label{thm-Kt}
Let $\{A_i\}_{i=1}^{d^2}$ be a set of $d^2$ normalized positive semi-definite operators on a Hilbert space of dimension $d$, and let $K_t$ be defined as in \cref{def-Kt}, then $K_t$ is lower bounded by
\be K_t \ge \frac{d^2(d-1)}{(d+1)^{t-1}} \ee
When $t=1$, the bound is saturated if and only if $A_i$ are rank-1 projectors and $\sum A_i = d\eye$. When $t>1$, the bound is saturated if and only if $\{A_i/d\}$ forms a SIC-POVM.
\end{theorem}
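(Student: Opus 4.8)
The plan is to reduce the general $t\ge 1$ statement to the single inequality at $t=1$, which in turn follows from two elementary spectral estimates, and then to handle $t>1$ by a single application of convexity. First I would observe that positivity of the $A_i$ forces each overlap $\Tr(A_iA_j)=\Tr(A_i^{1/2}A_jA_i^{1/2})$ to be non-negative, so the absolute values in the definition of $K_t$ are redundant and every summand is a genuine non-negative real number.

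For the base case $t=1$, I would introduce the frame operator $G=\sum_i A_i$, which is positive semi-definite, and expand $\Tr(G^2)=\sum_{i,j}\Tr(A_iA_j)=d^2+K_1$, where the diagonal contributes $d^2$ via the normalization $\Tr(A_i^2)=1$. Two spectral facts then bound $\Tr(G^2)$ from below. On the one hand, for each positive operator the eigenvalue inequality $(\Tr A_i)^2\ge \Tr(A_i^2)=1$ gives $\Tr A_i\ge 1$, hence $\Tr G\ge d^2$. On the other hand, Cauchy--Schwarz applied to the eigenvalues of $G$ gives $\Tr(G^2)\ge (\Tr G)^2/d\ge d^3$. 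Combining, $K_1=\Tr(G^2)-d^2\ge d^3-d^2=d^2(d-1)$, which is exactly the claimed bound at $t=1$. For the saturation condition I would track the two equalities separately: $\Tr A_i=1$ together with $\Tr(A_i^2)=1$ forces each $A_i$ to have a single nonzero eigenvalue equal to $1$, i.e.\ to be a rank-$1$ projector, while equality in Cauchy--Schwarz forces $G\propto\eye$, that is $\sum_i A_i=d\eye$.

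For $t>1$ I would feed this into the power-mean (Jensen) inequality for the convex map $x\mapsto x^t$. Writing $N=d^4-d^2=d^2(d-1)(d+1)$ for the number of off-diagonal pairs, convexity gives $K_t=\sum_{i\ne j}\bigl(\Tr(A_iA_j)\bigr)^t\ge N^{1-t}\bigl(\sum_{i\ne j}\Tr(A_iA_j)\bigr)^t=N^{1-t}K_1^t$. Since $K_1^t$ is increasing in $K_1\ge 0$, substituting $K_1\ge d^2(d-1)$ and simplifying $N^{1-t}\bigl(d^2(d-1)\bigr)^t=d^2(d-1)(d+1)^{1-t}$ yields $K_t\ge d^2(d-1)/(d+1)^{t-1}$. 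Saturation now requires both the $t=1$ equalities and, from strict convexity of $x\mapsto x^t$ when $t>1$, that all overlaps $\Tr(A_iA_j)$ with $i\ne j$ share a common value $\alpha$; a short trace computation using $\Tr(G^2)=d^3=d^2+N\alpha$ forces $\alpha=1/(d+1)$, which is precisely the statement that $\{A_i/d\}$ is a SIC-POVM.

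I expect the main subtlety to lie not in any single estimate but in the saturation bookkeeping: I must check that the two inequalities feeding the $t=1$ bound can be saturated simultaneously and are compatible with the equality case of Jensen, and then confirm that ``rank-$1$ projectors summing to $d\eye$ with constant pairwise overlap'' actually reproduces the SIC overlap $1/(d+1)$ rather than merely being a necessary consequence. The analytic core is otherwise transparent, since the entire $t$-dependence is carried by the one convexity step once the $t=1$ case is in hand.
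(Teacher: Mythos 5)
Your proposal is correct and follows essentially the same route as the paper's proof: the frame operator $G=\sum_i A_i$, the trace estimate $\Tr A_i\ge 1$, Cauchy--Schwarz on the eigenvalues of $G$ to get $\Tr(G^2)\ge(\Tr G)^2/d\ge d^3$, and a single Jensen step for $t>1$ with the same saturation bookkeeping. The only differences are cosmetic: you derive $\Tr A_i\ge 1$ from $(\Tr A_i)^2\ge\Tr(A_i^2)$ instead of the paper's eigenvalue bound $\lambda_k\le 1$, and you explicitly compute the common overlap $\alpha=1/(d+1)$, which the paper leaves to its earlier discussion of the SIC definition.
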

\begin{proof}
We will first prove the inequality for the $t=1$ case, by making use of a version of the Cauchy-Schwarz inequality (also known as Bouniakowsky inequality \cite{Bouniakowsky1859}):
\be \label{ineq-CS}
\left(\sum_{i=1}^N x_i^2\right)\left(\sum_{i=1}^N y_i^2\right) \ge \left(\sum_{i=1}^N x_i y_i \right)^2 \ee
for any $2N$ real numbers $x_i$ and $y_i$. Particularly, setting $y_i = 1$ leads to
\be \label{ineq-CS2}
\left(\sum_{i=1}^N x_i^2\right) \ge \frac{1}{N}\left(\sum_{i=1}^N x_i \right)^2, \ee
with equality if and only if $x_1 = x_2 =...=x_N$. 

Since $\Tr(A_i^2) = 1$ by the normalization assumption, the (real and positive) eigenvalues of $A_i$ are no larger than 1, and therefore
\be \Tr(A_i) \ge \Tr(A_i^2) = 1, \ee
with equality if and only if exactly one eigenvalue of $A_i$ is 1 and the rest are 0, meaning that $A_i$ are rank-1 projectors.
Let $G$ be a positive semi-definite operator defined by
\be
G = \sum_{i=1}^{d^2} A_i
\ee
It follows that $\Tr(G) \ge d^2$. Applying the inequality \cref{ineq-CS2} to the eigenvalues of $G$ we get
\be \Tr(G^2) \ge \frac{1}{d}(\Tr G)^2 \ge d^3,
\ee
which implies
\be \label{ineq-K1}
K_1 \ge d^3-d^2.
\ee
Equality is obtained if and only if $A_i$ are all rank-1 projectors, and $G=d\eye$.\\

For $t>1$, if we define a function $f(x) = x^t$, then this is a strictly convex function. We can rewrite $K_t$ as
\be
K_t = \sum_{i \ne j} f\left(\Tr(A_i A_j )\right).
\ee
Applying a particular instance of Jensen inequality \cite{Jensen1906}, namely
\be
\sum_{i=1}^N f(x_i) \ge N f\left(\frac{\sum x_i}{N}\right)
\ee
for any convex function $f(x)$ and any $x_i$ in the domain, with equality (in the case of strict convexity) if and only if $x_i$ are all constant, we obtain
\be \begin{split}
K_t &\ge (d^4-d^2) f\left(\frac{\sum_{i \ne j} \Tr(A_i A_j)}{d^4-d^2}  \right) \\ 
& \ge (d^4-d^2) f\left(\frac{d^3-d^2}{d^4-d^2}  \right) \\
& = \frac{d^2(d-1)}{(d+1)^{t-1}}.
\end{split}\ee
For this bound to be saturated, equality must take place in both the Jensen inequality and \cref{ineq-K1}. This means the $d^2$ operators $A_i$ must be all rank-1 projectors satisfying $\sum_i A_i = d\eye$ and $\Tr(A_i A_j)$ are constant for all $i\ne j$. This is precisely the definition of a SIC-POVM.
\end{proof}

\begin{remark}
What we have shown is that on the cone of non-negative operators, there does not exist an orthonormal basis. Furthermore, using $K_t$ as a natural class of ``orthogonality measures'', we have shown that SIC-POVMs stand out as the ``most orthogonal'' bases on this cone. We want to note that one of our orthogonality measures, namely $K_2$, is closely related to the frame potential 
\be \Phi = \sum_{i,j} \abs{\braket{\psi_i}{\psi_j}}^4 \ee
introduced by Renes \etal \cite{Renes2004} via the simple relation
\be K_2 = \Phi -d^2. \ee
In \cite{Renes2004}, the minimization of the frame potential was used to aid the numerical search for SICs, and the bound was proved in the context of frames and spherical $t$-designs. Here, our proof of the bound relies only on a few well-known elementary inequalities.
\end{remark}

\section{Linear dependencies in Weyl-Heisenberg orbits}\label{sec:linde}
The study of linear dependencies in WH orbits \cite{Dang2013} stems from an observation that among 9 vectors in any known 3-dimensional SIC set, one can find some sets of 3 vectors that are linearly dependent. This led to our investigation in higher dimensions, where the question we asked was: among $d^2$ SIC vectors in dimension $d$, could one find a set of $d$ of them that are linearly dependent? Going through the numerical SICs provided in \cite{Scott2010}, we found a striking pattern: it seems as though whenever $d$ is divisible by 3, the answer is yes. 

Since these are Weyl-Heisenberg covariant SICs, the SIC vectors can be expressed as $D_\bp \qs$, where $D_\bp$ are the WH displacement operators indexed by $\bp = (p_1,p_2) \in \fd{Z}_d^2$, and $\qs$ is a SIC fiducial vector. If $\{\bp_i\}_{i=1}^d$ is a set of $d$ indices $\bp_i$ such that the $d$ vectors $D_{\bp_i} \qs$ are linearly dependent, we call it a ``good'' p-set. It follows that for any good p-set, there exists a set of coefficients $\lambda_i$, which are not simultaneously zero, such that
\be \sum_i \lambda_i D_{\bp_i} \qs = 0 \hskip 10mm \text{or} \hskip 10mm L\qs = 0,\ee
where $L$ is defined to be $L = \sum_i \lambda_i D_{\bp_i}$. In dimensions $d$ that are divisible by 3, not only have we found many good p-sets, but we have also noticed numerically that in many cases their corresponding $L$ matrices are of rank $d-1$. This means the null space of $L$ is 1-dimensional, i.e. the matrix equation $L x = 0$ has a unique solution (up to a phase), which is the SIC fiducial $\qs$. If we know what the $L$ matrices are, we could simply solve this matrix equation to obtain the SIC fiducial!

Finding the $L$ matrices requires us to identify good p-sets $\{\bp_i\}$ as well as the coefficients $\lambda_i$. As it will be shown later in this section, we succeeded in the first task. However, finding $\lambda_i$ is non-trivial, despite a 3-fold symmetry of theirs that we observe. In fact, it turns out that this approach to the SIC problem cannot work, because the linear dependence property is not unique to SIC fiducials, but is generic to a class of eigenvectors of certain Clifford unitaries, one of which is the Zauner unitary $U_\mcZ$ defined in \cref{def-Zauner}.

In this section, we first examine a special SIC set in dimension $d=3$, which has a connection to elliptic curves via Hesse configuration \cite{Hughston2007}. We then provide an analytical proof for linear dependencies in the WH orbits of the eigenvectors of the Zauner unitary. We give a detailed report on our numerical study, in which the number of observed linear dependencies is often higher than what can be accounted for from the analytical prediction. And lastly, we show a robust construction of ``small SICs'' in dimension $d=2$ and 3 that resulted from this study of linear dependencies.\\

\begin{remark}
Before going into the details of this study, we want to note that the opposite problem, i.e. to find fiducial vectors whose WH orbits contain no linear dependencies, is useful for classical signal processing and it has been solved \cite{Pfander2013, Lawrence2005, Malikiosis2015}.
\end{remark}

\subsection{Dimension $d=3$ and Hesse configuration}\label{sub:d3}
In dimension $d=3$, there is a continuous family of SICs that can be parameterized by a single parameter \cite{Appleby2005}. All other known SICs in dimension 3 have been shown to be unitarily equivalent to this family \cite{Zhu2010}. Explicitly, the 9 vectors of a SIC in this family can be written in the following form (ignoring normalization factors):
\be \label{def-SIC3}
\begin{split}
\bmp 0 \\ 1 \\ -e^{i\theta} \emp,
\bmp 0 \\ 1 \\ -e^{i\theta}\eta \emp,
\bmp 0 \\ 1 \\ -e^{i\theta}\eta^2 \emp,\\
\bmp -e^{i\theta} \\0 \\1 \emp,
\bmp -e^{i\theta}\eta \\0 \\1 \emp,
\bmp -e^{i\theta}\eta^2 \\0 \\1 \emp,\\
\bmp 1\\ -e^{i\theta} \\0 \emp,
\bmp 1\\ -e^{i\theta}\eta \\0 \emp,
\bmp 1\\ -e^{i\theta}\eta^2 \\0 \emp,
\end{split}
\ee 
where $\eta = e^{2\pi i/3}$ is a third root of unity (we reserve $\omega$ for the $d$-th root of unity in general) and the parameter $\theta$ is in the interval $[0,\pi/3]$. One observes that the 3 vectors on each line in \cref{def-SIC3} span a 2-dimensional subspace. Hence, any such SIC contains 3 sets of 3 linearly dependent vectors. However, for certain values of $\theta$, there are additional linear dependencies. One can find these values by putting any 3 vectors from 3 different lines in \cref{def-SIC3} together as a $3\times 3$ matrix and set the determinant of this matrix to zero to obtain
\be
e^{3i\theta} = \eta^k \hskip 15mm k=0,1,2.
\ee
Given the range of $\theta$ in consideration $[0,\pi/6]$, there are two choices $\theta = 0$ or $2\pi/9$, giving rise to two special SICs that both contain 12 sets of 3 linearly dependent vectors. The SIC corresponding to $\theta=0$ is ``extra special'' because its fiducial vector
\be
\qsp = \bmp 0 \\1 \\ -1\emp 
\ee
is an eigenvector of symplectic unitaries $U_S$ for all $S \in \SL(2,\Fd)$. One can see this by noticing that the density operator can be written as
\be
\ketbra{\phi} = \eye - U_P,
\ee
where the parity operator $U_P$ is the symplectic unitary corresponding to an SL element
\be
P = \bmp -1 & 0 \\ 0 & -1 \emp,
\ee
which in turn is the only element (besides the identity element) that commutes with all other elements in SL. If we label the 9 SIC vectors in \cref{def-SIC3} by $\ket{00},\ket{01},...,\ket{22}$, then the linearly dependent relations for the $\theta=0$ SIC are as follows:
\be \label{eq-linde1-SIC3}\begin{split}
\ket{00} + \ket{10} + \ket{20} &= 0\\
\ket{01} + \ket{11} + \ket{21} &= 0\\
\ket{02} + \ket{12} + \ket{22} &= 0
\end{split} \ee
\be \label{eq-linde2-SIC3}\begin{split}
\ket{00} + \eta\ket{01} + \eta^2\ket{02} &= 0\\
\ket{10} + \eta\ket{11} + \eta^2\ket{12} &= 0\\
\ket{20} + \eta\ket{21} + \eta^2\ket{22} &= 0
\end{split} \ee
\be \label{eq-linde3-SIC3}\begin{split}
\phantom{\eta}\ket{00} + \phantom{\eta}\ket{11} + \eta\ket{22} &= 0\\
\phantom{\eta}\ket{01} + \eta\ket{12} + \phantom{\eta}\ket{20} &= 0\\
\eta\ket{02} + \phantom{\eta}\ket{10} + \phantom{\eta}\ket{21} &= 0
\end{split} \ee
\be \label{eq-linde4-SIC3}\begin{split}
\phantom{\eta^2}\ket{00} + \phantom{\eta^2}\ket{12} + \eta^2\ket{21} &= 0\\
\eta^2\ket{01} + \phantom{\eta^2}\ket{10} + \phantom{\eta^2}\ket{22} &= 0\\
\phantom{\eta^2}\ket{02} + \eta^2\ket{11} + \phantom{\eta^2}\ket{20} &= 0.
\end{split} \ee

We note that in \cref{eq-linde1-SIC3,eq-linde2-SIC3,eq-linde2-SIC3,eq-linde3-SIC3,eq-linde4-SIC3}, each linearly dependent relation involves 3 SIC vectors, and each SIC vector appears in 4 relations. If we represent the SIC vectors by 9 points, and draw a ``line'' connecting 3 points if their SIC vectors are linearly dependent, then we obtain a set of 9 points and 12 lines, each line containing 3 points, and each point is contained in 4 lines, as illustrated in \cref{fig-Hesse}. It was pointed out by Lane Hughston \cite{Hughston2007} that this was precisely the Hesse configuration \cite{Hesse1844}, often denoted by the configuration $(9_4,12_3)$ in the language of configurations in geometry.
\begin{figure}[h]
\centering
\includegraphics[scale = 0.3]{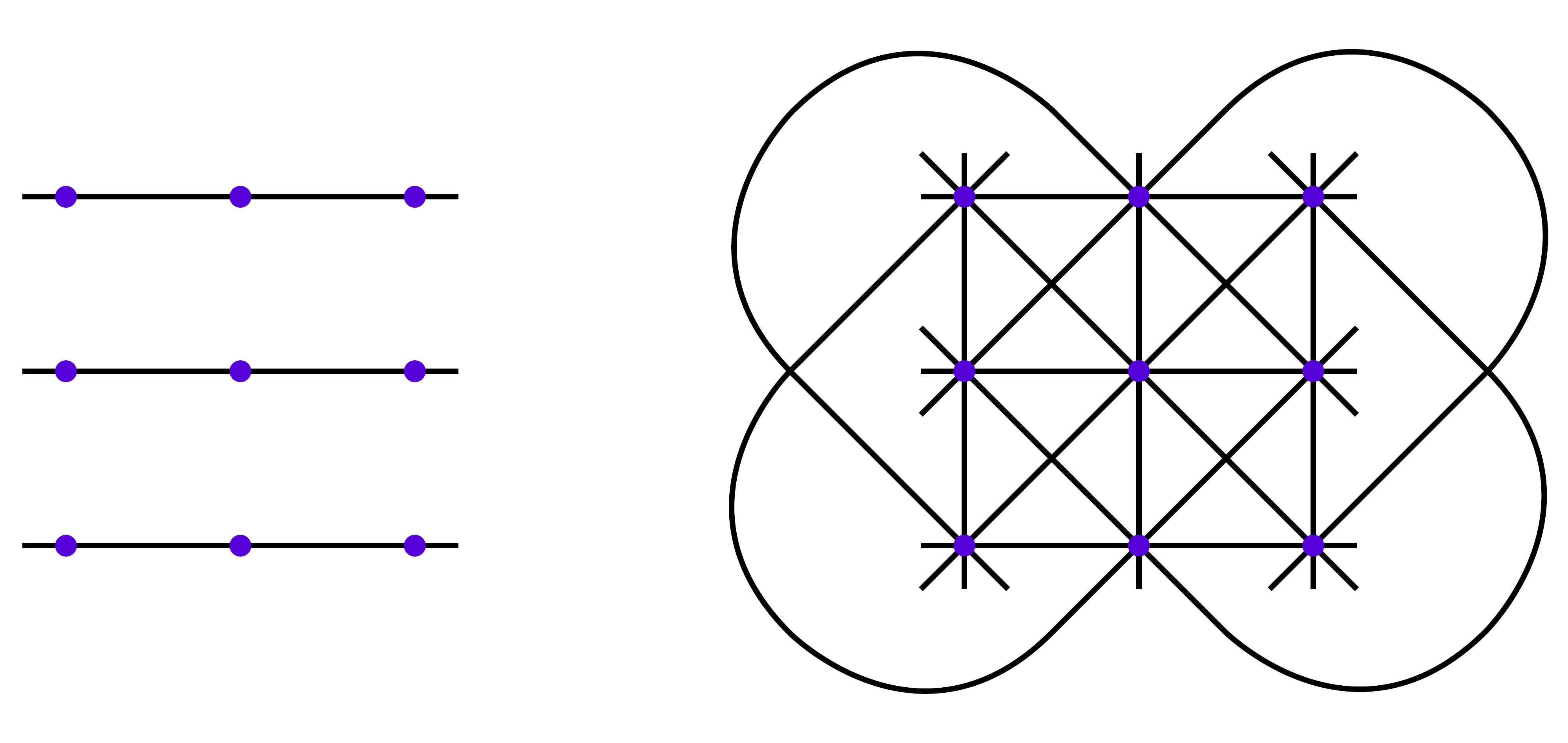}
\vskip 2mm
\parbox{12cm}{\caption[Linear dependency structures of SICs in $d=3$]{The linear dependency structures of a generic 3-dimensional SIC (on the left) and of the special SICs with $\theta = 0$ or $2\pi/9$ (on the right). Each point represents a SIC vector, and points are connected by a line if the corresponding SIC vectors are linearly dependent. The structure to the right, with 9 points and 12 lines, is known as the Hesse configuration.}
\label{fig-Hesse}}
\end{figure}

The Hesse configuration is not realizable in the Euclidean plane. However, it can be realized in the complex projective plane, which was discovered by Hesse in the study of elliptic curves. Particularly, let us consider the following family of cubic curves described by the polynomial equations
\be \label{eq-cubic}
P(x_1,x_2,x_3) = x_1^3 + x_2^3 + x_3^3 + \lambda x_1 x_2 x_3 = 0 \hskip 15mm \lambda \in \fd{C}.
\ee
The $3\times 3$ Hessian matrix $H$ consists of second derivatives of $P$, with its entries given by
\be H_{ij} = \frac{\partial^2 P}{\partial x_i \partial x_j}.
\ee
Points on the curve where the determinant of the Hessian matrix vanishes are called inflection points. This determinant is also a cubic polynomial in the family, and by B{\'e}zout's theorem \cite{Bezout1779}, these two cubic curves intersect at 9 points. These 9 inflection points are the same for all values of $\lambda$, and they coincide exactly with the 9 vectors of the $\theta = 0$ SIC. Furthermore, there are 4 special (singular) curves in the family corresponding to the cases when $\lambda = \infty$ and $\lambda^3 = -27$:
\be \begin{split}
P_0&: x_1 x_2 x_3 = 0\\
P_1&: (x_1 + x_2 + x_3)(x_1 + \eta x_2 + \eta^2 x_3)(x_1 + \eta^2 x_2 + \eta x_3) = 0\\
P_2&: (x_1 + \eta^2 x_2 + \eta^2 x_3)(x_1 + \eta x_2 + x_3)(x_1 + x_2 + \eta x_3) = 0\\
P_3&: (x_1 + \eta x_2 + \eta x_3)(x_1 + \eta^2 x_2 + x_3)(x_1 + x_2 + \eta^2 x_3) = 0\\
\end{split}\ee
One can see that each of these curves $P_i$ degenerates into 3 projective lines, giving a total of 12 lines. Each of the lines passes through 3 inflection points, and each inflection point belongs to 4 lines. This results in the Hesse configuration. More details on its connection to SICs can be found in \cite{Bengtsson2010}.

\begin{remark}
For each of the 12 linear dependencies in the $\theta=0$ SIC, the corresponding set of 3 linearly dependent vectors spans a 2-dimensional plane in the Hilbert space, whose normal vector is unique up to a scalar multiplication. These 12 normal vectors form a complete set of MUBs in dimension $d=3$ and this observation has been used for a Kochen-Specker inequality \cite{Bengtsson2012}.
\end{remark}

\subsection{Linear dependencies from Zauner eigenvectors}\label{sub:analytical}

As we mentioned earlier, the linear dependence property is not a unique feature of SICs. We will now show that linear dependencies can arise in all dimensions in Weyl-Heisenberg orbits of vectors that lie in certain eigenspaces of the Zauner unitary $U_\mcZ$. Known SIC fiducials just happen to be among those vectors.

Recall that by the definition in \cref{def-Zauner}, the Zauner symplectic matrix $\mcZ$ is of order 3. Hence, its symplectic unitary $U_\mcZ$ is also of order 3, for a suitable choice of the phase $e^{i\phi}$ in \cref{def-US}. This means that its eigenvalues must be 1, $\eta$, or $\eta^2$, where $\eta=e^{2\pi i/3}$ is a third root of unity, and $U_\mcZ$ has 3 eigenspaces corresponding to these 3 eigenvalues. If the phase in \cref{def-US} is particularly chosen to be
\be e^{i\phi} = e^{i\pi(d-1)/12}, \ee
then one finds that \cite{Zauner1999} the eigenspaces $\mcH_1$, $\mcH_{\eta}$, and $\mcH_{\eta^2}$ corresponding to the eigenvalues $1$, $\eta$, and $\eta^2$ have dimensions as shown in \cref{tab-UZeigenspaces}. In all dimensions $d$, SIC fiducials are found in $\mcH_1$, which will be specifically referred to as the Zauner subspace. We want to note that when $d$ is equal to 8 mod 9, additional SIC fiducials are found in the other two eigenspaces as well \cite{Scott2010}.
\begin{table}[h]
\centering
\begin{tabular}{M{1.85cm} M{2.5cm} M{2.5cm} M{2.5cm}}
 & \cellcolor{gray!25}$d=3k$ & \cellcolor{gray!25}$d=3k+1$ & \cellcolor{gray!25}$d=3k+2$\\ 
\cellcolor{gray!25}dim($\mcH_1$) & $k+1$ & $k+1$ & $k+1$ \\ 
\cellcolor{gray!25}dim($\mcH_\eta$) & $k$ & $k$ & $k+1$  \\ 
\cellcolor{gray!25}dim($\mcH_{\eta^2}$) & $k-1$ & $k$ & $k$  \\ 
\end{tabular}
\vskip 4mm
\parbox{12cm}{\caption[The dimensions of the eigenspaces of the Zauner unitary $U_\mcZ$]{Dimensions of the three eigenspaces $\mcH_1$, $\mcH_\eta$, and $\mcH_{\eta^2}$ of the Zauner unitary $U_\mcZ$ for different dimensions $d$.}
\label{tab-UZeigenspaces}}
\end{table}

Since Zauner symplectic $\mcZ$ is of order 3, it generally moves points $\bp = (p_1,p_2)$ on the discrete phase space $\Fd^2$ in orbits $\{\bp, \mcZ \bp, \mcZ^2 \bp\}$ of size 3, which will be referred to as triplets. The exception is when $\bp$ is a fixed point of $\mcZ$, i.e. $\mcZ \bp = \bp$, in which case we will call it a singlet. One can easily solve for the fixed points of $\mcZ$ and find that there are 3 singlets when $d$ is divisible by 3, and only 1 trivial singlet (the zero vector) otherwise. The singlets are given in \cref{tab-singlets}. In the Hilbert space, we will use the same terminology to call $D_\bp \qs$ a singlet if $\bp$ is a singlet, and to call $\{D_\bp\qs, D_{\mcZ\bp}\qs, D_{\mcZ^2 \bp}\qs\}$ a triplet otherwise.
\begin{table}[h]
\centering
\begin{tabular}{M{4.5cm} M{2.5cm}}
\cellcolor{gray!25}$d=3k$ & \cellcolor{gray!25}$d\ne3k$ \\ 
$\bmp 0\\0 \emp, \bmp k\\2k \emp, \bmp 2k\\k \emp$ & $\bmp 0\\0 \emp$ \vphantom{$\bmp 0\\0\\0 \emp$}
\end{tabular}
\vskip 2mm
\parbox{12cm}{\caption[Fixed points of the Zauner symplectic $\mcZ$]{Fixed points of the Zauner symplectic $\mcZ$.}
\label{tab-singlets}}
\end{table}
\begin{theorem}\label{thm-linde}
Let $\qsp$ be an eigenvector of the Zauner unitary $U_\mcZ$ with eigenvalue $\lambda$, and let $V$ be a set of $d$ vectors in the Weyl-Heisenberg orbit of $\qsp$, i.e. $\{D_\bp \qsp:\bp \in \Fd^2\}$. Then the vectors in $V$ are linearly dependent if:
\begin{enumerate}
\item $V$ contains $k$ triplets, or $k-1$ triplets and 3 singlets for the case $d=3k$,
\item $V$ contains $k$ triplets and 1 singlet and $\qsp \in \mcH_\eta \cup \mcH_{\eta^2}$ for the case $d=3k+1$,
\item $V$ contains $k$ triplets and 1 singlet and $\qsp \in \mcH_{\eta^2}$ for the case $d=3k+2$.
\end{enumerate}
\end{theorem}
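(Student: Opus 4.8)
The plan is to exploit the one algebraic fact linking the Zauner unitary to the orbit, the intertwining relation $U_\mcZ D_\bp U_\mcZ^\dagger = D_{\mcZ\bp}$ established for symplectic unitaries. Writing $U_\mcZ\qsp = \lambda\qsp$ with $\lambda\in\{1,\eta,\eta^2\}$, this gives
\[
U_\mcZ D_\bp \qsp = \lambda\, D_{\mcZ\bp}\qsp,
\]
so $U_\mcZ$ permutes the orbit vectors up to the scalar $\lambda$, carrying the member indexed by $\bp$ to a multiple of the member indexed by $\mcZ\bp$. First I would note that each collection named in the theorem is a union of complete $\mcZ$-orbits on phase space — triplets $\{\bp,\mcZ\bp,\mcZ^2\bp\}$ together with the singlet(s) of \cref{tab-singlets}; denote this union $V_0$, so that $V_0=V$ in cases 1 and 2 while $V_0\subsetneq V$ with $|V_0|=d-1$ in case 3. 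Since the index set of $V_0$ is $\mcZ$-invariant, the subspace $W=\mathrm{span}(V_0)$ is invariant under $U_\mcZ$.

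The heart of the proof is a multiplicity count on $W$. Suppose for contradiction that $V_0$ is linearly independent, so it is a basis of $W$ and $\dim W=|V_0|$. In this basis $U_\mcZ|_W$ is block diagonal: each triplet gives the $3\times 3$ block $\lambda$ times a cyclic permutation, with eigenvalues $\lambda,\lambda\eta,\lambda\eta^2=\{1,\eta,\eta^2\}$ (multiplication by the cube root $\lambda$ merely permutes them), and each singlet gives the $1\times1$ block $\lambda$. Hence every triplet supplies exactly one eigenvector to each of $\mcH_1,\mcH_\eta,\mcH_{\eta^2}$ irrespective of $\lambda$, while every singlet supplies one eigenvector to $\mcH_\lambda$. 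Since $U_\mcZ$ is unitary and $W$ is invariant, the multiplicity of any eigenvalue $\mu$ in $U_\mcZ|_W$ equals $\dim(\mcH_\mu\cap W)\le\dim\mcH_\mu$, the global dimension recorded in \cref{tab-UZeigenspaces}.

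It then remains to check case by case that the block count overshoots one of these bounds. For $d=3k$ with $k$ triplets the count is $(k,k,k)$, violating $\dim\mcH_{\eta^2}=k-1$ for every $\lambda$; with $k-1$ triplets and three singlets the three extra units land in $\mcH_\lambda$ and push its count to $k+2$, above $\dim\mcH_\lambda\le k+1$ for every $\lambda$. For $d=3k+1$ with $k$ triplets and the trivial singlet the count $(k,k,k)$ gains one unit in $\mcH_\lambda$; this stays within the bounds $(k+1,k,k)$ when $\lambda=1$, but exceeds $\dim\mcH_\eta=\dim\mcH_{\eta^2}=k$ when $\lambda\in\{\eta,\eta^2\}$, exactly the hypothesis $\qsp\in\mcH_\eta\cup\mcH_{\eta^2}$. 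For $d=3k+2$ the set $V_0$ has $3k+1=d-1$ vectors and the $\eta^2$-count is $k+1$, which already exceeds $\dim\mcH_{\eta^2}=k$ precisely when $\lambda=\eta^2$; the resulting dependence of $V_0$ forces dependence of the larger set $V$. In each instance the contradiction proves the asserted linear dependence.

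The main obstacle is conceptual rather than computational: recognizing that $\mathrm{span}(V_0)$ is $U_\mcZ$-invariant, that a single triplet contributes one eigenvector to every eigenspace regardless of $\lambda$, and that the operative constraint is the inequality $\dim(\mcH_\mu\cap W)\le\dim\mcH_\mu$ rather than an equality. Once that framework is fixed, the three residue classes and three eigenvalues reduce to bookkeeping against \cref{tab-UZeigenspaces}. The only genuinely delicate point is $d=3k+2$, where the limited supply of singlets means no $\mcZ$-invariant subset of the orbit can have exactly $d$ elements, so the dependence has to be produced on a sub-collection of size $d-1$ and then inherited by $V$.
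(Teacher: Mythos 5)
Your proof is correct and is essentially the paper's own argument in contrapositive form: the $3\times 3$ blocks ($\lambda$ times a cyclic permutation) whose spectra you count are exactly the paper's $r$-, $s$-, and $t$-type vectors obtained by Fourier-transforming each triplet, and your case-by-case multiplicity comparison against \cref{tab-UZeigenspaces} is the same eigenspace-overcrowding count the paper performs. The only differences are cosmetic: the paper argues directly (exhibiting $\dim\mcH_\mu+1$ explicit vectors inside $\mcH_\mu$) while you assume independence of $V_0$ and derive a contradiction, and you spell out the inheritance step for $d=3k+2$ (dependence of the $(d-1)$-element sub-collection forcing dependence of $V$), which the paper leaves implicit.
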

\begin{proof}
We start with the case $d=3k$. If $\bp$ is a singlet, then $\qsp$ and $D_\bp \qsp$ lie in the same eigenspace of $U_\mcZ$ because
\be
U_{\mcZ} D_\bp \qsp = U_{\mcZ} D_\bp U_{\mcZ}^\dagger U_\mcZ \qsp = \lambda D_{\mcZ \bp} \qsp = \lambda D_\bp \qsp.
\ee
If $\bp$ is in a triplet, we construct 3 new vectors from linear combinations of vectors in the triplet  $\{D_\bp\qsp, D_{\mcZ\bp}\qsp, D_{\mcZ^2 \bp}\qsp\}$ as follows:
\be \label{def-rst}\begin{split}
\ket{r} & = D_\bp \qsp + \phantom{\eta^2}U_\mcZ D_\bp \qsp + \phantom{\eta^2}U_\mcZ^2 D_p \qsp \\
\ket{s} & = D_\bp \qsp + \eta^2 U_\mcZ D_\bp \qsp + \eta^{\phantom{2}} U_\mcZ^2 D_p \qsp \\
\ket{t} & = D_\bp \qsp + \eta^{\phantom{2}} U_\mcZ D_\bp \qsp + \eta^2 U_\mcZ^2 D_p \qsp.
\end{split}\ee
Given a choice of $\bp$, we will refer to vectors constructed in way as $r$-type, $s$-type, and $t$-type respectively. One can straightforwardly verify that 
\be U_\mcZ \ket{r} = \ket{r} \ee
so $r$-type vectors belong to the eigenspace $\mcH_1$ of $U_\mcZ$. Similarly, $s$-type and $t$-type vectors belong to the other two eigenspace $\mcH_\eta$ and $\mcH_{\eta^2}$, respectively. Moreover, \cref{def-rst} can be inverted so that any vector in the triplet $\{D_\bp\qsp, D_{\mcZ\bp}\qsp, D_{\mcZ^2 \bp}\qsp\}$ can be written as a linear combination of $\ket{r},\ket{s}$ and $\ket{t}$:
\be \begin{split}
D_\bp \qsp  &= \left(\ket{r} + \phantom{\eta^2}\ket{s} + \phantom{\eta^2}\ket{t}\right)/3 \\
U_\mcZ D_\bp \qsp  &= \left(\ket{r} + \eta^{\phantom{2}} \ket{s} + \eta^2 \ket{t}\right)/3 \\
U_\mcZ^2 D_\bp \qsp  &= \left(\ket{r} + \eta^2 \ket{s} + \eta^{\phantom{2}} \ket{t}\right)/3.
\end{split}\ee
Therefore the two sets $\{D_\bp\qsp, D_{\mcZ\bp}\qsp, D_{\mcZ^2 \bp}\qsp\}$ and $\{\ket{r},\ket{s},\ket{t}\}$ have exactly the same linear span.

If $V$ contains $k$ triplets, this gives $k$-many of each $r$-,$s$-, and $t$-type vector. From \cref{tab-UZeigenspaces} we know that the $r$-type vectors belong to an eigenspace of dimension $k+1$, the $s$-type vectors belong to an eigenspace of dimension $k$, and the $t$-type vectors belong to an eigenspace of dimension $k-1$. It clearly follows that the $k$ $r$-type vectors cannot fully span their eigenspace, while the $k$ $t$-type vectors are overcomplete and therefore linearly dependent.

If $V$ contains $k-1$ triplets and 3 singlets, this gives $(k-1)$-many of each $r$-,$s$-, and $t$-type vector, plus the 3 singlets that belong to the same eigenspace as  $\qsp$. This means there will be $k+2$ vectors among these that belong to the same eigenspace. Since the largest eigenspace of $U_\mcZ$ has dimensionality $k+1$, we obtain linear dependency.

Still sticking to the case $d=3k$, we want to note that if we further assume $\qsp \in \mcH_\eta$, we also obtain linear dependency when $V$ contains $k-1$ triplets and 2 singlets, or $k-2$ triplets and 3 singlets, using the same argument. Assuming $\qsp \in \mcH_{\eta^2}$ allows us to extend this linear dependency condition even further, to include cases when $V$ contains $k-3$ triplets and 3 singlets.

In the case $d=3k+1$, the 3 eigenspaces $\mcH_1,\mcH_\eta$ and $\mcH_{\eta^2}$ have dimensionality $k+1,k$, and $k$ respectively. If $\qsp \in \mcH_\eta$, then the singlet together with $k$ $s$-type vectors give $k+1$ vectors in $\mcH_\eta$, resulting in linear dependency. If $\qsp \in \mcH_{\eta^2}$, the singlet together with $k$ $t$-type vectors give $k+1$ vectors in $\mcH_{\eta^2}$, also resulting in linear dependency.

In the case $d=3k+2$, the eigenspace $\mcH_{\eta^2}$ has dimensionality $k$. The $k$ $t$-type vectors, together with the singlet, form a set of $k+1$ vectors in $\mcH_{\eta^2}$. Therefore they are linearly dependent.\\
\end{proof}

\subsection{Numerical linear dependencies}\label{sub:numerical}

Although the results in the previous section provide a significant understanding of how linear dependencies can arise in a WH orbit of an initial vector that is an eigenvector of $U_\mcZ$, they do not account for all the linear dependencies that we observe numerically. In this section, we provide a numerical analysis for linear dependencies in dimensions $d=4$ to 8, with partial results in $d=9$. We pay more attention and provide more details for the cases $d=6$ and 9, as we are interested in dimensions that are divisible by 3 (we know from \cref{thm-linde} that in dimensions $d=3k$ one can obtain linear dependencies from an initial vector in the Zauner subspace $\mcH_1$, where SIC fiducials are expected to be). Although the additional (cannot be accounted for by \cref{thm-linde}) linear dependencies in $d=6$ and 9 do not depend on whether the initial vector is a SIC fiducial or not, there are some interesting orthogonality relationships in the dependency structure that seem to be unique to SICs. Our investigation in $d=6$ and 9 also led to ``small SICs'', which will be the focus of the next section. In dimension $d=8$, there are SIC fiducials in $\mcH_{\eta^2}$, and we observe that these SIC fiducials yield more linearly dependencies than an arbitrary initial vector in $\mcH_{\eta^2}$. This indicates some similarity to the situation in $d=3$ that led to the Hesse configuration that we discussed in \cref{sub:d3}.

In each dimension $d$ that we analyzed, our computer program started with an arbitrary initial vector of each of the eigenspaces of the Zauner unitary $U_\mcZ$, generated the full orbit under the action of the WH group, and then performed an exhaustive search for all subsets of $d$ vectors that are linearly dependent by calculating the determinants of the $d \times d$ matrices formed by these $d$ vectors to a numerical precision of $10^{-15}$. For each eigenspace, the procedure was repeated for a small number of arbitrarily chosen initial vectors, to make sure that the results are the same. As for SIC fiducials, we used those given in \cite{Scott2010}, and when there are more than one Clifford orbit we repeat the calculation with fiducials from each orbit.

We found no distinction between choices of the initial vector, except in dimension $d=8$ where the SIC fiducial gives rise to 24,935,160 sets of linearly dependent vectors, slightly higher than the generic result shown in \cref{tab-lindenumerical}. In dimension $d=7$, the numerical results match our prediction from \cref{thm-linde}. But in all other cases, there are more numerically observed linear dependencies than what \cref{thm-linde} can account for. \cref{tab-lindenumerical} shows the number of linearly dependent sets found in WH orbits with generic initial vectors in different eigenspaces of $U_Z$ for dimensions $d=4$ to 8. We were not able to perform an exhaustive search for dimension 9 or higher.
\begin{table}[h]
\centering
\begin{tabular}{M{0.6cm} M{1.5cm} M{1.5cm} M{2cm} M{1.5cm} M{2cm}}
& \cellcolor{gray!25}$d=4$ & \cellcolor{gray!25}$d=5$ & \cellcolor{gray!25}$d=6$ & \cellcolor{gray!25}$d=7$
& \cellcolor{gray!25}$d=8$ \\ 
\cellcolor{gray!25}$\mcH_1$ & 0 & 0 & 984 & 0 & 0 \\
\cellcolor{gray!25}& 0 & 0 & (768) & 0 & 0 \\
\cellcolor{gray!25}$\mcH_\eta$ & 116 & 0 & 635,052 & 5,796 & 0 \\
\cellcolor{gray!25} & (68) & 0 & (75,342) & (5,796) & 0 \\
\cellcolor{gray!25}$\mcH_{\eta^2}$ & 116 & 6,600 & 17,903,28 & 5796 & 24,756,984 \\
\cellcolor{gray!25} & (68) & (4,200) & - & (5,796) & ($\le$766,080) \\
\end{tabular}
\vskip 4mm
\parbox{14cm}{\caption[Number of linear dependencies in WH orbits of eigenvectors of $U_Z$]{Number of linear dependencies in a WH orbit where the initial vector is arbitrarily taken from each eigenspace of $U_Z$. The numbers in brackets are the number of sets (or an upper bound in one case) predicted from \cref{thm-linde}.}
\label{tab-lindenumerical}}
\end{table}

In dimension $d=6$ we found 984 numerical linearly dependent sets starting from a generic initial vector in the Zauner subspace. Among these, only 768 sets have the property that they are invariant under the Zauner unitary, a condition for linear dependency in \cref{thm-linde}. This leaves 216 sets unaccounted for by our theorem. Interestingly, it is worth noting that these additional 216 sets are instead invariant under an order 6 symplectic unitary $U_{\mathcal{M}}$, where 
\be \mathcal{M} = \bmp 3 & 8 \\ 4 & 11 \emp.\ee
Each of the 36 vectors in the Weyl-Heisenberg orbit lies in 164 different linearly dependent sets, and each of the 984 sets clearly contains 6 vectors. In the language of geometry, we have 36 points and 984 hyperplanes in the complex projective space $\fd{CP}^5$, forming the configuration $(36_{164},984_{6})$.

The 984 linearly dependent sets in dimension $d=6$ themselves (for an initial vector in the Zauner subspace) can be grouped into orbits under the WH group (note that if a set is linearly dependent, then the set obtained by displacing it with the operator $D_{\bp}$ for any $\bp$ is also linearly dependent). We counted 27 orbits of length 36, and 1 orbit of length 12. The reason for the short orbit is because it contains sets that are invariant under the subgroup $\{\eye,D_{24},D_{42}\}$. Among these 28 orbits, 22 contain sets that are invariant under $U_\mcZ$, while the other 6 contain sets that are invariant under $U_{\mathcal{M}}$. The results are summarized in \cref{tab-lindeorbits}, where we have labeled the orbits from 1 to 28, with the first one being the short orbit, and the last 6 (23 to 28) are the ones invariant under $U_{\mathcal{M}}$.

\begin{table}[h]
\centering
\begin{tabular}{M{1.5cm} M{1.5cm} M{3cm} M{3cm} M{2.5cm}}
\cellcolor{gray!25}Orbit \phantom{aaaa} & \cellcolor{gray!25}Length \phantom{aaaa} & \cellcolor{gray!25} No. orbits under $U_\mcZ$ & \cellcolor{gray!25} No. orbits under $\{\eye,D_{24},D_{42}\}$ & \cellcolor{gray!25} No. ON quadruples \\ 
1 & 12 & 2 & 2 & 0 \\
2-10 & 36 & 2 & 1 & 0 \\
11 & 36 & 2 & 0 & 9 \\
12-13 & 36 & 2 & 0 & 0 \\
14-22 & 36 & 2 & 0 & 0 \\
23-28 & 36 & 1 & 1 & 0 \\
\end{tabular}
\vskip 4mm
\parbox{14cm}{\caption[Properties of WH orbits of linearly dependent sets in $d=6$]{Properties of WH orbits of linearly dependent sets with an initial vector in the Zauner subspace $\mcH_1$ in dimension $d=6$.}
\label{tab-lindeorbits}}
\end{table}

In hope of finding nice structures as in dimension $d=3$, we also studied the relationship among normal vectors of the 984 5-dimensional hyperplanes corresponding to the linearly dependent sets from Zauner subspace in $d=6$. We performed an exhaustive search for orthogonalities between these vectors. No basis was found, nor was a MUB. However we did find over 20,000 orthogonal triples of normal vectors, i.e. sets of 3 normal vectors that are orthogonal to each other. If we start from a SIC fiducial instead of an arbitrary vector in the Zauner subspace, the linear dependency remains identical. However, in this case we found 216 additional orthogonal triples, and we also found 9 orthogonal quadruples. They all belong to the same orbit (of length 36) under the WH group.

In dimension $d=9$, starting from an arbitrary vector $\qsp$ in the Zauner subspace, we found 79,767 sets of 9 linearly dependent vectors in the WH orbit of $\qsp$, 78,795 of which can be accounted for by \cref{thm-linde}. This number is too large for us to perform an exhaustive calculation of the scalar products between all pairs of normal vectors as in $d=6$, but we did find interesting relations among some normal vectors, which will be discussed in the next section. The 79,767 sets can be grouped into orbits under the WH group. We found a total of 987 orbits: 984 of length 81, 2 of length 27, and 1 of length 9. Like in $d=6$, they can be split into 2 groups: one group of 975 orbits that are exclusively invariant under $U_\mcZ$, and the other group of 12 orbits that are exclusively invariant under $U_\mcM$ (there are 186 orbits that are invariant under both), where $\mcM$ in odd dimensions $d=3k$ takes the form
\be
\mcM = \bmp k+1 & k \\ 2k & 2k+1 \emp. \ee

If we start from a SIC fiducial in $d=9$, we obtain an identical linear dependency structure. This suggests a distinction between SICs in dimension $d=3$ and SICs in higher dimensions divisible by 3. The ``special'' SICs in $d=3$ give rise to 12 linearly dependent sets, while others produce only 3. In this sense, no SICs are ``special''  in dimension $d=6$ and 9. However, we did find one other instance of a SIC fiducial vector giving more linear dependencies than other arbitrary vectors in the same eigenspace. This is the SIC fiducial in $\mcH_{\eta^2}$ in dimension $d=8$ (this additional SIC fiducial seems to only exist in dimensions that are equal to 8 mod 9 \cite{Scott2010}). In $d=8$, this particular SIC fiducial exhibit 24,935,160 linearly dependent sets, while a generic vector in $\mcH_{\eta^2}$ produces 24,756,984 sets. This may be connected to the fact that the SIC has a larger automorphism group than an arbitrary vector in the same eigenspace. Comparing to the case in $d=3$ where the Hesse configuration arises from the ``special'' SICs, one might ask whether there is a similar connection to elliptic curves in this family of SICs in dimensions $d=9k + 8$.

\subsection{Small SICs in dimensions $d=6$ and $9$}\label{sub:smallSICs}

In the numerical investigation presented in the previous section, we intentionally left out some interesting observations in dimensions $d=6$ and 9 for a separate discussion in this section. Among 984 vectors normal to the linear dependent sets generated from an arbitrary vector in the Zauner subspace (not necessarily a SIC fiducial) in $d=6$, we found 30 sets of 4 normal vectors that form 2-dimensional SICs, i.e. within each set, the overlaps between the vectors are $1/\sqrt{3}$ and the vectors lie in a 2-dimensional subspace. This phenomenon also happens in dimension $d=9$, where 3-dimensional SICs are found among 79,767 normal vectors. We refer to SICs of this kind as ``small SICs'', as their dimension is smaller than that of the embedding Hilbert space. Attempts to find small SICs in dimension $d=12$ yielded no positive result so far, but we have not been able to perform an exhaustive search. In this section, we provide an explanation for the small SICs in $d=6$. Small SICs in $d=9$ are not yet fully understood.

We observed that every 2-dimensional SIC set found in $d=6$ can be expressed as an orbit of a vector under the subgroup $\{\eye, D_{03}, D_{30}, D_{33} \}$. In other words, the SICs take the form $\{\qs, D_{03}\qs, D_{30}\qs, D_{33}\qs \}$, where $D_{ij}$ are the displacement operators defined in \cref{def-Du} for dimension $d=6$. In the following theorem we will prove that for such a set to form a 2-dimensional SIC, $\qs$ just needs to be any normalized eigenvector of $\UZ$ that is not in the Zauner subspace. Normal vectors of linearly dependent sets containing 3 singlets in the case the initial vector lies in the Zauner subspace happen to meet this condition, as seen from the proof of \cref{thm-linde}.

\begin{theorem} \label{thm-smallSICs}
In dimension $d=6$, if $\qs$ is an eigenvector of $\UZ$, then the 4 vectors $\qs,D_{03}\qs, D_{30} \qs$ and $D_{33} \qs$ are equiangular. Furthermore, if $\qs \in \mcH_\eta$ or $\qs \in \mcH_{\eta^2}$, those 4 vectors span a 2-dimensional subspace and therefore constitute a SIC.
\end{theorem}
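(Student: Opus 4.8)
The plan is to separate the two assertions and to recognise the four operators as a disguised single-qubit Pauli group. First note, using the group law \cref{eq-grouplaw} in $d=6$, that $D_{03}^{2}=D_{30}^{2}=D_{33}^{2}=\eye$ and that $D_{03}D_{30}=-D_{30}D_{03}$, so the three nontrivial elements of $H=\{\eye,D_{03},D_{30},D_{33}\}$ are anticommuting involutions, i.e.\ they obey the Pauli algebra. The second structural fact is that the Zauner symplectic cyclically permutes the three relevant phase-space points,
\be
\mcZ\bmp 0\\3\emp=\bmp 3\\3\emp,\qquad \mcZ\bmp 3\\3\emp=\bmp 3\\0\emp,\qquad \mcZ\bmp 3\\0\emp=\bmp 0\\3\emp \pmod 6,
\ee
so conjugation by $\UZ$ cyclically permutes $D_{03}\to D_{33}\to D_{30}\to D_{03}$ up to phases.

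For the equiangularity (valid for \emph{any} eigenvector $\qs$, with $\UZ\qs=\lambda\qs$), I would first show the three ``diagonal'' overlaps are equal: since $\UZ D_{\bu}\UZ^{\dagger}=D_{\mcZ\bu}$ and $\abs\lambda=1$, one gets $\abs{\bra{\psi}D_{33}\ket{\psi}}=\abs{\bra{\psi}\UZ D_{03}\UZ^{\dagger}\ket{\psi}}=\abs{\bra{\psi}D_{03}\ket{\psi}}$, and likewise for the third point. It then remains to reduce the three mixed overlaps $\abs{\bra{\psi}D_{\bu}^{\dagger}D_{\bv}\ket{\psi}}$ (for $\bu,\bv$ in the triple) to these same diagonal values: the group law collapses each product $D_{\bu}^{\dagger}D_{\bv}$ to a phase times $D_{\bw}$ with $\bw$ again one of $(0,3),(3,0),(3,3)$ modulo $6$. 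Hence all six pairwise overlaps coincide in modulus, so the four vectors are equiangular.

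For the SIC claim I would exploit the Pauli structure. Because the nontrivial elements of $H$ anticommute and their commutator acts as $-\eye$, the only irreducible $H$-module is two-dimensional, giving a decomposition $\fd{C}^{6}\cong\fd{C}^{2}\otimes\fd{C}^{3}$ in which $H$ acts as (Pauli)$\,\otimes I_{3}$. Since $\UZ$ normalises $H$ and induces the cyclic automorphism, one deduces $\UZ=V\otimes M$, where $V$ is the order-three qubit Clifford implementing the $120^{\circ}$ Bloch rotation (two distinct cube-root eigenvalues, each simple) and $M\in U(3)$. Convolving the eigenvalue multiplicities of $V$ and $M$ and matching them against the dimensions $(3,2,1)$ of $(\mcH_1,\mcH_\eta,\mcH_{\eta^2})$ from \cref{tab-UZeigenspaces} forces $M$ to have one eigenvalue absent, one of multiplicity $1$, and one of multiplicity $2$. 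This is exactly the condition that in $\mcH_\eta$ and $\mcH_{\eta^2}$ the eigenvalue of $\UZ$ factors through a \emph{single} qubit eigenvector $\ket{v}$, so every such eigenvector is a product $\qs=\ket{v}\otimes\ket{m}$, whereas $\mcH_1$ genuinely mixes both qubit eigenvectors. For a product state, $H\qs=(\text{Pauli}\cdot\ket{v})\otimes\ket{m}=\fd{C}^{2}\otimes\ket{m}$ is two-dimensional, giving the claimed span. Finally, the orbit spans the irreducible Pauli module, so $\sum_{g\in H} g\ketbra{\psi} g^{\dagger}$ is proportional to the identity on that plane by Schur's lemma; the four lines are distinct because $\ket{v}$, lying on the rotation axis of $V$, is not a Pauli eigenstate. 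A tight frame of four distinct equiangular lines in $\fd{C}^{2}$ is a SIC, with overlap $1/\sqrt{3}$.

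The main obstacle is the third step: rigorously upgrading the abstract decomposition $\UZ=V\otimes M$ to the concrete eigenvalue bookkeeping, and in particular proving the ``single $\ket{v}$'' statement for $\mcH_\eta$ and $\mcH_{\eta^2}$ (equivalently, that these eigenspaces are unentangled across the $\fd{C}^{2}\otimes\fd{C}^{3}$ split while $\mcH_1$ is not). Everything else — the Pauli relations, the cyclic action, and the tight-frame conclusion — is routine once this entanglement structure of the Zauner eigenspaces has been pinned down.
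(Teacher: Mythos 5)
Your proposal is correct, and for the substantive second half it takes a genuinely different route from the paper. The equiangularity part is the same argument the paper gives: conjugation by $\UZ$ together with the cyclic permutation of the points $(0,3),(3,0),(3,3)$, with the mixed overlaps collapsed back into the triple by the group law \cref{eq-grouplaw} (you make this last reduction explicit where the paper leaves it implicit). For the two-dimensionality, however, the paper proceeds quite differently: it introduces a square root $\UW$ of the Zauner unitary and three operators $R,S,T$ built as linear combinations of $D_{03},D_{30},D_{33}$, and uses their algebra ($S^{2}=T^{2}=0$, $ST=\eye+R$, $TS=\eye-R$, plus the commutation relations with $\UW$) to show that $S$ annihilates $\mcH_{\eta}$ and $T$ annihilates $\mcH_{\eta^{2}}$, whence $R\qs=\pm\qs$ and the span is two-dimensional. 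You instead invoke uniqueness of the irreducible Pauli representation to factor $\fd{C}^{6}\cong\fd{C}^{2}\otimes\fd{C}^{3}$ with $H$ acting as $\mathrm{Pauli}\otimes\eye_{3}$, deduce $\UZ=V\otimes M$ from the fact that $\UZ$ normalises $H$, and then count multiplicities against \cref{tab-UZeigenspaces}. What each approach buys: yours is the more conceptual one — it exhibits the small SICs literally as qubit SICs sitting inside a tensor factor, the overlap $1/\sqrt{3}$ drops out because the Bloch vector of the relevant eigenvector of $V$ lies on the rotation axis, and it explains structurally why the construction fails on $\mcH_{1}$ (that eigenspace is entangled across the qubit--qutrit split); it also makes the failure of the analogous construction in $d=9$ transparent, since there $D_{03}$ and $D_{30}$ commute and no Pauli structure forces a factorisation. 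The paper's $R,S,T$ computation is more elementary and self-contained, and it is the form of the argument the authors then attempt to port to $d=9,12,15$ (see \cref{tab-d691215}).

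One further point: the step you flag as the main obstacle is not actually a gap. After rescaling phases so that $V^{3}=\eye_{2}$ and $M^{3}=\eye_{3}$, the multiplicity vector of $\UZ$ over $\{1,\eta,\eta^{2}\}$ is the convolution of that of $M$ with $\delta_{\alpha}+\delta_{\beta}$, where $\alpha\neq\beta$ are the two simple eigenvalues of $V$; matching against $(3,2,1)$ forces the multiplicities of $M$ to be $2,1,0$ in some order. Since these three values are pairwise distinct, the dimension-$2$ eigenspace $\mcH_{\eta}$ can only arise as $2+0$ (a $1+1$ split would require two distinct eigenvalues of $M$ with equal multiplicity), and $\mcH_{\eta^{2}}$ as $1+0$; both are therefore product subspaces of the form $\ket{v}\otimes E$ with a single qubit factor, while $\mcH_{1}$, of dimension $3=2+1$, necessarily mixes both eigenvectors of $V$. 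This completes your argument with no additional input.
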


\begin{proof}
Let $\lambda$ be the eigenvalue of $\UZ$ corresponding to $\qs$, so that 
\be \UZ \qs=\lambda \qs \hskip 15mm \ket{\psi}=\lambda \UZ^\dagger\qs. \ee
The first statement of the theorem is true because
\begin{equation}
\bra{\psi}D_{\mathbf{p}}\ket{\psi} = \bra{\psi}\lambda^* U_\mathcal{Z} D_{\mathbf{p}}  U_\mathcal{Z}^\dagger \lambda \ket{\psi}
= \bra{\psi}D_{\mathcal{Z} \mathbf{p}}\ket{\psi} \end{equation}
and because the Zauner symplectic matrix $\mathcal{Z}$ simply permutes the three points 03, 30 and 33 in the discrete phase space according to \cref{tab-pZp}.

\begin{table}[h]
\centering
\begin{tabular}{M{1.2cm} M{1.2cm} M{1.2cm} M{1.2cm}}
\cellcolor{gray!25} $\mathbf{p}$ & (0,3) & (3,0) & (3,3)\\
\cellcolor{gray!25} $\mathcal{Z}\mathbf{p}$ & (3,3) & (0,3) & (3,0)
\end{tabular}
\vskip 4mm
\parbox{14cm}{\caption[Action of $\mcZ$ on points (0,3), (3,0) and (3,3) in $d=6$]{Cyclic action of $\mcZ$ on points (0,3), (3,0) and (3,3).}
\label{tab-pZp}}
\end{table}

In proving the second statement of the theorem, we make use of a square root of the Zauner unitary. Let $\mathcal{W}$ be the symplectic matrix given by 
\begin{equation}
\mcW \equiv \begin{pmatrix}1 & -1\\1 & 0\end{pmatrix} \hspace{15mm}
\mcW^2 = \begin{pmatrix}0 & -1\\1 & -1\end{pmatrix} = \mathcal{Z}
\end{equation}
Let $\UW$ be the symplectic unitary corresponding to $\mcW$, with a phase chosen so that
\be \UW^2 = \UZ.\ee 
The structure of the eigenspaces of $\UZ$ and $\UW$ are described in \cref{tab-eigenUZUW}, where $\omega = e^{2\pi i/6}$ and $\eta = e^{2\pi i/3} =\omega^2$.
\begin{table}[ht!]
\begin{center}
\begin{tabular}{| l | c  | c |  c  | c | c |}
\hline
&\multicolumn{2}{c|}{}&\multicolumn{2}{c|}{}& \\[-3.5mm] 
Eigenspaces of $U_\mathcal{Z}$& \multicolumn{2}{c|}{$\mathcal{H}_1$ (Zauner)} &  \multicolumn{2}{c|}{$\mathcal{H}_{\eta}$} & $\mathcal{H}_{\eta^2}$  \\[0.5mm] \hline  
&\multicolumn{2}{c|}{}&\multicolumn{2}{c|}{}& \\[-3.5mm] 
Eigenvalue & \multicolumn{2}{c|}{1} & \multicolumn{2}{c|}{ $\eta$} & $\eta^2$\\ \hline
&\multicolumn{2}{c|}{}&\multicolumn{2}{c|}{}& \\[-3.5mm] 
Dimensionality & \multicolumn{2}{c|}{3}  & \multicolumn{2}{c|}{2}  & 1\\ \hline
& & & & & \\[-3.5mm] 
Eigenspaces of $U_\mathcal{W}$ & $\mathcal{K}_1$ & $\mathcal{K}_{\omega^3}$ & $\mathcal{K}_{\omega}$ & $\mathcal{K}_{\omega^4}$ & $\mathcal{K}_{\omega^2}$  \\[0.5mm] \hline  
& & & & & \\[-3.5mm] 
Eigenvalue & 1 & $\omega^3$ & $\omega$ &  $\omega^4$ & $\omega^2$\\ \hline
& & & & & \\[-3.5mm] 
Dimensionality & 2 & 1 & 1 &1 & 1\\ 
\hline
\end{tabular}
\vskip 4mm
\parbox{14cm}{\caption[Structure of the eigenspaces of $U_\mathcal{Z}$ and its square root $U_\mathcal{W}$ in $d=6$]{Structure of the eigenspaces of $U_\mathcal{Z}$ and its squareroot $U_\mathcal{W}$, with  $\omega = e^{2\pi i/6}$ and $\eta = e^{2\pi i/3} =\omega^2$. Note that $\UW$ is order 6, but it only has 5 eigenvalues (missing $\omega^5$) because the eigenspace corresponding to eigenvalue 1 is degenerate.}
\label{tab-eigenUZUW}}
\end{center}
\end{table}
Next, we define three operators $R, S$ and $T$ as follows:
\begin{equation}\label{RST}
\begin{split}
R &= (D_{03} + D_{30} + D_{33})/\sqrt{3}\\
S &= (D_{03} + \omega^2 D_{30} + \omega^4 D_{33})/\sqrt{3}\\
T &= (D_{03} + \omega^4 D_{30} + \omega^2 D_{33})/\sqrt{3}.
\end{split}
\end{equation}
Note that $\{\qs, D_{03}\qs, D_{30}\qs, D_{33}\qs\}$ and 
$\{\ket{\psi}, R\ket{\psi}, S\ket{\psi}, T\ket{\psi}\}$ have the same linear span. We will prove that $\ket{\psi} \in \mathcal{H}_{\eta}$ implies $S\ket{\psi} = 0$ and $R\ket{\psi} = \ket{\psi}$, and that $\ket{\psi} \in \mathcal{H}_{\eta^2}$ implies $T\ket{\psi} = 0$
and $R\ket{\psi} = -\ket{\psi}$. It will immediately follow that the linear span above is 2-dimensional.

From their definitions, one can verify the following properties of $R, S$ and $T$:
\begin{equation}S = T^\dagger, \hspace{10mm} S^2 = T^2 = 0, \hspace{10mm} R^2 = \eye,\end{equation}
\begin{equation}ST = \eye + R, \hspace{15mm}TS = \eye - R.\end{equation}
Moreover, $ST/2$ and $TS/2$ are rank-3 projection operators that are orthogonal to each other. One can also verify the following  commutation relations between $R,S,T$ and $\UW$:
\begin{equation}U_\mathcal{W}R = R U_\mathcal{W}, \hspace{10mm} U_\mathcal{W}S = \omega^4 S U_\mathcal{W}, \hspace{10mm} U_\mathcal{W}T = \omega^2 T U_\mathcal{W}.
\end{equation}
These equations tell us how $R,S$ and $T$ permute the eigenspaces of $U_\mathcal{W}$. For example, if $\qsp$ is an eigenvector of $U_\mathcal{W}$ with eigenvalue $\omega^2$, then $U_W S\qsp = \omega^4 S U_W \qsp = S \qsp$, so $S \qsp$ is an eigenvector of  $U_\mathcal{W}$ with eigenvalue 1. A full description of the action of $S$ and $T$ on the eigenspaces of $U_\mathcal{W}$ is shown in \cref{fig-RST} (the action of $R$ is not shown because $R$ commutes with $U_\mathcal{W}$ and simply leaves the eigenspaces invariant).
\begin{figure}[h]
\centering
\includegraphics[scale=0.4]{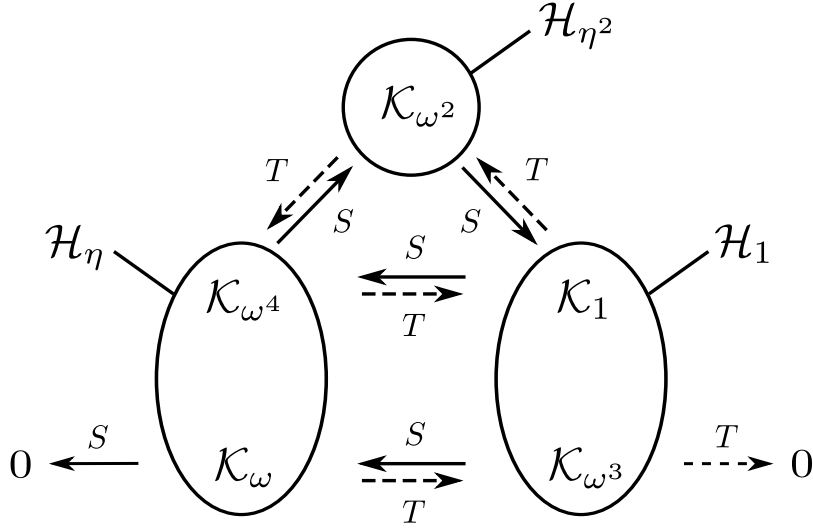}
\vskip 2mm
\parbox{14cm}{\caption[Action of $S$ and $T$ on eigenspaces of $\UW$ in $d=6$]{Action of $S$ (solid arrow) and $T$ (dashed arrow) on the eigenspaces of $U_W$. Note that $\mathcal{K}_1$ (2D) and $\mathcal{K}_{\omega^3}$ (1D) span the Zauner subspace $\mathcal{H}_{1}$ (3D), while $\mathcal{K}_{\omega^4}$ (1D) and $\mathcal{K}_{\omega}$ (1D) span $\mathcal{H}_{\eta}$ (2D), and $\mathcal{K}_{\omega^2}$ (1D) is identical to $\mathcal{H}_{\eta^2}$ (1D).}
\label{fig-RST}}
\end{figure}

Let $\ket{k_0}$, $\ket{k_1}$, $\ket{k_2}$, $\ket{k_3}$ and $\ket{k_4}$ be non-zero eigenvectors of $U_W$ belonging to the eigenspaces $\mathcal{K}_{1}$,  $\mathcal{K}_{\omega}$,  $\mathcal{K}_{\omega^2}$,  $\mathcal{K}_{\omega^3}$ and $\mathcal{K}_{\omega^4}$ respectively. Except for $\ket{k_0}$, the rest of them are unique up to a scalar, because the corresponding eigenspaces are 1-dimensional. As seen from the diagram, we have $S\ket{k_1} = 0$. We are going to prove that $S\ket{k_4}=0$, so that $S\ket{\psi} = 0$ for any $\ket{\psi} \in \mathcal{H}_{\eta}$.

Suppose otherwise, that $S\ket{k_4} \ne 0$. It has to be the case that $S\ket{k_0} = 0$, because otherwise $S\ket{k_0}$ will be a non-zero vector in the 1-dimensional eigenspace $\mathcal{K}_{\omega^4}$, which implies $S\ket{k_0} = \alpha \ket{k_4}$ for some $\alpha \ne 0$, and therefore $S\ket{k_4} = \alpha^{-1}S^2 \ket{k_0} = 0$, contradicting the assumption that $S\ket{k_4} \ne 0$. Since $S\ket{k_4}$ is a non-zero vector in the 1-dimensional eigenspace $\mathcal{K}_{\omega^2}$, we must also have $S\ket{k_4} = \beta \ket{k_2}$ for some $\beta \ne 0$, and therefore $S\ket{k_2} = \beta^{-1}S^2 \ket{k_4}=0.$ So from the assumption that $S\ket{k_4} \ne 0$, we deduce that $0 = S\ket{k_1} = S\ket{k_2} = S\ket{k_0}$, which implies $0 = TS\ket{k_1} = TS\ket{k_2} = TS\ket{k_0}$, which in turn means that $TS$ is orthogonal to the 4-dimensional subspace spanned by  $\mathcal{K}_1$,  $\mathcal{K}_{\omega}$ and  $\mathcal{K}_{\omega^2}$. This contradicts the fact that $TS/2$ is a rank-3 projection operator. 

Thus, we conclude that $S\ket{k_4} = 0$, and that $S\ket{\psi} = 0$ for any $\ket{\psi} \in \mathcal{H}_{\eta}$. The identity $R\ket{\psi} = \ket{\psi}$ immediately follows from $0 =TS\ket{\psi} = (\eye-R)\ket{\psi}$. Note that $T\ket{\psi} \ne 0$ (because $TS$ is orthogonal to $ST$) is a non-zero vector in $\mathcal{H}_1$. $T\ket{\psi}$ is not proportional to $\ket{\psi}$ because $T\ket{\psi}$ lies in $\mathcal{H}_{1}$ and $\ket{\psi}$ lies in $\mathcal{H}_{\eta}$. Therefore $\ket{\psi}$, $R\ket{\psi}$, $S\ket{\psi}$ and $T\ket{\psi}$ indeed span a 2-dimensional subspace. 

It is a similar reasoning that leads to $T\ket{k_2} = 0$, which implies that $T\ket{\psi}=0$ and $R\ket{\psi}=-\ket{\psi}$ for any $\ket{\psi} \in \mathcal{H}_{\eta^2}$. We conclude that $\{\ket{\psi}$, $R\ket{\psi}$, $S\ket{\psi},T\ket{\psi}\}$ spans a 2D subspace whenever $\qs \in \mcH_\eta \cup \mcH_{\eta^2}$, thus proving the second statement of the theorem.
\end{proof}

We have provided an analytical explanation for why $2$-dimensional small SICs occur in dimension $d=6$. Unfortunately, this argument cannot be applied to explain the 3-dimensional small SICs found in $d=9$. If one tries to use the construction described in the proof of \cref{thm-smallSICs} for the case $d=9$, one would obtain 9 vectors which do not span a $3$-dimensional subspace and whose overlaps are not constant (even though they only take two different values). The construction  also fails to produce SICs  in dimensions $12$ and $15$.  We summarize the situation in \cref{tab-d691215}, which shows the dimensionality of the subspace spanned by the $d^2/9$ vectors obtained using the construction in \cref{thm-smallSICs} for different eigenspaces of $\UZ$ from which $\qs$ is chosen.

\begin{table}[h]
\centering
\begin{tabular}{M{1cm} M{1.5cm} M{1.5cm} M{1.5cm} M{1.5cm}}
& \cellcolor{gray!25} $d=6$ & \cellcolor{gray!25}$d=9$ & \cellcolor{gray!25} $d=12$ & \cellcolor{gray!25} $d=15$\\
\cellcolor{gray!25}$\mathcal{H}_1$ & 4 & 8 & 12 & 15\\
\cellcolor{gray!25}$\mathcal{H}_{\eta}$ & 2 & 7 & 8 & 15\\
\cellcolor{gray!25}$\mathcal{H}_{\eta^2}$ & 2 & 6 & 8 & 10 \\
\end{tabular}
\vskip 4mm
\parbox{12cm}{\caption[Dimensions of the embedded subspaces in $d=6,9,12$ and $15$]{Dimensions of the spans of the orbits of $\qs$ under the subgroup generated by $D_{03}$ and $D_{30}$ in higher dimensions for three eigenspaces of $\UZ$ that $\qs$ is chosen from.}
\label{tab-d691215}}
\end{table}

Nevertheless, the fact remains that in dimension $d=9$, instead of choosing a vector in $\mathcal{H}_{\eta}$ or $\mathcal{H}_{\eta^2}$ and following the construction in the proof of \cref{thm-smallSICs}, if one starts from an initial vector $|\psi\rangle$ in the Zauner subspace (not necessarily a SIC fiducial), then one can always find $3$-dimensional SICs among the normal vectors to the linearly dependent sets generated by $\qs$. We have found 4 such small 3-dimensional SICs (there could possibly be more). Upon an inspection of the triple products, we noticed that all of these are unitarily equivalent to the ``most exceptional'' SIC, whose WH fiducial vector is $(0,1,-1)/\sqrt{2}$ \cite{Gelo2013}. This construction is robust in the sense that the resulting small SICs are always the same, regardless of the choice of the initial vector $\qs \in \mcH_1$. If this phenomenon repeats in higher dimensions it might open up an intriguing possibility that one might be able to get a constructive proof of SIC existence in this  way. We did not succeed in finding small SICs in dimension $d=12$, but an exhaustive search was out of computational power's reach.

\chapter{Galois-unitary symmetry}\label{chap:gunitary}
\vskip -4mm
\minitoc

\section{Motivations}\label{sec:motivations}

In contrast to the previous chapter where our study was based on an almost 90-year-old group symmetry, in this chapter we focus on a novel symmetry, which is a generalization of anti-unitary symmetry. 

Quantum physicists are very familiar with unitary transformations. One of the postulates in quantum mechanics states that the evolutions of quantum states are described by unitary transformations. All physical transformations must therefore be unitary. Unitary operators are defined to be operators that preserve the inner product between any two vectors in a Hilbert space:
\be
\langle U x , U y \rangle = \langle x,y \rangle,
\ee
and therefore they also preserve the transition probabilities:
\be
\abs{\langle U x , U y \rangle} = \abs{\langle x,y \rangle}.
\ee
Anti-unitary operators $\bar{U}$ are defined to satisfy
\be
\langle \bar{U} x , \bar{U} y \rangle = \langle x,y \rangle ^*,
\ee
where $^*$ represents complex conjugation. Although anti-unitaries are unphysical and less frequently seen, they also play important roles in physics, such as in representing time-reversal symmetry \cite{Wigner1931} and in entanglement theory \cite{Uhlmann2000}. One can clearly see from the definition that they also preserve transition probabilities. Together with unitaries, they form the only transformations of quantum states that have this probability preserving property, a milestone result from 1931 known as Wigner's theorem \cite{Wigner1931}.

In a restricted region of a Hilbert space, however, it is possible to have symmetries beyond those of unitary or anti-unitary character. Such transformations were recently constructed by Appleby \etal \cite{Appleby2013} to aid the search of SIC-POVMs. They are named Galois-unitaries (or g-unitaries for short), for they are unitary operators composed with Galois automorphisms of a chosen number field extension. The motivation under the construction of g-unitaries stems from an observation by Scott and Grassl \cite{Scott2010} that every known analytical SIC fiducial (except the continuous family in dimension $d=3$) can be expressed in terms of radicals, implying that the corresponding Galois group is solvable. In Appleby \etal \cite{Appleby2013}, SIC fiducials were found to be eigenvectors of a family of g-unitaries. It was hoped that the additional g-unitary symmetry, on top of WH covariance and Zauner symmetry, would help reveal the solution for the SIC problem. Despite significant progresses, the SIC problem remains unsolved. 

On the other hand, one notices that the Galois groups of cyclotomic field extensions (see \cref{sec:fieldtheory} for a review of Galois groups and cyclotomic fields) are quite simple. This is relevant to Mutually Unbiased Bases (to be defined in the next section) since all the components of standard MUB vectors indeed belong to a cyclotomic field. As we will see in \cref{sec:MUBs},  Clifford unitaries simply move one MUB vector to another and permute the bases according to M{\"o}bius transformations.  However, not all permutations on the bases can be realized by Clifford unitaries. This is where g-unitaries come in to provide some of the missing symmetries. Started out as a toy model for SICs, our study of the roles of g-unitaries in the theory of mutually unbiased bases \cite{Appleby2014G} led to a number of new findings. By extending the Clifford group with g-unitaries, we were able to solve the MUB-cycling problem in odd prime-power dimensions (see \cref{sec:MUBcyclers}). We also provided a construction for a distinguished class of quantum states known as MUB-balanced states (see \cref{sec:MUBbalanced}). Although our construction relies on a different technique, namely the g-unitary symmetry, it yields identical results to the construction by Amburg \etal \cite{Amburg2014}.

We want to note that g-unitay operators are not unitary in general, and therefore cannot be physically realized. However, it is possible to simulate them using unitary operators in a larger Hilbert space. We propose such a simulation scheme in \cref{sec:simulation}.

\section{Mutually unbiased bases}\label{sec:MUBs}
In a $d$-dimensional Hilbert space, two orthonormal bases $\{e_i\}$ and $\{e_i'\}$ are called \MUBs ~if
\be
\abs{\braket{e_i}{e_j'}} = \frac{1}{\sqrt{d}}
\ee
for any $i,j = 0,1,...,d-1$. One can see that if a quantum state is completely specified from the measurement outcome probabilities in one basis (i.e. it is one of the basis vectors), then the outcome probabilities from the measurement in the other basis must be a uniform distribution. This is very much like measuring with the position and momentum operators, where learning the precise location of a particle erases all information about its momentum and vice versa. Mutually unbiased bases can therefore be considered as a finite-dimensional analogue of complementary observables. They play important roles not only in foundational studies such as complementarity in quantum mechanics \cite{Schwinger1960}, but also in areas in quantum information such as quantum state estimation \cite{Wootters1989} and quantum cryptography \cite{Bennett2014} (we have provided only a few milestone references).

If the Hilbert space has $d$ dimensions, there can be at most $d+1$ bases that are mutually unbiased to each other \cite{Wootters1989}. These are said to form a complete (or full) set of MUBs, and we will use the acronym MUBs to refer to such a complete set only. It has been shown by Ivonovic \cite{Ivonovic1981} that complete sets of MUBs can be constructed in prime dimensions. This construction was later generalized to prime power dimensions by Wootters and Fields \cite{Wootters1989}. The existence of complete sets of MUBs in general is still an open question. Even in the lowest non-prime-power dimension $d=6$, although it has been found unlikely that a complete set of MUBs could exist, a non-existence proof has not been reached despite numerous research efforts \cite{Bengtsson2007, Butterley2007, Brierley2009, Jaming2009, Paterek2009, Brierley2010, Raynal2011}.

There are many known techniques for constructing a complete set of MUBs, for example by finding maximal Abelian subgroups of the Weyl-Heisenberg group, by taking the eigenbases of unitary operators constructed from the shift operator $X$ and the clock operator $Z$, by using Hadamard matrices, and by taking the WH orbit of an Alltop fiducial \cite{Blanchfield2014}. Here we describe a Clifford-based construction \cite{Appleby2009}, which produces the same set of MUBs that originally appeared in Wootters and Fields' paper \cite{Wootters1989}. We will refer to this particular set as the standard set of MUBs. All MUBs in the thesis are implicitly understood as standard sets of MUBs, unless specifically noted otherwise.

We now restrict ourselves to the case $d=p^n$ is an odd prime power. Making use of the existence of the finite field $\Fd$ when $d$ is a prime power, one can provide a faithful unitary representation $U_S$ of symplectic matrices $S \in \SL(2,\Fd)$. We will appeal to the particular representation described by \cref{def-US-primepower} in \cref{sec:Clifford-app} (see \cite{Appleby2009} for more details). Note that this is the representation for the restricted Galoisian Clifford group, which is not the same as the ordinary Clifford group used in \cref{chap:WH}.

Consider the following $d+1$ matrices in $\SL(2,\Fd)$
\be \begin{split}
S_b &= \bmp 1 & b \\ 0 & 1 \emp \hskip 6mm \text{for } \hskip 3mm b \in \Fd \\
S_\infty & = \bmp 0 & 1 \\ -1 & 0\emp \hskip 3mm \text{for } \hskip 3mm b = \infty.
\end{split}\ee
If we transform the standard basis by the symplectic unitaries $U_{S_b}$ we will obtain $d+1$ bases in a full set of MUB. More explicitly, let $\ket{v}$ denote the standard basis vectors, and let us define
\be
\ket{b,v} = U_{S_b} \ket{v}
\ee
for all $b \in \Fd \cup \{\infty\}$ and $v\in \Fd$, then the claim is that $\ket{b,v}$ are $d(d+1)$ vectors in a complete set of MUBs, where $b$ labels the bases, and $v$ labels the vectors within each basis. To see why, let us first note that for any symplectic matrix
\be \label{def-S-MUBs}
S = \bmp \alpha & \beta \\ \gamma & \delta \emp
\hskip 15mm \det(S) = 1\ee
with $\beta \ne 0$, it directly follows from \cref{def-US-primepower} that
\be \abs{\bra{v} U_S \ket{v'}} = \frac{1}{\sqrt{d}} \ee
for all standard basis vectors $\ket{v}$ and $\ket{v'}$. One can verify that for any $b,b' \in \Fdinf$ and $b\ne b'$, the $\beta$-entry of $S_b^{-1} S_{b'}^{\phantom{1}}$ is non-zero, therefore
\be \begin{split}
\abs{\braket{b,v}{b',v'}} & = \abs{\bra{v}U_{S_b}^\dagger U_{S_{b'}} \ket{v'}} \\
& = \abs{\bra{v}U_{S_b^{-1}} U_{S_{b'}^{{\phantom{1}}}} \ket{v'}} \\
& = \abs{\bra{v}U_{S_b^{-1} S_{b'}^{{\phantom{1}}}} \ket{v'}} \\
& = \frac{1}{\sqrt{d}}
\end{split}\ee 
so that the two bases $b$ and $b'$ are indeed mutually unbiased.

We would like to point out that it also directly follows from the representation in \cref{def-US-primepower} that when $\beta = 0$, $U_S$ simply permutes vectors in the standard basis and adds phases to them. In general (for any value of $\beta$), for any symplectic $S$ in the form given by \cref{def-S-MUBs} one can explicitly work out the action of $U_S$ on the MUB vectors $\ket{b,v}$ to be \cite{Appleby2009}
\be \label{eq-USMUBs}
U_S \ket{b,v} \doteq \begin{cases}
\ket{\frac{\alpha b+\beta}{\gamma b+\delta}, \frac{v}{\gamma b+\delta}} & \text{if } b \ne \infty, \gamma b+\delta \ne 0 \\
\ket{\infty, -\gamma v} & \text{if } b \ne \infty, \gamma b+\delta = 0 \\
\ket{\frac{\alpha}{\gamma}, \frac{v}{\gamma}} & \text{if } b = \infty, \gamma \ne 0 \\
\ket{\infty, \delta v} & \text{if } b = \infty, \gamma = 0 \\
\end{cases}
\ee
where ``$\doteq$'' means ``equals up to a phase''. One now sees the reason behind the use of the symbol $\infty$: it allows us to summarize the permuting actions of $U_S$ on the bases by the M\"{o}bius transformation \cite{Appleby2008}
\be \label{eq-Mobius}
b \hskip3mm \rightarrow \hskip3mm \frac{\alpha b + \beta}{\gamma b + \delta}.
\ee
 
\begin{remark}
The faithful unitary representation we have is for symplectic matrices $S \in \SL(2,\Fd)$, which have $\det S = \alpha \delta - \beta \gamma = 1$. However, a general M\"{o}bius transformation only requires that $\alpha \delta - \beta \gamma \ne 0$, which is the requirement for the general linear group \gls{GL}. Note that if we scale $\alpha, \beta, \gamma$ and $\delta$ by a constant factor, the M\"{o}bius transformation in \cref{eq-Mobius} remains the same. M\"{o}bius transformations are therefore represented by the quotient group of GL, where any two elements $G$ and $G'$ in GL are considered equivalent if they are related by $G' = cG$ for some constant $c \in \Fd$. This is called the projective general linear group \gls{PGL}. In a similar manner, the special linear group $\SL(2,\Fd)$ gives rise to the projective special linear group \gls{PSL}, which is a proper subgroup of PGL. The orders of these groups are provided in \cref{tab-grouporders}.
\begin{table}[h]
\centering
\begin{tabular}{lll}
\rowcolor{gray!25} Notation & Name & Order \\ 
$\GL(2,\Fd)$ & general linear group & $d(d-1)(d^2-1)$\\ 
\gls{PGL} & projective general linear group & $d(d^2-1)$\\
\gls{SL} & special linear group & $d(d^2-1)$\\ 
\gls{PSL} & projective special linear group & $d(d^2-1)/2$\\ 
\end{tabular}
\vskip 4mm
\parbox{12cm}{\caption[The orders of GL and its subgroups]{The orders of the general linear group and its various subgroups.}
\label{tab-grouporders}}
\end{table}

\noindent We observe that if an element $G$ in $\GL(2,\Fd)$
\be \label{def-G-MUBs}
G = \bmp \alpha & \beta \\ \gamma & \delta \emp
\hskip 15mm \det(G) \ne 0\ee
has determinant $\det(G) = \Delta$ which is a quadratic residue in $\Fd$, meaning that $\Delta = x^2$ for some non-zero $x\in \Fd$, we can write $G$ as
\be \label{eq-GxS}
G = \bmp \alpha & \beta \\ \gamma & \delta \emp =
x \bmp \alpha x^{-1} & \beta x^{-1} \\ \gamma x^{-1} & \delta x^{-1} \emp = x S,
\ee
where $S$ is clearly an element of $\SL(2,\Fd)$. Since $G$ and $S$ in the above expression are equivalent, they correspond to the same element in PGL. Therefore, elements in GL whose determinants are quadratic residues do not add to PGL beyond the contribution from SL. But elements whose determinants are quadratic non-residues do. The message here is that symplectic unitaries permutes MUB bases according to M\"{o}bius transformations, but not all M\"{o}bius transformations can be realized by these permutations. We will later see that Galois-unitaries come in to help supply the missing transformations (all of them in certain cases, and some of them in other cases).
\end{remark}

\section{The Clifford group extended by g-unitaries}\label{sec:gunitaries}
We again want to remind that throughout this chapter, we exclusively use the term Clifford group to refer to the restricted Galoisian Clifford group \cite{Appleby2009}, as opposed to the ordinary version used in the previous chapter. The symplectic group $\SL(2,\fd{F}_d)$ is one key ingredient in the construction of the Clifford group. In \cref{sec:Clifford-app} we describe a faithful unitary representation of $\SL(2,\fd{F}_d)$. Here, the question of interest is: is it possible to extend this representation to also include all linear transformations $G$ in the discrete phase space that have determinant $\Delta=\det(G) \ne 1$? According to \cref{eq-OmegaG}, such a transformation scales the symplectic area by a factor of $\Delta$. In order for the group law in \cref{eq-grouplaw-p} to hold, the representation of $G$ should also transform $\omega \mapsto \omega^{\Delta}$, otherwise it will not be an automorphism of the Weyl-Heisenberg group. This is where Galois automorphisms come into the picture (see \cref{sub:Galois} for an introduction to Galois automorphisms), as they do precisely what we need: 
\be g_{\Delta}:\omega \mapsto \omega^{\Delta}.\ee

In the special case when $\Delta=\det(G) =-1$, $G$ is called an anti-symplectic matrix in $\ESL(2,\fd{F}_d)$, and it is represented by an anti-unitary transformation, i.e. an ordinary unitary transformation following complex conjugation. The Clifford group extended by these anti-unitaries is called the extended Clifford group, which was well studied in \cite{Appleby2005}.

We now consider the case when $G$ is an arbitrary element of $\GL(2,\Fd)$, i.e. a $2\times 2$ matrix with determinant $\Delta \ne 0$ (so that it is invertible) over the field $\fd{F}_d$. Although it is possible to analyze the most general case of odd prime power dimensions right away, we would like to start with the simpler case when $d=p$ is an odd prime to explain the core concepts first, then add the technical complications of the general case $d = p^n$ later.

\subsection{In odd prime dimensions}\label{sec:gunitaries-prime}
Let the dimension $d=p$ be an odd prime, and let $G$ be any element in $\GL(2,\Fd)$
\be
G = \bmp \alpha & \beta \\ \gamma & \delta \emp
\ee
where $\alpha, \beta , \gamma,\delta \in \F_d$ and $\Delta \equiv \det(G) \ne 0$. We can always decompose $G$ into
\be \label{eq-GSK}
G = SK_{\Delta},
\ee
where, for any $x\in \Fd$, we define $K_x \in \GL(2,\Fd)$ to be
\be\label{def-Kx}
K_x \equiv \bmp 1 & 0 \\ 0 & x \emp, \ee
and
\be
S = GK_{\Delta}^{-1} = \bmp \alpha &\beta \Delta^{-1} \\ \gamma & \delta \Delta^{-1} \emp
\ee
Note that $S$ has determinant $1$, so $S\in \SL(2,\Fd)$ and can be represented by the unitary $U_S$ given in \cref{def-US-prime}. The matrix $K_{\Delta}$ has determinant $\Delta$ and will be represented by the Galois automorphism $g_{\Delta}$:
\be
K_{\Delta} = \bmp 1 & 0 \\ 0 & \Delta \emp
\hskip 3mm \rightarrow \hskip 3mm 
g_{\Delta}:\omega \mapsto \omega^{\Delta}.
\ee
Therefore $G$ can now be represented by a Galois-unitary (or g-unitary for short) $U_G$:
\be \label{def-UG}
G = SK_{\Delta}\hskip 3mm \rightarrow \hskip 3mm 
U_G \equiv U_S g_{\Delta}.	
\ee
This concept of a g-unitary was introduced in \cite{Appleby2013} as a generalization to an anti-unitary transformation, which can be realized by first applying complex conjugation, and then applying a unitary transformation. Similarly, to realize a g-unitary, one first applies a Galois automorphism, and then performs a unitary transformation.

It is worth emphasizing that $U_G$ is not a unitary operator except when $\det(G)=1$. It is not even a linear operator in general, so it cannot be expressed in a matrix form. The action of $U_G$ on a vector in the Hilbert space is 
\be U_G\qs = U_S g_{\Delta}(\qs),\ee
where $g_{\Delta}(\qs)$ denotes the vector obtained by applying $g_{\Delta}$ to the components of $\qs$ in the standard basis. 

Furthermore, g-unitaries are only defined to act on vectors (or matrices) whose components belong to the cyclotomic field $\fd{Q}(\omega)$. Because of this restriction, we have to verify that these added Galois automorphisms can act on the whole Clifford group. Looking at the representation in \cref{def-US-prime} one might ask whether the overall factor $e^{i\phi}/\sqrt{p}$ is in the cyclotomic field. Let us recall the Gaussian sum \cite{Berndt1998}
\be \label{eq-Gaussiansum}
\sum_{x=0}^{p-1} \omega^{x^2}=
\sum_{x\in Q} \omega^x - \sum_{x\in N} \omega^x = \begin{cases}
\sqrt{p} & \text{if } p=4k+1 \\
i\sqrt{p} & \text{if } p=4k+3. \end{cases}
\ee
It follows that the factors $e^{i\phi}/\sqrt{p}$ in \cref{def-US-prime} belong to the cyclotomic field, and therefore so do the entries of the symplectic unitaries $U_S$. We are now allowed to use symplectic unitaries along with the Galois automorphisms of the cyclotomic field extension to represent $\GL(2,\Fd)$. This representation is faithful, as shown in the following lemma.
\begin{lemma}\label{lem-faithfulness}
Let the dimension $d=p$ be an odd prime and let $G_1$ and $G_2$ be any two elements of $\GL(2,\Fd)$. It then holds that
\be 
U_{G_1} U_{G_2} = U_{G_1 G_2}.
\ee
\end{lemma}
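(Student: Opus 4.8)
The plan is to reduce everything to the decomposition $G = S K_{\Delta}$ of \cref{eq-GSK}, together with the faithfulness of the underlying symplectic representation $S \mapsto U_S$ of $\SL(2,\Fd)$ fixed by \cref{def-US-prime}, which I take to be an exact homomorphism ($U_S U_{S'} = U_{SS'}$, not merely up to phase). Writing $\Delta_i = \det(G_i)$ and $G_i = S_i K_{\Delta_i}$, by \cref{def-UG} we have $U_{G_i} = U_{S_i} g_{\Delta_i}$, so
\[
U_{G_1} U_{G_2} = U_{S_1}\, g_{\Delta_1}\, U_{S_2}\, g_{\Delta_2}.
\]
The whole computation then hinges on commuting the Galois automorphism $g_{\Delta_1}$ past the unitary $U_{S_2}$.

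First I would record the elementary operator identity that, since $g_{\Delta_1}$ acts componentwise by a field automorphism of $\fd{Q}(\omega)$ and is therefore a ring homomorphism, for any matrix $M$ with cyclotomic entries one has $g_{\Delta_1} M = g_{\Delta_1}(M)\, g_{\Delta_1}$ as operators, where $g_{\Delta_1}(M)$ denotes entrywise application. Combined with $g_{\Delta_1} g_{\Delta_2} = g_{\Delta_1 \Delta_2}$ (composition of the maps $\omega \mapsto \omega^{\Delta}$), this turns the display into $U_{G_1} U_{G_2} = U_{S_1}\, g_{\Delta_1}(U_{S_2})\, g_{\Delta_1 \Delta_2}$.

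The key lemma (step three) is the transformation law $g_{\Delta_1}(U_{S_2}) = U_{K_{\Delta_1} S_2 K_{\Delta_1}^{-1}}$, which I would establish in two parts. For the conjugation action, apply $g_{\Delta_1}$ to the defining relation $U_{S_2} D_{\bu} U_{S_2}^{\dagger} = D_{S_2 \bu}$ and use $g_{\Delta_1}(D_{\bu}) = D_{K_{\Delta_1}\bu}$; the latter follows from $g_{\Delta_1}(Z)=Z^{\Delta_1}$, $g_{\Delta_1}(X)=X$, and $g_{\Delta_1}(\tau)=\tau^{\Delta_1}$, the last holding cleanly because for odd $p$ one has $\tau = \omega^{(p+1)/2}$, a genuine power of $\omega$. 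The substitution $\bv = K_{\Delta_1}\bu$ then shows $g_{\Delta_1}(U_{S_2})$ conjugates $D_{\bv}$ by $K_{\Delta_1} S_2 K_{\Delta_1}^{-1}$ (which lies in $\SL(2,\Fd)$ since conjugation preserves $\det S_2 = 1$), pinning it down up to a phase. The remaining, and genuinely delicate, point is that the phase is exactly right: I expect this to be \emph{the main obstacle}. It requires feeding the explicit formula \cref{def-US-prime} through $g_{\Delta_1}$ and tracking how the normalization $e^{i\phi}/\sqrt{p}$ transforms, using the quadratic Gauss sum \cref{eq-Gaussiansum} — concretely that $\sum_x \omega^{\Delta x^2} = l(\Delta)\sum_x \omega^{x^2}$ with $l$ the Legendre symbol — so that the image phase coincides with the one prescribed for $U_{K_{\Delta_1} S_2 K_{\Delta_1}^{-1}}$. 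The phase conventions in \cref{def-US-prime} are chosen precisely so that this matches.

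Finally I would assemble the pieces. Using faithfulness, $U_{S_1}\, g_{\Delta_1}(U_{S_2}) = U_{S_1} U_{K_{\Delta_1} S_2 K_{\Delta_1}^{-1}} = U_{S_1 K_{\Delta_1} S_2 K_{\Delta_1}^{-1}}$, hence $U_{G_1} U_{G_2} = U_{S_1 K_{\Delta_1} S_2 K_{\Delta_1}^{-1}}\, g_{\Delta_1 \Delta_2}$. On the other side, $G_1 G_2 = S_1 K_{\Delta_1} S_2 K_{\Delta_2}$ has determinant $\Delta_1 \Delta_2$, and its decomposition is $G_1 G_2 = S'\, K_{\Delta_1 \Delta_2}$ with $S' = S_1 K_{\Delta_1} S_2 K_{\Delta_2} K_{\Delta_1 \Delta_2}^{-1} = S_1 K_{\Delta_1} S_2 K_{\Delta_1}^{-1}$, where I used $K_x K_y = K_{xy}$ from \cref{def-Kx} to get $K_{\Delta_2} K_{\Delta_1 \Delta_2}^{-1} = K_{\Delta_1}^{-1}$. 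Therefore $U_{G_1 G_2} = U_{S'} g_{\Delta_1 \Delta_2}$ equals $U_{G_1} U_{G_2}$, as claimed.
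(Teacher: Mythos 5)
Your proposal is correct and follows essentially the same route as the paper: the same decomposition $G_i = S_i K_{\Delta_i}$, the same key identity $g_{\Delta_1}(U_{S_2}) = U_{K_{\Delta_1} S_2 K_{\Delta_1}^{-1}}$ verified through the Gauss-sum/Legendre-symbol phase bookkeeping in \cref{def-US-prime}, and the same final assembly via faithfulness of the $\SL(2,\Fd)$ representation. The only cosmetic difference is that you first pin down $g_{\Delta_1}(U_{S_2})$ up to a phase by its conjugation action on the displacement operators before checking the phase, whereas the paper reads the identity off directly from the explicit formula.
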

\begin{proof}
Explicitly, let
\be G_1 = \bmp \alpha_1 & \beta_1 \\ \gamma_1 & \delta_1 \emp
\hskip 15mm 
G_2 = \bmp \alpha_2 & \beta_2 \\ \gamma_2 & \delta_2 \emp,
\ee
and let us write them in the same form as \cref{eq-GSK}:
\be G_1 = S_1 K_1 \hskip 15mm G_2 = S_2 K_2. \ee
Note that $K_1$ and $K_2$ are short notations for $K_{\Delta_1}$ and $K_{\Delta_2}$.
Then we can write
\be G_1 G_2 = S_1 K_1 S_2 K_2 = S_1 (K_1 S_2 K_1^{-1}) K_1 K_2, \ee
where
\be K_1 S_2 K_1^{-1} = \bmp \alpha_2 & \beta_2 \Delta_1^{-1}\\ \gamma_2\Delta_1 & \delta_2 \emp, \ee
is symplectic and is therefore represented by a symplectic unitary $U_{K_1 S_2 K_1^{-1}}$. On the other hand, in the g-unitary representation we have
\be \label{eq-UG1UG2}
U_{G_1} U_{G_2} = U_{S_1} g_1 (U_{S_2} g_2) = U_{S_1} g_1(U_{S_2}) g_1 g_2, \ee
where $g_1$ and $g_2$ are short notations for $g_{\Delta_1}$ and $g_{\Delta_2}$.
From \cref{eq-Gaussiansum} it follows that
\be 
g_1(e^{i\phi}/\sqrt{p}) = 
\begin{cases} e^{i\phi}/\sqrt{p}  & \text{if } \Delta_1 \in {\bf Q} \\
-e^{i\phi}/\sqrt{p}  &\text{if } \Delta_1 \in {\bf N} 
\end{cases}
\ee
In addition to the fact that $l(-\beta \Delta_1^{-1}) = l(-\beta) l(\Delta_1^{-1})$, which equals to $ l(-\beta)$ if $\Delta_1 \in {\bf Q}$, and $-l(-\beta)$ if $\Delta_1 \in {\bf N}$, and in view of the representation given in \cref{def-US-prime}, we obtain
\be \label{eq-g1US2}
g_1(U_{S_2}) = U_{K_1 S_2 K_1^{-1}}.
\ee
Given that the representation of SL is faithful, \cref{eq-UG1UG2} can be rewritten as
\be U_{G_1} U_{G_2} =  U_{S_1} U_{K_1 S_2 K_1^{-1}} g_1 g_2 = U_{S_1 (K_1 S_2 K_1^{-1}) K_1 K_2} = U_{G_1 G_2}\ee
as desired.\\
\end{proof}

\begin{remark}
We want to remind that the action of g-unitaries is restricted to vectors in the Hilbert space whose components belong to the cyclotomic field $\fd{Q}(\omega)$. Although this set is dense within the Hilbert space, one should not be tempted to play the usual trick of taking limits because these transformations are not continuous, as will be demonstrated below. The good news, however, is that this restricted subset of the Hilbert space g-unitaries can act on includes all MUB vectors, since the components of the standard basis vectors are in the field, and since every other MUB vector can be obtained from the standard basis by a Clifford unitary, whose entries are also in the field. It also includes all vectors that can be obtained from the standard basis by applying a larger set of transformations known as the Clifford hierarchy, which is large enough for universal quantum computation \cite{Gottesman1999}.

The Hilbert space norm is preserved by g-unitaries only if it is rational, which need not always be the case. This means that g-unitaries are wildly discontinuous. As an example, let us consider the g-unitary consisting of only the Galois automorphism $g_2:\omega \mapsto \omega^2$ in dimension $d=5$, and its action on the (unnormalized) vector
\be
\qs = \bmp \omega^2 + \omega^3 \\ 1 \\ 1 \\ 1 \\ 1 \emp
\hskip 3mm \rightarrow \hskip 3mm 
g_2(\qs) = \bmp \omega^4 + \omega \\ 1 \\ 1 \\ 1 \\ 1 \emp.
\ee
Both of these are real vectors. So they can be approximated by rational vectors, which are left invariant by the g-unitary. However $\qs$ has clearly been moved quite a distance in the Hilbert space by $g_2$.

G-unitaries do, however, preserve a norm that is the product of the scalar product of two vectors with all its $d-2$ Galois conjugates \cite{Janusz1973}. In mathematics this is called the field norm. It is unknown whether the field norm has any physical meaning in this case.
\end{remark}

\subsection{In odd prime power dimensions}\label{sec:gunitaries-primepower}

Now that the concept of g-unitaries has been explained, we move on to the general case of odd prime power dimensions $d=p^n$. In this case, complications come from the fact that the finite field $\fd{F}_d$ of order $d$ no longer contains only ordinary integers like the prime field $\fd{Z}_p$. Materials on finite fields (see \cref{sub:finitefields}) and on the Galoisian Weyl-Heisenberg and Clifford groups in odd prime power dimensions (see \cref{sub:Clifford-primepower}) will be assumed.

We start with the first complication that the definition of a g-unitary $U_G$ in \cref{def-UG} no longer makes sense in general. This is because one cannot define the Galois automorphism $g_x: \omega \mapsto \omega^x$ for all elements $x \in \fd{F}_d$: one can only raise $\omega$ to a power which is an integer, not an abstract element of a finite field. In order to still use \cref{def-UG}, we now have to impose a restriction on $G$, namely its determinant $\Delta$ must belong to the ground field $\fd{F}_p$ (i.e. integers mod $p$). All such $G$ form a subgroup that we will denote by $\GL_p(2,\Fd)$.
\begin{definition}
\gls{GLp} is defined to be the subgroup of $\GL(2,\Fd)$ consisting of $2 \times 2$ matrices whose (non-zero) determinants are in the ground field $\Fp$.
\end{definition}

We can now safely use \cref{def-UG} to define $U_G$ for any $G\in$ $\GL_p(2,\Fd)$. Hence the case of prime power dimensions (where the power $n >1$) differs from the prime dimensional case in that the g-extended Clifford group only includes a proper subgroup of GL. Another difference is that we should now use the more general formula \cref{def-US-primepower} for the unitary representation $U_S$ of a symplectic $S$:
\be
S = \bmp \alpha & \beta \\ \gamma & \delta \emp 
\hskip 3mm \rightarrow \hskip 3mm 
U_S =
\begin{cases} l(\alpha ) \sum_{x\in\fd{F}_d}\omega^{\tr(\alpha \gamma x^2/2)}\ket{\alpha x}\bra{x} &\text{if } \beta = 0 \\
h(\beta)\sum_{x,y \in \fd{F}_d} \omega^{\tr(\frac{\delta x^2 - 2xy + \alpha y^2}{2\beta})} \ket{x}\bra{y} 
&\text{if }\beta \neq 0,
\end{cases}
\ee
where we have defined $h(x) = (-i)^{-n(p+3)/2}l(-x)/\sqrt{d}$.

The main result of this section is the faithfulness of the above representation described by the following theorem.
\begin{theorem}\label{thm-faithfulness}
Let the dimension $d=p^n$ be an odd prime power, where $n$ is odd, and let $G_1$ and $G_2$ be any two elements of $\GL_p(2,\Fd)$. Then it holds that
\be 
U_{G_1} U_{G_2} = U_{G_1 G_2}.
\ee
\end{theorem}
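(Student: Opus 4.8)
The plan is to mirror the odd-prime argument of \cref{lem-faithfulness}, reducing the prime-power statement to a single Galois-conjugation identity and then invoking the known faithfulness of the symplectic representation. First I would write each $G_i$ in the form of \cref{eq-GSK}, namely $G_i = S_i K_{\Delta_i}$ with $\Delta_i = \det G_i \in \Fp$, $S_i \in \SL(2,\Fd)$, and $K_{\Delta_i}$ as in \cref{def-Kx}. Since the $K_x$ are diagonal they multiply by multiplying lower-right entries, so $K_{\Delta_1}K_{\Delta_2} = K_{\Delta_1\Delta_2}$, and conjugation by $K_{\Delta_1}$ preserves $\SL$. This gives the algebraic identity
\be
G_1 G_2 = S_1\,(K_{\Delta_1} S_2 K_{\Delta_1}^{-1})\,K_{\Delta_1\Delta_2},
\ee
in which $S' := S_1 (K_{\Delta_1} S_2 K_{\Delta_1}^{-1})$ lies in $\SL(2,\Fd)$, so by \cref{def-UG} we have $U_{G_1 G_2} = U_{S'} g_{\Delta_1\Delta_2}$.

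Next I would expand the left-hand side exactly as in \cref{eq-UG1UG2}. Using that $g_{\Delta_1}$ is a field automorphism (hence distributes over the matrix--vector action) and that $g_{\Delta_1}g_{\Delta_2} = g_{\Delta_1\Delta_2}$, I obtain $U_{G_1}U_{G_2} = U_{S_1}\, g_{\Delta_1}(U_{S_2})\, g_{\Delta_1\Delta_2}$, where $g_{\Delta_1}(U_{S_2})$ denotes entrywise application of $g_{\Delta_1}$. Comparing with $U_{G_1 G_2}$ and using the homomorphism property of the $\SL(2,\Fd)$ representation described in \cref{sec:Clifford-app}, the whole theorem collapses to the prime-power analogue of \cref{eq-g1US2}:
\be
g_{\Delta_1}(U_{S_2}) = U_{K_{\Delta_1} S_2 K_{\Delta_1}^{-1}}.
\ee
To establish this I would use the explicit form \cref{def-US-primepower} and treat the exponent and the prefactor separately. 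For the phase $\omega^{\tr(\cdots)}$, applying $g_{\Delta_1}$ sends $\omega^{\tr(t)} \mapsto \omega^{\Delta_1\tr(t)} = \omega^{\tr(\Delta_1 t)}$, the last equality being $\Fp$-linearity of the field trace together with $\Delta_1 \in \Fp$. Since $K_{\Delta_1} S_2 K_{\Delta_1}^{-1}$ differs from $S_2$ precisely by $\beta_2 \mapsto \beta_2\Delta_1^{-1}$ and $\gamma_2 \mapsto \gamma_2\Delta_1$, this rescaling of the quadratic form matches the exponent of $U_{K_{\Delta_1}S_2 K_{\Delta_1}^{-1}}$ term by term, and the same bookkeeping handles the $\beta_2 = 0$ branch.

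The main obstacle is the prefactor: I must show $g_{\Delta_1}(h(\beta_2)) = h(\beta_2\Delta_1^{-1})$, i.e. that the normalization $(-i)^{-n(p+3)/2}/\sqrt d$ transforms under $g_{\Delta_1}$ by exactly the quadratic character $l(\Delta_1)$, which then combines with $l(-\beta_2) \in \{\pm 1\}$ (fixed by $g_{\Delta_1}$) through $l(-\beta_2\Delta_1^{-1}) = l(\Delta_1)\,l(-\beta_2)$. This is the step where the simple sum \cref{eq-Gaussiansum} must be replaced by the quadratic Gauss sum over the extension field, $\sum_{x\in\Fd}\omega^{\tr(x^2)}$, whose value (a known $\pm\sqrt d$ or $\pm i\sqrt d$ depending on $p \bmod 4$ and on $n$) lies in $\fd{Q}(\omega)$ and satisfies $g_{\Delta_1}\big(\sum_x \omega^{\tr(\Delta_1 x^2)}\big) = l(\Delta_1)\sum_x\omega^{\tr(x^2)}$ via a rescaling substitution in $x$. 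Writing $h(\beta_2)$ in terms of this Gauss sum then yields the claimed transformation. It is here that $n$ odd is essential: for $\Delta_1\in\Fp^*$ one has $l(\Delta_1) = l_{\Fp}(\Delta_1)^{\,1+p+\cdots+p^{n-1}} = l_{\Fp}(\Delta_1)^{\,n}$, so the extension-field character reproduces the ground-field Legendre symbol only when $n$ is odd; when $n$ is even every ground-field determinant becomes a square in $\Fd$, the character collapses to $1$, and the representation degenerates into the ``almost faithful'' case. Once the prefactor identity is in hand, the two displayed expressions for $U_{G_1}U_{G_2}$ and $U_{G_1 G_2}$ coincide and the proof closes.
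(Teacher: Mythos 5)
Your proposal is correct, and its skeleton coincides with the thesis's proof: the same decomposition $G_i = S_i K_{\Delta_i}$ from \cref{eq-GSK}, the same reduction of the entire theorem to the single conjugation identity $g_{\Delta_1}(U_{S_2}) = U_{K_{\Delta_1}S_2K_{\Delta_1}^{-1}}$ of \cref{eq-g1US2-2}, the same term-by-term matching of exponents via $\Fp$-linearity of the field trace, and the same appeal to exact multiplicativity of the $\SL(2,\Fd)$ representation as a black box. The genuine difference is how the prefactor identity $g_{\Delta_1}(h(\beta)) = h(\Delta_1\beta)$ is established. The thesis argues abstractly: $h(\beta)^2$ is rational while $h(\beta)$ is irrational (this is where $n$ odd enters on that route), so the generator $g_{\theta_p}$ of the cyclic Galois group must flip its sign, giving $g_{\Delta_1}(h(\beta)) = (-1)^u h(\beta)$ for $\Delta_1 = \theta_p^u$; Lemma 1 of \cite{Appleby2009} (again requiring $n$ odd) then identifies $(-1)^u$ with $l(\Delta_1)$. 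You instead recognize $h$ as a rational multiple of the extension-field Gauss sum $\sum_{x\in\Fd}\omega^{\tr(x^2)}$ and read off the Galois action from the rescaling identity $\sum_x\omega^{\tr(\Delta_1 x^2)} = l(\Delta_1)\sum_x\omega^{\tr(x^2)}$. Your route is more self-contained — it derives the character-restriction fact $l(\Delta_1) = l_{\Fp}(\Delta_1)^n$ instead of citing it — at the cost of invoking the evaluation of Gauss sums over $\Fd$ (Hasse--Davenport territory), where the thesis gets by with elementary rationality considerations plus a citation.

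Two points need repair. First, a slip: your displayed covariance equation applies $g_{\Delta_1}$ to the already-twisted sum; it should read $g_{\Delta_1}\bigl(\sum_x\omega^{\tr(x^2)}\bigr) = \sum_x\omega^{\tr(\Delta_1 x^2)} = l(\Delta_1)\sum_x\omega^{\tr(x^2)}$, which is clearly what your ``rescaling substitution'' intends. Relatedly, when you ``write $h(\beta_2)$ in terms of this Gauss sum'' you should state explicitly that the ratio of $h(\beta_2)$ to $l(-\beta_2)\sum_x\omega^{\tr(x^2)}/d$ is rational, hence Galois-fixed; this is automatic (its square is $\pm 1$ and it lies in $\fd{Q}(\omega)$, which contains no $\pm i$, so it equals $\pm 1$), but it is precisely the point where the phase convention in \cref{def-US-primepower} matters. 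Second, and more substantive: you locate the use of ``$n$ odd'' in the wrong place. Your own computation expresses both $g_{\Delta_1}(h(\beta_2))$ and $h(\Delta_1\beta_2)$ as $l(\Delta_1)h(\beta_2)$ with $l$ the $\Fd$-character, so the prefactor identity holds for either parity of $n$: when $n$ is even, $h$ is rational and every $\Delta_1\in\Fp^*$ is a square in $\Fd$, so both sides trivially equal $h(\beta_2)$. The parity hypothesis is actually consumed by the black box that both you and the thesis invoke, namely the exact (rather than up-to-sign) multiplicativity $U_{S_1}U_{S_2} = U_{S_1S_2}$ of the symplectic representation with the stated phase conventions; that is what degrades to $\pm$ when $n$ is even and is the true source of the ``almost faithful'' behaviour noted in the remark following the theorem. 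As narrated, your argument would appear to go through verbatim for even $n$, which would prove too much; flagging where the parity really enters closes that loophole.
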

\begin{proof}
The proof is similar to the proof of \cref{lem-faithfulness}, so we will not repeat the set-up here. The only thing we need to do is to re-verify \cref{eq-g1US2}:
\be \label{eq-g1US2-2}
g_1(U_{S_2}) = U_{K_1 S_2 K_1^{-1}}. \ee

First, recall the basic fact that $\Fp = \{x\in\Fd: x^p = x\}$. Let $\theta$ be a primitive element of $\Fd$. One can then show that
\be \theta_p \equiv \theta^{1+p+\cdots+p^{n-1}}\ee
belongs to $\Fp$ by verifying that $\theta_p^p = \theta_p$, and that $\theta_p$ is a primitive element of $\Fp$. We can then write $\Delta_1 = \theta_p^u$, and $g_1 = g_{\theta_p}^u$, for some integer $0\le u \le p-2$. Note that $\Delta_1$ is a quadratic residue in $\Fp$ if and only if $u$ is even.

Since $h(\beta)^2$ is rational while $h(\beta)$ is irrational, and since $g_{\theta_p}$ generates the Galois group, we must have
\be g_{\theta_p}\left(h(\beta)\right)= - h(\beta), \ee
which implies
\be g_{\Delta_1}\left(h(\beta)\right)= (-1)^u h(\beta). \ee
As $n$ is odd, we can use Lemma 1 in \cite{Appleby2009} to write
\be g_{\Delta_1}\left(h(\beta)\right) = h(\Delta_1\beta). \ee
Since $l(\alpha)$ is either 1, -1, or 0, and is therefore rational, it is invariant under $g_{\Delta_1}$. Thus
\be
g_{\Delta_1}U_{S_2} =
\begin{cases} l(\alpha ) \sum_{x\in\fd{F}_d}\omega^{\tr(\Delta_1\alpha \gamma x^2/2)}\ket{\alpha x}\bra{x} &\text{if } \beta = 0 \\
h(\Delta_1\beta)\sum_{x,y \in \fd{F}_d} \omega^{\tr(\frac{\Delta_1(\delta x^2 - 2xy + \alpha y^2)}{2\beta})} \ket{x}\bra{y} 
&\text{if }\beta \neq 0,
\end{cases}
\ee
which validates \cref{eq-g1US2-2}.\\
\end{proof}

\begin{remark}
It can be seen from the proof above that when $n$ is even, we have $U_{G_1} U_{G_2} = \pm U_{G_1 G_2}$. The representation in this case is therefore ``close to faithful'' in a sense.
\end{remark}

\section{Arithmetic of g-unitaries}\label{sec:arithmetic}

This section provides some basic arithmetic of g-unitaries. The derivations are straightforward, but they might help readers better understand how to handle these non-linear g-unitary operators. In any case, it would be useful to have a list of formulae that we can readily use in later calculations. Throughout this section we will always use the notation $U_G$ to refer to the g-unitary representing an element $G \in \GL_p(2,\Fd)$
\be
G = \bmp \alpha & \beta \\ \gamma & \delta \emp
\hskip 10mm \alpha, \beta , \gamma,\delta \in \F_d
\hskip 10mm \Delta \equiv \det(G) \in \fd{F}_p \backslash \{0\},
\ee
and decompose $U_G$ into a Galois automorphism $g_{\Delta}$ followed by a symplectic unitary $U_S$
\be
U_G = U_S g_{\Delta},
\ee
where $S=GK_{\Delta^{-1}}$ is an element of $\SL(2,\Fd)$, with $K_{\Delta}$ being a matrix of determinant $\Delta$ as defined in (\ref{def-Kx}). Where there is no ambiguity, we will drop the subscript $\Delta$ and write $g$ for the Galois automorphism for short. It should also be noted that if we omit the parentheses specifying what $g$ acts on, it should be understood that $g$ acts on everything to its right in the expression, for example:
\be
g_1 A g_2 B = g_1 ( A g_2 (B)) = g_1(A) g_1 g_2 (B).
\ee

\subsection{Action on vectors and matrices}
We know how a Galois automorphism $g_{\Delta}$ transforms a number in the cyclotomic field $\fd{Q}(\omega)$: it replaces every $p$-th root of unity $\omega$ by $\omega^{\Delta}$ in that number's decomposition into powers of $\omega$. For a matrix (or a vector), we require all its entries (or components) in the standard basis to be in the cyclotomic field in order to define the action of a Galois automorphism on it. When this condition is met, the action is defined entry-wise:
\be
(g(A))_{j,k} = g(A_{j,k}).
\ee
Accordingly, the action of a g-unitary on a matrix (or a vector) $A$ is:
\be
U_G A  = U_S g(A).
\ee

\subsection{Composition and power}
Let $U_{G_1}$ and $U_{G_2}$ be two g-unitaries. We want to find out the resulting action of applying them one after another. Since these are operators, it is the safest practice to apply them to an arbitrary matrix $A$
\be 
\begin{split}
U_{G_1} U_{G_2} A & =  U_{S_1} g_1 U_{S_2} g_2 A \\
& = U_{S_1} g_1 ( U_{S_2} g_2 (A)) \\
& = U_{S_1} g_1(U_{S_2}) g_1 (g_2 (A)),
\end{split}
\ee
and remove $A$ in the end to get
\be 
U_{G_1} U_{G_2} = U_{S_1} g_1(U_{S_2}) g_1 g_2.
\ee
Similarly, the composition of more than two g-unitaries is given by
\be
U_{G_1} U_{G_2} \cdots U_{G_k} = U_{S_1}g_1(U_{S_2}) g_1 g_2(U_{S_3}) \cdots g_1 g_2...g_k.
\ee
Setting all g-unitaries identical in the previous equation results in the power formula
\be
U_G^k = U_S g(U_S)\cdots g^{k-1}(U_S) g^k.
\ee
If the dimension $d$ is a prime number, we have (by Fermat's little theorem) 
\be
\Delta^{d-1} = 1 \text{  (mod $d$)}.
\ee
This means $g_{\Delta}^{d-1}:\omega \mapsto \omega^{\Delta^{d-1}} = \omega$ is the identity mapping, and therefore
\be
U_G^{d-1} = U_S g(U_S)\cdots g^{d-2}(U_S)
\ee
is an ordinary unitary.

\subsection{The inverse}
The inverse of a g-unitary $U_G = U_S g$ is given by
\be \label{eq-UGinverse}
U_G^{-1} = g^{-1}U_S^{-1} = g^{-1}(U_S^{-1})g^{-1},
\ee
where the inverse of a Galois automorphism $g_{\Delta}$ is given by
\be \label{eq-ginverse}
g_{\Delta}^{-1} = g_{\Delta^{-1}},
\ee
so that $g_{\Delta}g_{\Delta}^{-1} = g_{\Delta}^{-1}g_{\Delta}$ is the identity mapping.
One can prove (\ref{eq-UGinverse}) by verifying that $U_G^{-1} U_G = U_G U_G^{-1} = \eye$. 

When the power $n$ is odd, the representation of $U_G$ is faithful (see \cref{thm-faithfulness}) and another expression for the inverse of $U_G$ is
\be \label{eq-UGinverse2}
U_G^{-1} = U_{G^{-1}},
\ee
because $U_G U_{G^{-1}} = U_{G G^{-1}} = \eye$.

\subsection{Conjugate transposition and the adjoint}
Since the action of a Galois automorphism $g$ on a vector or matrix $A$ is defined entry-wise, it commutes with the transposition on $A$. Meanwhile, the Galois group for the cyclotomic extension is abelian, therefore $g$ also commutes with all other Galois automorphisms in the group including the complex conjugation. This means
\be \label{eq-gdagger}
(g(A))^{\dagger} = g(A^{\dagger}).
\ee
The conjugate transposition of $U_G A$, where $A$ is a matrix or a vector over the cyclotomic field $\fd{Q}(\omega)$, can be calculated as follows:
\be\label{eq-UGconjtrans} 
\begin{split}
(U_G A)^{\dagger} &= (U_S g(A))^{\dagger}\\
&= (g(A))^{\dagger}U_S^{\dagger} \\
&= g(A^{\dagger})U_S^{\dagger},
\end{split}
\ee
or in Dirac notation, when applied to state vectors:
\be\label{eq-UGconjtrans2} 
(U_G \qs)^{\dagger} = g(\bra{\psi}) U_S^{\dagger}.
\ee
\begin{definition} As in \cite{Appleby2013}, we define the adjoint of a g-unitary $U_G$ to be the unique operator $U_G^{\dagger}$ such that
\be\label{def-UGdagger}
\langle U_G x, y \rangle = g(\langle x, U_G^{\dagger} y \rangle) \hskip 10mm \forall x,y \in \fd{Q}(\omega)^d.
\ee
\end{definition}
This is a natural generalization from the case of anti-linear operators \cite{Appleby2013}, where their adjoints are defined so that  $\langle L x, y \rangle = (\langle x, L^{\dagger} y \rangle)^*$, with $^*$ denoting complex conjugation. It turns out that the adjoint of a g-unitary $U_G$ is exactly its inverse, since
\be
\begin{split}
g \langle x,U_G^{-1} y \rangle &= g \langle x, g^{-1}U_S^{\dagger}y \rangle \\
&= g \left(\bra{x}\right) g (g^{-1}U_S^{\dagger}\ket{y}) \\
&= g \left(\bra{x}\right) U_S^{\dagger} \ket{y} \\
&= \langle U_G x, y \rangle.
\end{split}
\ee
If $d=p^n$ and $n$ is odd, we further have
\be\label{eq-UGdagger}
U_G^{\dagger} = U_G^{-1} = U_{G^{-1}}.
\ee

\subsection{Conjugate action on matrices and displacement operators}
Let $A$ be an arbitrary $d \times d$ matrix over the cyclotomic field $\fd{Q}(\omega)$. Its conjugation by a g-unitary $U_G$ is given by 
\be
U_G A U_G^{-1} = U_S g(A) U_S^{-1},
\ee
which can be verified straightforwardly.

Next, we are going to calculate the conjugation of a displacement operator $\Dp$ by $U_G$. We will show that
\be
U_G  \Dp U_G^{-1} = D_{G\p}.
\ee
We first need to know how a Galois automorphism acts on $\Dp$. From the explicit form of the shift and the phase operator, one can see that
\be
g_{\Delta}(X) = X \hskip 15mm g_{\Delta}(Z) = Z^{\Delta},
\ee
which implies
\be
\begin{split}
g_{\Delta}(\Dp) &= g(\omega^{\frac{p_1 p_2}{2}})g(X^{p_1})g(Z^{p_2}) \\
& = \omega^\frac{\Delta p_1 p_2}{2}X^{p_1}Z^{\Delta p_2}\\
& = D_{K_{\Delta} \p}.
\end{split}
\ee
Therefore
\be
\begin{split}
U_G \Dp U_G^{-1} &= U_S g \Dp g^{-1} U_S^{-1}\\
& = U_S g(\Dp) g (g^{-1} U_S^{-1})\\
& = U_S D_{K_{\Delta}\p} U_S^{-1}\\
& = D_{SK_{\Delta}\p}\\
& = D_{G\p}.
\end{split}
\ee

Conjugations of phase point operators $A_{\p}$ used in the discrete Wigner function can be similarly calculated. Recall their definition
\be \label{def-Ap}
A_{\p} = D_{\p} A_0 D_{\p}^{\dagger},
\ee
where
\be \label{def-A0}
A_0 = \frac{1}{d}\sum_{\p} D_{\p}.
\ee
It is clear that $A_0$ is left invariant by the conjugation:
\be U_G A_0 U_G^{-1} = \frac{1}{d}\sum_{\p} D_{G\p} = A_0. \ee
Other phase point operators $A_{\p}$ are transformed into
\be \label{eq-UGAp} \begin{split}
U_G A_{\p} U_G^{-1} &=  (U_G D_{\p} U_G^{-1})( U_G A_0 U_G^{-1})( U_G D_{\p}^{\dagger} U_G^{-1} )\\
& = D_{G\p} A_0 D_{G\p}^{\dagger} \\
& = A_{G\p}.
\end{split} \ee

\section{Geometric interpretation}\label{sec:geometric}

\subsection{Complementarity polytopes}\label{sub:polytopes}
In this section we would like to introduce a class of geometrical objects closely related to \MUBs ~that can help explain the role of g-unitaries from a geometrical perspective. These are called complementarity polytopes and they have been well studied in \cite{Bengtsson2005}. The space under consideration here is called the Bloch space, which will be defined shortly.

We start from the set of $d \times d$ Hermitian matrices of unit trace. This set includes all quantum states represented by density matrices, and it also contains matrices that are not quantum states, namely those that are not positive semidefinite. For any element $H$ in the set, we define the Hermitian traceless matrix $\dot{H}$ by
\be \label{def-dot}
\dot{H} = H - \eye/d.
\ee
We obtain the set of all traceless Hermitian matrices. This new set is clearly closed under addition and multiplication by real scalars. It therefore forms a real vector space, with the zero element (the origin) corresponding to the maximally mixed state $\eye/d$, as can be seen from \cref{def-dot}.

\begin{definition} The aforementioned vector space will be referred to as the Bloch space. A point (which can also be considered as a vector) $\dot{H}$ in the Bloch space corresponds to a Hermitian operator $H$ of unit trace on the Hilbert space via equation \cref{def-dot}. The density operators on the Hilbert space form a convex body in the Bloch space, which will be called the Bloch body.
\end{definition}

Considered as a real vector space, the Bloch space has dimensionality equal to the number of real parameters needed to describe a $d$ by $d$ traceless Hermitian matrix, which can be worked out to be $d^2-1$. The Bloch body, restricted by positive semidefiniteness, is also $(d^2-1)$-dimensional. For example, for the qubit case ($d=2$), the Bloch body of quantum states is a 3-dimensional ball, which is commonly called the Bloch ball. In higher dimensions, the Bloch body is no longer a ball.

Next, we use the standard Hilbert-Schmidt inner product to define a dot product for two vectors $\dot{H}_1$ and $\dot{H}_2$ in the Bloch space by
\be \label{def-dotproduct}
\dot{H}_1 \cdot \dot{H}_2 \equiv \frac{1}{2}\Tr(\dot{H}_1 \dot{H}_2) = \frac{1}{2}\left( \Tr(H_1 H_2)-\frac{1}{d}\right).
\ee
The Bloch space therefore can be thought of as a $(d^2-1)$-dimensional Euclidean space, with the Euclidean distance between any two points $\dot{H}_1$ and $\dot{H}_2$ given by
\be \label{def-distance}
D(\dot{H}_1,\dot{H}_2) =   \sqrt{\frac{1}{2}\Tr\left(\dot{H}_1-\dot{H}_2\right)^2} = \sqrt{\frac{1}{2}\Tr\left(H_1-H_2\right)^2} \hskip 2mm .
\ee

\begin{remark} Pure quantum states are the extreme points of the Bloch body. They all lie on the surface of a sphere of radius $\sqrt{(d-1)/2d}$ centered at the origin. However, not every point on this sphere corresponds to a quantum state, except when $d=2$, in which case the sphere is called the Bloch sphere and it consists entirely of pure states, while the interior of the Bloch ball consists of all mixed states.

We also want to note that if $H_1$ and $H_2$ are density matrices of two MUB vectors in two distinct bases, then the vectors $\dot{H}_1$ and $\dot{H}_2$ in the Bloch space are orthogonal to each other, as their dot product can be seen to vanish.
\end{remark}

	In the Bloch space, one can construct a regular $(d^2-1)$-dimensional simplex, consisting of $d^2$ vertices labeled by $\dot{A}_{\bu}$, where the index $\bu = (u_1,u_2) \in \fd{F}_d^2$ can take $d^2$ values. Since the dihedral angle of a regular $n$-simplex is $\cos^{-1}(n^{-1})$ (see \cite{Krasnodebski1971} for the first general proof, or \cite{Parks2002} for a more elementary calculation), the vertices $\dot{A}_{\bu}$ can be chosen so that
\be
\dot{A}_{\bu}\cdot\dot{A}_{\bv} = \begin{cases}
\frac{d^2-1}{2d} &\text{if } \bu = \bv \\
\frac{-1}{2d} &\text{if } \bu \ne \bv, \end{cases}
\ee
or equivalently in the Hilbert space
\be \label{eq-trAuAv}
\Tr(A_{\bu}A_{\bv}) = \begin{cases}
~d &\text{if } \bu = \bv \\
~0 &\text{if } \bu \ne \bv. \end{cases}
\ee
Eventually, $A_{\bu}$ will be identified with Wootters' phase point operators for odd prime dimensions \cite{Wootters1987}, or kernel operators \cite{Klimov2005} and displaced parity operators \cite{Vourdas2005} in the more general case of odd prime power dimensions. But for now we simply use them to define a regular simplex in the Bloch space. We center the simplex at the origin by imposing the condition
\be \label{eq-sumA}
\sum_{\bu}\dot{A}_{\bu} = 0 \hskip 5mm \Leftrightarrow \hskip 5mm 
\sum_{\bu}A_{\bu} = d\eye.\ee
Each facet (a hyperplane containing $d^2-1$ vertices) of the simplex that does not contain vertex $\dot{A}_{\bu}$ for some $\bu$ consists of all points $\dot{M}$ such that $\Tr(A_{\bu}M)=0$.

What we have so far is a regular simplex centered at the origin created from $d^2$ vertices in the Bloch space. We are now going to impose a combinatorial structure on it, using what is called a finite affine plane. From now on, we will assume that the dimension $d = p^n$ is a prime power, because that is when affine planes are guaranteed to exist.

\begin{definition}A finite affine plane of order $d$ consists of $d^2$ points grouped into subsets that are called lines. The grouping is done in such a way that any two points belong to a unique line, and for every point not belonging to a line there exists a parallel line (two lines are called parallel if they are disjoint subsets of points) containing that point.
\end{definition}

The existence or non-existence of affine planes in general is an open problem. However, there are some cases for which it has been solved. Bruck-Ryser theorem states that if $d=1$ or 2 mod 4, and $d$ is not the sum of two squares, then affine planes of order $d$ cannot exist \cite{Bruck1949}. The existence for $d=10$ has also been ruled out \cite{Lam1989,Lam1991}. Here, we are interested in the case $d=p^n$ is a prime power, when it is known that affine planes exist and that they have the following properties \cite{Bennett1995}:
\begin{enumerate}
\item Each line contains exactly $d$ points.
\item Each point belongs to exactly $d+1$ lines.
\item There are $d+1$ sets of $d$ parallel lines, which totals to $d(d+1)$ lines.
\end{enumerate}

The sets of parallel lines in the last property mentioned above will be called pencils of lines, or just pencils for short. An easy way to visualize a finite affine plane is by associating it with the vector space $\fd{F}_d^2$ (see \cref{fig-affine9} for an example when $d=3$). Although this coordinatization is not required for the construction of affine planes in general, we choose to stick with it here.
\begin{figure}[h]
\centering
\includegraphics[scale = 0.3]{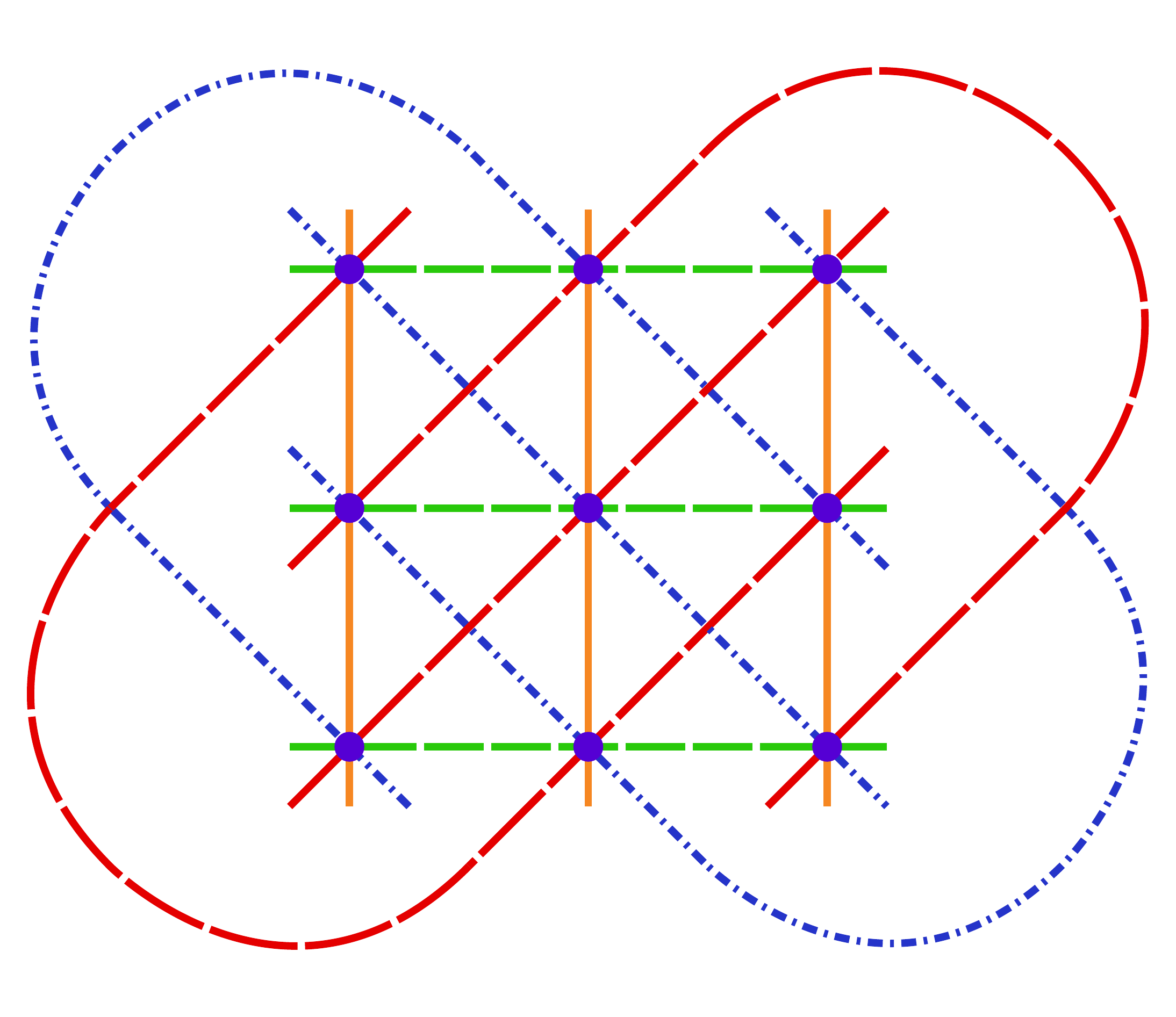}
\vskip 2mm
\parbox{12cm}{\caption[An affine plane of order 3]{Example of an affine plane of order 3 with 9 points and 12 lines. The points are positioned by their coordinates in $\fd{Z}_3^2$. The lines are grouped into 4 pencils, each containing 3 parallel lines marked by the same color. The orange pencil (solid lines) is labeled by $\infty$ since they have a ``slope'' of $\infty$.}
\label{fig-affine9}}
\end{figure}
Particularly, we will use the index $b \in \{\fd{F}_d \cup \infty\}$ to label the $d+1$ pencils, where the symbol $\infty$ refers to the pencil of ``vertical'' lines. Lines in the $b$-th pencil are labeled by the index $v \in \Fd$, and are denoted by $l_v^b$ (the reason behind this labeling is that the line $l_v^b$ will later be made to correspond to the $v$-th MUB vector in the $b$-th basis). The affine plane's points are labeled by $\bu = (u_1,u_2) \in \fd{F}_d^2$.

We associate each line in the affine plane with an operator $P_v^{(b)}$ defined to be the average of all phase point operators $A_{\bu}$ associated with points on that line:
\be
P_v^{(b)} =\frac{1}{d} \sum_{\bu \in l_v^b} A_{\bu}.
\ee
One can verify that $P_v^{(b)}$ is a Hermitian matrix of unit trace, so it corresponds to a point in the Bloch space. In view of \cref{eq-sumA} and the fact that there are $d+1$ lines intersecting at each point, we can invert the above equation to get
\be \label{def-A}
A_{\bu} = \sum_{l_v^b \ni \bu} P_v^{(b)} - \eye,
\ee
which contains a summation of all operators $P_v^{(b)}$ representing lines going through the point $\bu$ represented by $A_{\bu}$.
\begin{figure}[h]
\centering
\includegraphics[scale = 0.3]{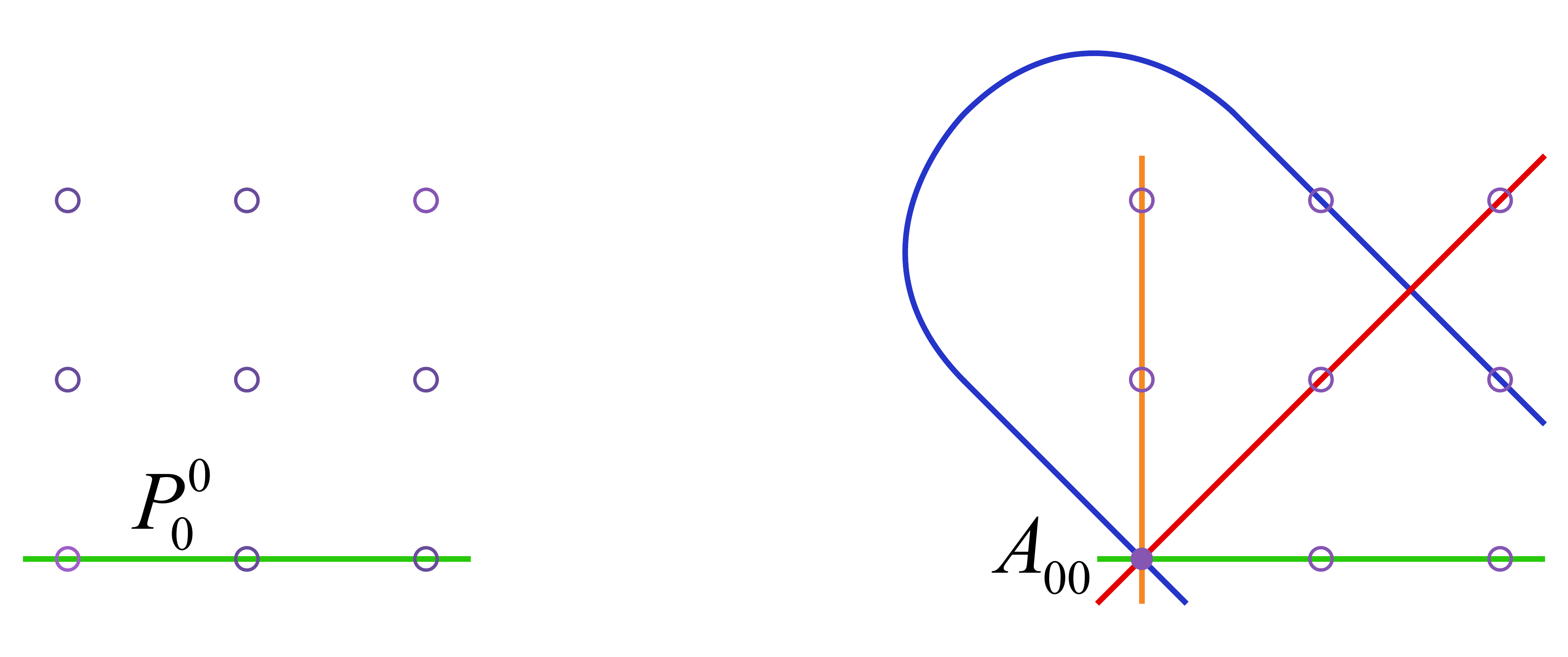}
\vskip 2mm
\parbox{12cm}{\caption[Phase point operators and line operators on an affine plane for $d=3$]{Illustrations of line operators and phase point operators for $d=3$. To the left, the operator $P_0^{(0)}$ corresponding to the line is given by the equation $P_0^{(0)} = (A_{00} + A_{10} + A_{20})/3$. To the right, the intersection point of four lines corresponds to the phase point operator $A_{00} = P_0^{(0)} + P_0^{(1)} + P_0^{(2)} + P_0^{(\infty)} -\eye$. Note that both the lines and the points in the affine plane correspond to points in the Bloch space.}
\label{fig-affine9-2}}
\end{figure}
Using the properties of the affine plane, we can then derive
\be \label{eq-TrP1P2}
\Tr\left(P_{v_1}^{(b_1)} P_{v_2}^{(b_2)}\right) = \begin{cases}
1 & \text{if $b_1 = b_2$ and $v_1 = v_2$  (identical lines)} \\
0 & \text{if $b_1 = b_2$ and $v_1 \ne v_2$  (parallel lines, no common point)} \\
\frac{1}{d} & \text{if $b_1 \ne b_2$ (intersecting lines, one common point).} \\
\end{cases}
\ee
This means that any pencil of $d$ parallel lines (in the affine plane) forms a regular simplex spanning a $(d-1)$-dimensional hyperplane in the Bloch space. There are $d+1$ of these hyperplanes. They are orthogonal to each other and they span the whole Bloch space. 

If we take the convex hull of $d(d+1)$ vertices $\dot{P}_v^{(b)}$ in the Bloch space, we obtain what is called a complementarity polytope \cite{Bengtsson2005}. One can construct complementarity polytopes in all Euclidean spaces of dimension $d^2-1$ for an arbitrary $d$ by splitting the space into $d+1$ totally orthogonal $(d-1)$-dimensional subspaces, placing a regular simplex centered at the origin in each subspace, and taking the convex hull. The special advantage when $d$ is a prime power is that we are able to make use of the combinatorics of an affine plane to inscribe it in a regular simplex. In fact, the phase point operator simplex (whose vertices are the $\dot{A}_{\bu}$) is not the only simplex we can inscribe the complementarity polytope into. It is just one out of $d^{d-1}$ possibilities. Let us consider a vector $\vec{v} = (v_0,v_1,...,v_{d-1},v_{\infty})$ with $d+1$ components $v_b$, which are elements of $\fd{F}_d$. There are $d^{d+1}$ such vectors. For each $\vec{v}$ we define a generalized phase point operator
\be \label{def-generalizedA}
A_{\vec{v}} \equiv \sum_{b} P_{v_b}^{(b)} - \eye.
\ee
The difference between this equation and \cref{def-A} is that here no assumption is made about the lines labeled by $b$ and $v_b$, whereas the lines in \cref{def-A} are required to intersect at the point $\bu$. If we do this for each $\vec{v}$, we obtain $d^{d+1}$ generalized phase point operators $A_{\vec{v}}$. They clearly have unit trace, and for any vectors $\vec{v}$ and $\vec{v}'$ they satisfy
\be
\Tr\left(A_{\vec{v}} A_{\vec{v}'}\right) = \sum_b \Tr\left(P_{v_b}^{(b)} P_{v'_b}^{(b)}\right) -1.
\ee
The $d^{d+1}$ vectors $\vec{v}$ can be grouped into $d^{d-1}$ groups, each containing $d^2$ vectors, in such a way that any two vectors $\vec{v}$ and $\vec{v}'$ within a group agree at exactly one component (this is a non-trivial combinatorial problem, see section 4 in \cite{Bengtsson2005M} for how to do it). One can see from the previous equation that for any $\vec{v}$ and $\vec{v}'$ belonging to the same group
\be
\Tr(A_{\vec{v}}A_{\vec{v}'}) = \begin{cases}
~d &\text{if } \vec{v} = \vec{v}' \\
~0 &\text{if }  \vec{v} \ne \vec{v}'. \end{cases}
\ee
So the points $\dot{A}_{\vec{v}}$ in each group form the vertices of a regular $(d^2-1)$-simplex, much like the phase point operators $\dot{A}_{\bu}$ do in \cref{eq-trAuAv}, resulting in $d^{d-1}$ simplices. Moreover, we have
\be
\Tr\left(A_{\vec{v}} P_v^{(b)}\right) = \begin{cases}
1 &\text{if the $b$-th component of $\vec{v}$ is $v$}\\
0 &\text{otherwise.}\end{cases}
\ee
This means for each $A_{\vec{v}}$ and two associated parallel hyperplanes defined by the two equations $\Tr\left(A_{\vec{v}} H\right) = 0$ and $\Tr\left(A_{\vec{v}} H\right) =1$, every vertex $\dot{P}_v^{(b)}$ of the complementarity polytope must lie on one of the two hyperplanes. The complementarity polytope is therefore confined with between all such pairs of hyperplanes, hence circumscribed by each of the $d^{d-1}$ simplices. We have an illustration for $d=2$ (see \cref{fig-Bloch}), as the Bloch space in this case is 3-dimensional.
\begin{figure}[h]
\centering
\includegraphics[scale = 1]{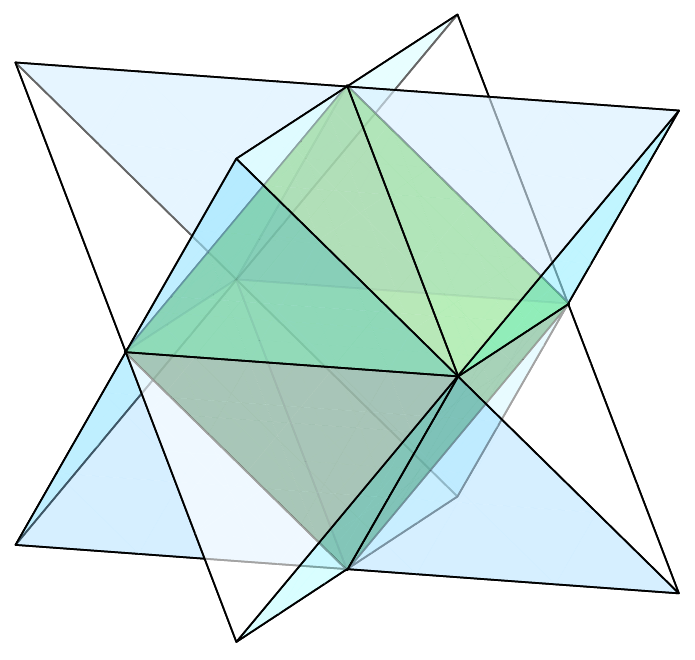}
\vskip 4mm
\parbox{12cm}{\caption[The complementarity polytope for $d=2$]{When $d=2$, the Bloch space has $2^2-1=3$ dimensions and the Bloch body consisting of all quantum states is a 3-dimensional ball. The complementarity polytope (green) is inscribed in $2^{2-1} = 2$ simplices. It is a regular octahedron with $2\cdot 3$ vertices and $2^3$ facets.}
\label{fig-Bloch}}
\end{figure}

\subsection{The symmetry group of the complementarity polytope}\label{sub:symmetrygroup}
Recall from the geometrical meaning of \cref{eq-TrP1P2} that a complementarity polytope consists of $d+1$ simplices, each of which spans a $(d-1)$-dimensional hyperplane, and that these hyperplanes are all orthogonal to each other. The symmetry group of the complementarity polytope is therefore 
\be \label{eq-symgrouphuge}
 S_{d+1} \times S_d \times S_d \times \cdots \times S_d \subset O(d^2-1),\ee
where $S_d$ is the group of all permutations of the vertices of a $(d-1)$-dimensional simplex, and $S_{d+1}$ is the group of all permutations of $d+1$ orthogonal $(d-1)$-dimensional planes. However, we want to impose an extra condition on the complementarity polytope, namely that its vertices must correspond to pure quantum states. This means the polytope is inscribed in the convex Bloch body of quantum states. This might sound like a hard task, but when $d$ is a prime power, a full set of MUBs ~exists and this could be done by letting $P_v^{(b)}$ be the density matrix of the standard MUB vector $r$ in the MUB basis $z$:
\be \label{eq-PMUB}
P_v^{(b)} = \ketbra{e_v^{(b)}}.
\ee
From Wigner's theorem \cite{Wigner1931}, we know the symmetry group of the Bloch body of quantum states, ignoring anti-unitaries for the moment, is
\be \label{eq-symgroupunitary}
U(d)/U(1) \subset SO(d^2-1).
\ee
Since Clifford unitaries simply transform one MUB vector into another, we deduce that the intersection of the two symmetry groups in \cref{eq-symgrouphuge} and \cref{eq-symgroupunitary} contains the Clifford group (ignoring overall phases), which, when $d$ is a prime power, can be written as
\be \label{eq-symgroupClifford}
\SL(2,\Fd) \ltimes \Fd^2 .
\ee
If we include anti-unitaries, this will lead to the extended Clifford group \cite{Appleby2009}
\be \label{eq-symgroupextendedClifford}
\ESL(2,\Fd) \ltimes \Fd^2 .
\ee

On the other hand, from the combinatorics of an affine plane, we know that the complementarity polytope can be inscribed into a regular simplex, whose symmetry group is $S_{d^2}$. The intersection of the symmetry group of the polytope and that of the simplex can be identified by noticing that it must take points to points and lines to lines on the affine plane. Its elements are therefore affine transformations, and the group is isomorphic to 
\be \label{eq-symgroupaffine}
\GL(2,\Fd)\ltimes \Fd^2. 
\ee
This is not quite the same as the result in \cref{eq-symgroupextendedClifford} that we obtain by considering symmetries of the complementarity polytope that also preserve the inscribed body of quantum states. When $d$ is a prime ($d>3$), it is g-unitaries that provide all of the extra symmetries. When $d$ is a prime power, g-unitaries provide only some of the extra transformations, namely those belonging to the subgroup $\GL_p(2,\Fd)$ of $\GL(2,\Fd)$, as a g-unitary $U_G$ is only defined for $G$ with $\det(G) \in \fd{F}_p$.

Thus, we come to the conclusion on the geometrical interpretation of g-unitaries: when their action is restricted to the standard MUB vectors, they are simply rotations in the Bloch space, just like ordinary unitaries. However, unlike unitary operators, which are rotations on the whole Bloch body, first of all g-unitaries do not apply to all quantum states. Then, even on the domain that they do apply, namely $\fd{Q}(\omega)^d$, they are not rotations on the whole domain due to their wildly discontinuous nature. This interpretation therefore has a very limited scope.\\

\begin{remark} Further to the discussion at the end of \cref{sec:MUBs}, we want to note that in odd prime dimensions the projective group PGL is a subgroup of the group $S_{d+1}$ in \cref{eq-symgrouphuge}, which permutes the $(d-1)$-dimensional hyperplanes corresponding to the bases of the MUB. Let us recall that elements of GL whose determinant is a quadratic residue do not add anything to PGL beyond the contribution from SL, as seen from \cref{eq-GxS}. When $d = 3$ mod $4$, element $-1$ of the field $\Fd$ is a quadratic non-residue, so the full set of projective transformations can be obtained from the extended symplectic group represented by unitary and anti-unitary operators. When $d = 1$ mod $4$, $-1$ is a quadratic residue, and we need to include general g-unitaries to obtain the full set of projective transformations. In prime power dimension $d=p^n$, we only obtain all the projective transformations from g-unitaries when $n$ is odd.

Lastly, we want to note that when the vertices of the complementarity polytope are associated with the standard MUB vectors as in \cref{eq-PMUB}, there is a special choice for the phase point operators that takes a rather simple form. When $d$ is odd, one of them is the parity operator
\be \label{def-parity}
A_0 = \frac{1}{d}\sum_{\bv \in \Fd^2} D_{\bv},
\ee
and the others are simply obtained by acting on the parity operator with the WH group's displacement operators
\be
A_{\bu} = D_{\bu} A_0 D_{\bu}^{-1} = \sum_{\bv \in \Fd^2} \omega^{\Omega(\bu,\bv)} D_{\bv}.
\ee

\end{remark}

\section{Simulating g-unitaries using unitaries}\label{sec:simulation}
Quantum mechanics teaches us that physical transformations must be unitary: a non-unitary operator cannot be implemented in any physical system. But that is not to say that it cannot be simulated in a physical system. Simulations of unphysical transformations can be useful in fundamental studies in physics, and have in fact been experimentally realized, for example to study Majorana's equation where anti-unitaries are involved \cite{Casanova2011,Zhang2014}. The key technique used in these papers is to separate the real and the imaginary parts of a quantum state and embed them in a larger Hilbert space. It is worth going through their simulation of an anti-unitary for a qubit, as it will help make the idea transparent. Then, we will propose a scheme for simulating g-unitaries in any odd prime power dimension.

Let $\bar{U}$ be an anti-unitary, which can always be written as
\be\bar{U} = U K,\ee
where $U$ is an ordinary $2 \times 2$ unitary and $K$ stands for complex conjugation. Let
\be \qs = \bmp a_1 + i b_1 \\ a_2 + i b_2 \emp \ee
be a quantum state expressed as a vector in the standard basis, where $a_1,a_2, b_1, b_2$ are real numbers. The action of $\bar{U}$ on $\qs$ is then
\be \bar{U} \qs = U K \qs = U \bmp a_1 - i b_1 \\ a_2 - i b_2 \emp  
= U \qs_{\text{re}} -iU\qs_{\text{im}}. \ee
If we embed $\qs$ into a Hilbert space twice as large using the following mapping $\mathcal{T}$
\be
\bmp a_1 + i b_1 \\ a_2 + i b_2 \emp
\hskip 3mm \xrightarrow{\mathcal{T}} \hskip 3mm 
\bmp a_1\\ a_2\\ b_1 \\ b_2 \emp,
\ee
applying the unitary operator $(\sigma_z \otimes U)$, where $\sigma_z$ is a Pauli matrix, to obtain
\be
\left(
\begin{array}{cccc}
\off & \off & \off & \off \\
& \twotofour{$U$} & & \twotofour{\off$0$} \\
& &  & \\
& \twotofour{$0$} & & \twotofour{$-U$}
\end{array}
\right)
\left(
\begin{array}{c}
a_1 \\
a_2 \\
b_1 \\
b_2
\end{array} 
\right)
=
\left(
\begin{array}{c}
\multirow{2}{*}{\off$U\qs_{\text{re}}$} \\
\relax \\
\multirow{2}{*}{$-U\qs_{\text{im}}$} \\
\relax
\end{array} 
\right),
\ee
and finally applying the inverse of the embedding mapping to go back to the original Hilbert space
\be
\left(
\begin{array}{c}
\multirow{2}{*}{\off$U\qs_{\text{re}}$} \\
\relax \\
\multirow{2}{*}{$-U\qs_{\text{im}}$} \\
\relax
\end{array} 
\right)
\hskip 3mm \xrightarrow{\mathcal{T}^{-1}} \hskip 3mm 
U \qs_{\text{re}} -iU\qs_{\text{im}},\ee
we will get the same result as if we have used $\bar{U}$ to act on $\qs$. Therefore the unphysical anti-unitary $\bar{U}$ can be effectively simulated by the unitary operator $(\sigma_z \otimes U)$ in a larger embedding space.

Let us now make the generalization to g-unitaries. Let $d=p^n$ be an odd prime power dimension. Let $G$ be an element of $\GL_p(2,\Fd)$ and let $U_G$ be its corresponding g-unitary written in the usual form
\be
U_G = U_S g_{\Delta},
\ee
where $U_S$ is unitary, and $\Delta = \det(G) \in \Fp$. The Galois automorphism $g_{\Delta}$ can be thought of as a permutation on the set $\{1,\omega,...,\omega^{p-1} \}$ leaving 1 invariant and mapping $\omega^k$ to $\omega^{\Delta k}$. We will denote this $p \times p$ permutation matrix by $\sigma_{\Delta}$. For any vector $\qs \in \fd{Q}(\omega)^d$, we can embed it into a $dp$-dimensional vector space over the rational field $\fd{Q}$ using the embedding mapping $\mathcal{T}$ defined as follows:
\be
\qs = \bmp q_0^{(1)} + q_1^{(1)}\omega + ... + q_{p-1}^{(1)}\omega^{p-1} \\
q_0^{(2)} + q_1^{(2)}\omega + ... + q_{p-1}^{(2)}\omega^{p-1} \\
\vdots \\
q_0^{(d)} + q_1^{(d)}\omega + ... + q_{p-1}^{(d)}\omega^{p-1}\emp
\hskip 3mm \longrightarrow \hskip 3mm 
\mathcal{T}\left(\qs\right) = \bmp q_0^{(1)} \\ \vdots \\q_0^{(d)}\\ \vdots \\ q_{p-1}^{(1)} \\ \vdots \\ q_{p-1}^{(d)} \emp,
\ee
where $q_j^{(k)}$ are rational numbers and $q_{p-1}^{(1)} = q_{p-1}^{(2)} = ... = q_{p-1}^{(d)} = 0$ (we leave them in the expression as we need to keep track of $\omega^{p-1}$ terms).

It can be seen that by applying the unitary operator $(\sigma_{\Delta} \otimes U_S)$ on $\mathcal{T}\left(\qs\right)$ and then applying $\mathcal{T}^{-1}$ to get back to the $d$-dimensional vector space $\fd{Q}(\omega)^d$, we obtain
\be \mathcal{T}^{-1}\big(  \left(\sigma_{\Delta} \otimes U_S\right) \mathcal{T}\left(\qs\right) \big) = U_S g_{\Delta}(\qs).\ee
This means that the action of $U_G$ can be simulated by the unitary $(\sigma_{\Delta} \otimes U_S)$ in the embedding space using the embedding mapping $\mathcal{T}$ as defined. 

\begin{remark}
We have shown that in principle it is possible to simulate a g-unitary by a physical unitary operator in a larger Hilbert space. In practice, however, this is extremely difficult to implement because the embedding mapping $\mathcal{T}$ is highly discontinuous: given 2 complex numbers that are very close to each other, the rational coefficients in their cyclotomic expansions can be vastly different.
\end{remark}

\section{The MUB-cycling problem}\label{sec:MUBcyclers}

\begin{definition}
Given a full set of \MUB, a MUB-cycler is an operator whose repeated actions on any single basis generate all other bases in succession.
\end{definition}
\begin{note}
Although technically a MUB-cycler is an operator acting on MUB bases in the Hilbert space, we also call an element $G \in \GL(2,\fd{F}_d)$ a MUB-cycler if its representation $U_G$ is a MUB-cycler.
\end{note}
In even prime power dimensions, unitary MUB-cyclers have been constructed \cite{Wootters2007, Kern2010}. In odd prime power dimensions it has been shown that there is no Clifford unitary MUB-cycler \cite{Appleby2009}. However, if the dimension equals 3 mod 4, there are MUB-cycling anti-unitaries \cite{Appleby2009}. Now that we have a notion of Galois-unitaries, we want to see whether they enable us to solve the MUB cycling problem in cases where ordinary unitary and anti-unitary operators fail. The quick answer is yes: in odd prime power dimension $d = p^n$ where the exponent $n$ is odd, there exist g-unitaries that cycle through the MUBs. In this section we will provide a construction for all such operators. For even $n$ we will disprove their existence. These results will come clear from the theorems to follow.

\subsection{Suborder and 3 types of GL elements}

We first want to introduce the notion of the suborder of a $\GL$ element. There are $d+1$ bases in a full set of MUBs in a prime power dimension $d$. In order for a projective permutation of the bases to cycle through all of them, we need an element of $\PGL$ (see \cref{sec:MUBs}) with order $d+1$, or equivalently an element of $\GL$ with an ``effective order" of $d+1$, by which we mean that we only care about its permutation on the bases and neglect its action on individual vectors within each basis. To explain this more precisely let us consider an element $G \in \GL(2,\fd{F}_d)$ and its $m$-th power expressed as:
\begin{equation}
G = \begin{pmatrix}\alpha & \beta \\ \gamma &\delta \end{pmatrix}, \text{ }
G^m = \begin{pmatrix}\alpha_m & \beta_m \\ \gamma_m &\delta_m \end{pmatrix}.
\end{equation}
It follows from the expression for the M\"{o}bius transformation (\ref{eq-Mobius}) that $U_{G^m}$ takes the basis labelled by $b=0$ to basis $b' = \beta_m/\delta_m$ if $\delta_m \ne 0$, and to basis $b'=\infty$ if $\delta_m = 0$. So if $\beta_m =0$, basis $b=0$ will be brought back to itself if repeatedly acted on by $U_G$ for $m$ times. We can now define the suborder of $G$ as follows.

\begin{definition} The suborder $m$ of $G$ is the smallest positive integer for which $\beta_m= 0$.
\end{definition}

\begin{remark} The suborder $m$ of $G$ need not be equal to the order of $G$ because although $U_{G^m}$ brings a MUB basis back to itself, it might permute the vectors within the basis. In general, $m$ is a factor of the order of $G$ (hence the name suborder). Following Lemma \ref{lem-Am}, we will see that the smallest positive integer $m$ for which $G^m$ is proportional to the $2 \times 2$ identity matrix is an equivalent definition of the suborder.\\
\end{remark}

What we can say about the suborder of a $\GL$ element turns out to depend crucially on the nature of its eigenvalues. Let $G$ be an element of $\GL(2,\fd{F}_d)$ with trace $t$ and determinant $\Delta \ne 0$. The eigenvalues of $G$ are roots of the characteristic polynomial $x^2-tx+\Delta = 0$ and are  given by
\begin{equation}
\lambda_{\pm} = (t \pm \sqrt{t^2-4\Delta})/2 \ .
\end{equation}
If $t^2-4\Delta$ is zero or a quadratic residue, i.e. it has a non-zero square root in $\F_d$, then $\lambda_{\pm}$ belong to the field $\F_d$. Otherwise, the eigenvalues do not belong to $\F_d$, but they are still well-defined and they can be included in the extension field $\F_{d^2}$. To deal with these cases, it is convenient to classify $\GL$ elements into three types, as summarized in \cref{tab-GLtypes}.
\begin{table}[h]
\centering
\begin{tabular}{clc}
\rowcolor{gray!25} Type & Definition in terms of $t$ and $\Delta$ & Equivalent definition\\ 
1 & $t^2-4\Delta$ is a quadratic residue  & $\lambda_{\pm} \in \F_d, \lambda_{+} \ne \lambda_{-}$\\ 
2 & $t^2-4\Delta$ is a quadratic non-residue & $\lambda_{\pm} \notin \F_d, \lambda_{+} \ne \lambda_{-}$ \\ 
3 & $t^2-4\Delta = 0$ & $\lambda_{\pm} =t/2 \in \F_d$ \\ 
\end{tabular}
\vskip 4mm
\parbox{12cm}{\caption[Three types of $\GL$ elements]{A classification of $\GL$ elements into three types, among which only type 2 can include MUB-cyclers, as will be seen in the next section.}
\label{tab-GLtypes}}
\end{table}
\subsection{Constructing MUB-cyclers}

Throughout this section we assume that the dimension $d$ is a prime power of the form $d=p^n$, where $p$ is an odd prime number.

\begin{lemma}[Cayley-Hamilton theorem \cite{Hamilton1853} for $2\times2$ matrices]\label{lem-A2}
If $A$ is a $2\times2$ matrix of trace $t$ and determinant $\Delta$, then
\begin{equation}
A^2 = tA - \Delta I, 
\end{equation}
where $I$ is the $2 \times 2$ identity matrix.
\end{lemma}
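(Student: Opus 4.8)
The plan is to prove the identity by direct computation, since for a $2\times 2$ matrix everything can be written out explicitly in terms of the four entries. This is the most transparent route and, importantly, it works over an arbitrary commutative ring, hence in particular over the finite fields $\F_d$ and $\F_{d^2}$ that are relevant to the subsequent analysis of suborders; it also sidesteps any need to divide by $2$.

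First I would write $A = \bmp \alpha & \beta \\ \gamma & \delta \emp$, so that by definition $t = \alpha + \delta$ and $\Delta = \alpha\delta - \beta\gamma$. Then I would compute the left-hand side $A^2$ by matrix multiplication, obtaining the four entries $\alpha^2 + \beta\gamma$, $(\alpha+\delta)\beta$, $(\alpha+\delta)\gamma$, and $\delta^2 + \beta\gamma$. Next I would compute the right-hand side $tA - \Delta I$ entry by entry: its $(1,1)$ entry is $(\alpha+\delta)\alpha - (\alpha\delta - \beta\gamma) = \alpha^2 + \beta\gamma$, its off-diagonal entries are $(\alpha+\delta)\beta$ and $(\alpha+\delta)\gamma$, and its $(2,2)$ entry is $(\alpha+\delta)\delta - (\alpha\delta - \beta\gamma) = \delta^2 + \beta\gamma$. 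A term-by-term comparison then shows the two matrices agree, which completes the proof.

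There is essentially no obstacle here: the whole content is a short algebraic verification, and the only point requiring care is tracking the signs in $\Delta$ so that the diagonal entries match after the cancellation of the $\alpha\delta$ terms. If one preferred a more conceptual argument, one could instead invoke the fact that the eigenvalues $\lambda_{\pm}$ (lying in $\F_d$ or in $\F_{d^2}$) are by construction the roots of the characteristic polynomial $x^2 - tx + \Delta$, and conclude via diagonalization. However, that route would require separately handling the repeated-eigenvalue case (Type 3 in \cref{tab-GLtypes}), where $A$ need not be diagonalizable, so the direct computation is both shorter and more robust.
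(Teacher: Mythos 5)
Your proof is correct and follows exactly the route the paper takes: the paper's proof is the one-line remark that ``one can explicitly calculate $A^2$ to verify that the lemma is true,'' and you have simply carried out that entry-by-entry verification in full. Your added observations (validity over any commutative ring, no division by $2$ needed, and why diagonalization would be a weaker route because of the repeated-eigenvalue case) are sound but do not change the method.
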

\begin{proof}
One can explicitly calculate $A^2$ to verify that the lemma is true.
\end{proof}

\begin{lemma}\label{lem-Am}
If $A$ is a $2\times2$ matrix with trace $t$ and determinant $\Delta$, then it holds, for any integer $m \ge 1$, that
\begin{equation}\label{eq-Am}
A^m = s_m A - s_{m-1} \Delta I,
\end{equation}
where the sequence $\{s_m\}$ is defined by the recurrence relation
\begin{equation}\label{eq-smr}
s_{m+1} = t s_m - \Delta s_{m-1},
\end{equation}
with $s_0=0$ and $s_1=1$. Equivalently, $s_m$ can be calculated by
\begin{equation}\label{eq-sm}
s_m = \begin{cases} (\lambda_+^m - \lambda_-^m)/(\lambda_+-\lambda_-) \qquad & \text{if $\lambda_+ \ne \lambda_-$} \\ 
m\lambda_+^{m-1} \qquad & \text{if $\lambda_+ = \lambda_-$} \end{cases}
\end{equation}
where $\lambda_{\pm}$ are roots of the characteristic polynomial $x^2 - t x + \Delta$.
\end{lemma}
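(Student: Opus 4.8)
The plan is to establish the matrix identity \cref{eq-Am} by induction on $m$ using the Cayley--Hamilton relation from \cref{lem-A2}, and then to obtain the closed form \cref{eq-sm} by recognizing $\{s_m\}$ as the solution of a second-order linear recurrence whose characteristic polynomial is exactly the characteristic polynomial of $A$.

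First I would verify the base case. For $m=1$ the claimed identity reads $A = s_1 A - s_0 \Delta I = A$, using $s_1 = 1$ and $s_0 = 0$, so it holds trivially; the case $m=2$ reproduces $A^2 = s_2 A - s_1 \Delta I = tA - \Delta I$ since $s_2 = t s_1 - \Delta s_0 = t$, which is precisely \cref{lem-A2}. For the inductive step, assume $A^m = s_m A - s_{m-1}\Delta I$. Multiplying by $A$ and substituting $A^2 = tA - \Delta I$ gives $A^{m+1} = s_m A^2 - s_{m-1}\Delta A = s_m(tA - \Delta I) - s_{m-1}\Delta A = (t s_m - \Delta s_{m-1})A - s_m \Delta I$. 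By the recurrence \cref{eq-smr} the coefficient of $A$ is exactly $s_{m+1}$, so $A^{m+1} = s_{m+1}A - s_m \Delta I$, completing the induction.

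For the closed form, I would note that \cref{eq-smr} is a homogeneous linear recurrence of order two whose characteristic equation $x^2 - tx + \Delta = 0$ has roots $\lambda_{\pm}$. When $\lambda_+ \ne \lambda_-$ the general solution is $s_m = c_+\lambda_+^m + c_-\lambda_-^m$; imposing $s_0 = 0$ forces $c_- = -c_+$, and $s_1 = 1$ then gives $c_+ = 1/(\lambda_+ - \lambda_-)$, which yields the first branch of \cref{eq-sm}. When $\lambda_+ = \lambda_-$, with common value $\lambda = t/2$, the repeated-root case requires the modified ansatz $s_m = (c_1 + c_2 m)\lambda^m$; then $s_0 = 0$ gives $c_1 = 0$ and $s_1 = 1$ gives $c_2 = \lambda^{-1}$, so $s_m = m\lambda^{m-1}$, the second branch. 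Equivalently, one can simply verify directly that each proposed formula satisfies \cref{eq-smr} (using $\lambda_{\pm}^2 = t\lambda_{\pm} - \Delta$) together with the initial data $s_0 = 0$, $s_1 = 1$, and then invoke uniqueness of the solution of a second-order recurrence with prescribed initial values.

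The computations here are entirely routine; the only point needing a little care is the repeated-root branch, where the two-element basis $\{\lambda_+^m, \lambda_-^m\}$ degenerates and must be replaced by $\{\lambda^m, m\lambda^m\}$. In the $\GL$ setting $\Delta \ne 0$, so $\lambda \ne 0$ and $m\lambda^{m-1}$ is unambiguous; the fully degenerate nilpotent case $t = \Delta = 0$ can be checked separately and is consistent with the convention $0^0 = 1$.
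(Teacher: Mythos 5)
Your proof is correct, and the matrix half of it --- the induction on $m$ driven by the Cayley--Hamilton identity of \cref{lem-A2}, with base cases $m=1,2$ and the substitution $A^2 = tA - \Delta I$ in the inductive step --- is exactly the paper's argument. Where you genuinely diverge is in establishing the closed form \cref{eq-sm}. The paper first rewrites \cref{eq-sm} as the single symmetric sum $s_m = \sum_{i=0}^{m-1}\lambda_+^{m-1-i}\lambda_-^i$, which covers the distinct-root and repeated-root cases simultaneously (when $\lambda_+=\lambda_-$ the $m$ equal terms give $m\lambda_+^{m-1}$), and then shows by induction, via a telescoping computation using $t=\lambda_++\lambda_-$ and $\Delta=\lambda_+\lambda_-$, that this sum satisfies the recurrence \cref{eq-smr} with the right initial values. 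You instead invoke the standard theory of second-order linear recurrences: the solution space is two-dimensional, spanned by $\{\lambda_+^m,\lambda_-^m\}$ for distinct roots and by $\{\lambda^m, m\lambda^m\}$ for a repeated root, with $s_0=0$, $s_1=1$ fixing the coefficients. Both routes are valid over the fields in play here (odd characteristic), and your fallback --- directly verifying that each branch of \cref{eq-sm} satisfies \cref{eq-smr} and then invoking uniqueness of the solution given $s_0,s_1$ --- is the cleanest way to avoid any worry about whether the basis argument survives in positive characteristic; your remark that $\Delta\ne 0$ forces $\lambda\ne 0$ is precisely what makes $\{\lambda^m,m\lambda^m\}$ independent. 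The trade-off is that the paper's symmetric sum avoids the case split entirely and keeps everything inside one induction, while your route is the more standard textbook argument and makes transparent that nothing beyond elementary recurrence theory is being used; the degenerate nilpotent case $t=\Delta=0$ is an edge case under either treatment and is handled by the same $0^0=1$ convention in both.
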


\begin{proof}
We will prove the lemma by induction. Let us first note that (\ref{eq-sm}) is equivalent to 
\begin{equation}\label{eq-sm2}
s_m = \sum_{i=0}^{m-1} \lambda_+^{m-1-i} \lambda_-^i \ .
\end{equation}
By definition, $s_2 = t s_1 - \Delta s_0 = t = \lambda_+ + \lambda_-$, therefore (\ref{eq-sm2}) holds for $m=1$ and $m=2$. Suppose (\ref{eq-sm2}) holds for $m=1,2,...,$ up to $m=k$, then
\begin{equation}
\begin{split}
s_{k+1} = t s_k - \Delta s_{k-1} & = (\lambda_+ + \lambda_-) \sum_{i=0}^{k-1} \lambda_+^{k-1-i}\lambda_-^i 
- \lambda_+ \lambda_- \sum_{i=0}^{k-2} \lambda_+^{k-2-i}\lambda_-^i \\
& = \sum_{i=0}^{k-1} \lambda_+^{k-i}\lambda_-^i  
+ \sum_{i=0}^{k-1} \lambda_+^{k-1-i}\lambda_-^{i+1}
- \sum_{i=0}^{k-2} \lambda_+^{k-1-i}\lambda_-^{i+1}  \\
& = \sum_{i=0}^{k-1} \lambda_+^{k-i}\lambda_-^i 
+  \left( \sum_{i=1}^{k} \lambda_+^{k-i}\lambda_-^{i}
- \sum_{i=1}^{k-1} \lambda_+^{k-i}\lambda_-^{i} \right) \\
& = \sum_{i=0}^{k-1} \lambda_+^{k-i}\lambda_-^i + \lambda_-^k 
= \sum_{i=0}^{k} \lambda_+^{k-i}\lambda_-^i \ ,  
\end{split}
\end{equation}
which implies that it also holds for $m=k+1$, and consequently, for all $m \ge 1$.

Equation (\ref{eq-Am}) obviously holds for $m=1$. Lemma \ref{lem-A2} implies that it also holds for $m=2$. Suppose it holds for $m=1,2,...,$ up to $m=k$, then
\begin{equation}
\begin{split}
A^{k+1} = A^k A & = (s_k A - s_{k-1} \Delta I) A \\
& = s_k A^2 - s_{k-1} \Delta A\\
& = s_k ( t A - \Delta I) - s_{k-1} \Delta A\\
& = (s_k t - s_{k-1} \Delta) A - s_k \Delta I \\
& = s_{k+1} A - s_k \Delta I \ ,
\end{split}
\end{equation}
which implies that it also holds for $m=k+1$, and consequently, for all $m \ge 1$.
\end{proof}

\begin{remark}
If $A$ takes the form $A = \begin{pmatrix}\alpha & \beta \\ \gamma &\delta \end{pmatrix}$, we can explicitly rewrite Eq. (\ref{eq-Am}) as 
\begin{equation}
A^m = \begin{pmatrix}
s_m\alpha-s_{m-1}\Delta & s_m\beta \\
s_m\gamma & s_m\delta - s_{m-1}\Delta
\end{pmatrix},	
\end{equation}
from which it can be seen that if $\beta \ne 0$ (for a non-zero determinant, we can always force $\beta \ne 0$ using the canonical form in (\ref{def-canonical}) near the end of this section) the suborder of $A$ is the smallest positive integer $m$ for which $s_m=0$. For such $m$, $A^m$ is proportional to the identity matrix.\\
\end{remark}

\begin{theorem}\label{thm-GLtypes}
Let $G$ be an element of $\GL(2,\F_d)$ with determinant $\Delta$. 
\begin{enumerate}
\item If $G$ is of type 1, then $G$ has suborder of at most $d-1$.
\item If $G$ is of type 2, then $G$ has suborder of at most $d+1$ and satisfies
\begin{equation}\label{eq-Gd1}
G^{d+1} = \Delta I \text{\hskip 6mm ($I$ is the $2 \times 2$ identity matrix).}
\end{equation}
\item If $G$ is of type 3, then $G$ has suborder of at most $d$.\end{enumerate}
\end{theorem}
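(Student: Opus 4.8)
The plan is to convert the entire statement into a question about when the scalar sequence $s_m$ from Lemma~\ref{lem-Am} first vanishes. By that lemma, $G^m = s_m G - s_{m-1}\Delta I$, so the off-diagonal entry $\beta_m$ of $G^m$ equals $s_m \beta$. As noted in the remark following Lemma~\ref{lem-Am}, one may assume $\beta \neq 0$ (forcing it with the canonical form), so that the suborder of $G$ is precisely the least integer $m \geq 1$ with $s_m = 0$. Everything then reduces to reading off the first zero of $s_m$ from the closed form in \cref{eq-sm}, which in turn is governed by the eigenvalues $\lambda_{\pm}$. The three cases of \cref{tab-GLtypes} correspond exactly to where those eigenvalues live, so I would treat them one at a time.

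For type~1, the eigenvalues are distinct and lie in $\F_d$, and both are nonzero since $\Delta = \lambda_+\lambda_- \neq 0$. Using $s_m = (\lambda_+^m - \lambda_-^m)/(\lambda_+ - \lambda_-)$, I get $s_m = 0$ iff $(\lambda_+/\lambda_-)^m = 1$. The ratio $\lambda_+/\lambda_-$ is a nonzero element of the cyclic group $\F_d^{\times}$ of order $d-1$, so its multiplicative order divides $d-1$; hence some $m \leq d-1$ kills $s_m$ and the suborder is at most $d-1$.

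Type~2 is the heart of the theorem and the one I would invest the most care in. Here $\lambda_{\pm}$ are distinct, do not lie in $\F_d$, and are roots of an irreducible quadratic over $\F_d$; the key structural fact is that they are then Frobenius conjugates in $\F_{d^2}$, i.e. $\lambda_- = \lambda_+^{\,d}$. From this $\lambda_+^{\,d+1} = \lambda_+\lambda_+^{\,d} = \lambda_+\lambda_- = \Delta$, and symmetrically $\lambda_-^{\,d+1} = \Delta$, so $s_{d+1} = (\lambda_+^{\,d+1} - \lambda_-^{\,d+1})/(\lambda_+-\lambda_-) = 0$; this already bounds the suborder by $d+1$. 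To obtain the sharp identity \cref{eq-Gd1}, I would also compute $\lambda_+^{\,d} = \lambda_-$ and $\lambda_-^{\,d} = \lambda_+$ (using $\lambda_{\pm}^{d^2} = \lambda_{\pm}$), giving $s_d = (\lambda_+^{\,d} - \lambda_-^{\,d})/(\lambda_+ - \lambda_-) = -1$; then Lemma~\ref{lem-Am} yields $G^{d+1} = s_{d+1}G - s_d \Delta I = \Delta I$.

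For type~3 the eigenvalues coincide, $\lambda_+ = \lambda_- = t/2 \neq 0$, so I would use the degenerate branch $s_m = m\lambda_+^{m-1}$. Since $\lambda_+$ is a unit, this vanishes exactly when $m = 0$ in $\F_d$, i.e. when $p \mid m$; the least positive such $m$ is $p \leq d$, giving suborder at most $d$. The main obstacle I anticipate is not any single calculation but two points of rigour: first, cleanly justifying the reduction to $\beta \neq 0$ via the canonical form, so that the identification of the suborder with the first zero of $s_m$ is legitimate; and second, correctly invoking Frobenius conjugacy in type~2, since the entire sharp statement $G^{d+1} = \Delta I$ hinges on the identity $\lambda_- = \lambda_+^{\,d}$ holding in $\F_{d^2}$. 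The characteristic-$p$ reading of ``$m = 0$'' in type~3 also needs to be stated as the divisibility $p \mid m$ rather than treated as an integer equation.
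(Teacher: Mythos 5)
Your proof is correct and follows essentially the same route as the paper's: both reduce the theorem to the first vanishing of the sequence $s_m$ from Lemma~\ref{lem-Am} and then treat the three types via the location of the eigenvalues, your appeal to Frobenius conjugacy ($\lambda_- = \lambda_+^{\,d}$) being exactly the fact the paper derives by hand from $j^d = -j$ with $j = \sqrt{t^2-4\Delta}$, leading to the same computations $s_d = -1$, $s_{d+1} = 0$, $G^{d+1} = \Delta I$. The only differences are cosmetic: the paper does not need your $\beta \neq 0$ reduction, since for the stated upper bounds it suffices that $s_m = 0$ forces $\beta_m = s_m\beta = 0$ whatever $\beta$ is, and in type 3 your bound of $p$ is marginally sharper than the paper's $d$.
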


\begin{proof}
Let $\lambda_{\pm}$ be the eigenvalues of $G$, based upon which we define the sequence $\{s_m\}$ just as in Lemma \ref{lem-Am}. Although $\lambda_{\pm}$ might not be in the field $\F_d$, the sequence $\{s_m\}$ always is, as seen from the recursive definition in (\ref{eq-smr}). Lemma \ref{lem-Am} implies that if $s_m = 0$ for some $m$, then $G^m = -s_{m-1}\Delta I$, and therefore the suborder of $G$ is at most $m$. Let us now consider specific cases. Facts about finite fields (see Section \ref{sub:finitefields}) will be used implicitly.
\begin{enumerate}
\item If $G$ is of type 1, then the eigenvalues $\lambda_{\pm}$ are in $\F_d$, so
\begin{equation}
\lambda_{+}^{d-1} = \lambda_{-}^{d-1} = 1 \ ,
\end{equation}
\begin{equation}
s_{d-1} = (\lambda_{+}^{d-1}-\lambda_{-}^{d-1})/(\lambda_{+}-\lambda_{-}) = 0 \ ,
\end{equation}
and therefore $G$ has suborder of at most $d-1$.
\item If $G$ is of type 2, to include $\lambda_{\pm}$ we create an extension field $\F_{d^2}$ from the base field $\F_d$ and the generator $j \equiv \sqrt{t^2 - 4 \Delta}$. Since $(j^d)^2=(t^2-4\Delta)^d = t^2-4\Delta=j^2 $, we have $j^d = \pm j$. Because $j$ is not in the field $\F_d$ we cannot have $j^d = j$, and it therefore must be the case that $j^d = -j$. As $d$ is odd we have:
\begin{equation}\label{eq-lambdad}
\lambda_{\pm}^d 
= \frac{\left(t \pm j \right)^d}{2^d} 
= \frac{t \pm j^d}{2}
=\frac{t \mp j}{2} = \lambda_{\mp} \ .
\end{equation}
We then use (\ref{eq-sm}) to derive
\begin{equation}
s_{d}= (\lambda_{+}^{d}-\lambda_{-}^{d})/(\lambda_{+}-\lambda_{-}) = -1\ ,
\end{equation}
\begin{equation}
s_{d+1}= \frac{\lambda_{+}^{d+1}-\lambda_{-}^{d+1}}{\lambda_{+}-\lambda_{-}} 
= \frac{\lambda_{+}\lambda_{-} -\lambda_{-}\lambda_{+}}{\lambda_{+}-\lambda_{-}} 
= 0 \ ,
\end{equation}
and therefore
\begin{equation}
G^{d+1} = s_{d+1}G - s_{d}\Delta I = \Delta I \ .
\end{equation}
It follows that $G$ has suborder of at most $d+1$.
\item
If $G$ is of type 3, then $\lambda_{\pm} = t/2$. It follows from (\ref{eq-sm}) that  $s_d = d\lambda_{+}^{d-1} = 0$, so $G$ has suborder of at most $d$.
\end{enumerate}
\end{proof}

\begin{lemma}\label{lem-eta}
Let $G \in \GL_p(2,\F_d)$, i.e. an element $\GL(2,\F_d)$ whose determinant $\Delta$ is in the prime field $\F_p$. Let $\bar{\theta}$ be a primitive element of $\F_{d^2}$ (therefore $\theta = \bar{\theta}^{d+1}$ is a primitive element of $\F_d$). Note that $(d-1)/(p-1)$ is an integer, so we can define $\eta \in \F_{d^2}$ as
\begin{equation}\label{def-eta}
\eta \equiv \bar{\theta}^{(d-1)/(p-1)} \ .
\end{equation}
Then $G$ is of type 2 if and only if it has eigenvalues $\eta^r$ and $\eta^{dr}$, for some integer $r$ in the range $0 < r < (p-1)(d+1)$ that is not a multiple of $(d+1)/2$.

\end{lemma}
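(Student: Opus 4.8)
The plan is to push everything into the single cyclic group $\F_{d^2}^{\times}$ and reduce the lemma to divisibility bookkeeping in the exponent. First I would record the finite-field facts I need (see \cref{sub:finitefields}): $\F_{d^2}^{\times}$ is cyclic of order $d^2-1=(d-1)(d+1)$ generated by $\bar\theta$; the subgroup $\F_d^{\times}$ is exactly $\langle\bar\theta^{d+1}\rangle$ of order $d-1$; and $\F_p^{\times}$ is the unique subgroup of order $p-1$, namely $\langle\bar\theta^{(d^2-1)/(p-1)}\rangle$. Since $a\equiv(d-1)/(p-1)=1+p+\cdots+p^{n-1}$ is an integer dividing $d-1$, we have $\gcd(d^2-1,a)=a$, so $\eta=\bar\theta^{a}$ has order $(d^2-1)/a=(p-1)(d+1)$; hence $r$ in the stated range $0<r<(p-1)(d+1)$ runs over a complete set of nonzero residues modulo $\operatorname{ord}(\eta)$. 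I would also note that $\eta^{d+1}=\bar\theta^{(d^2-1)/(p-1)}$ is a primitive element of $\F_p$, so that the norm map $N(x)=x^{d+1}\colon\F_{d^2}^{\times}\to\F_d^{\times}$ carries $\eta^{r}$ to $\eta^{(d+1)r}\in\F_p^{\times}$.

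The structural heart of the argument is the identification of the admissible eigenvalues. By the analysis in the proof of \cref{thm-GLtypes}, a matrix $G\in\GL(2,\F_d)$ is of type 2 precisely when its eigenvalues are distinct, lie in $\F_{d^2}\setminus\F_d$, and are interchanged by the Frobenius map $x\mapsto x^{d}$; equivalently $\lambda_-=\lambda_+^{\,d}$. For $G\in\GL_p(2,\F_d)$ there is the extra constraint $\det G=\lambda_+\lambda_-=\lambda_+^{\,d+1}=N(\lambda_+)\in\F_p^{\times}$. Since $N$ is surjective with kernel of size $d+1$, the preimage $N^{-1}(\F_p^{\times})$ has order $(d+1)(p-1)$ and contains $\langle\eta\rangle$, which has the same order; hence $N^{-1}(\F_p^{\times})=\langle\eta\rangle$. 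Therefore the eigenvalues of every element of $\GL_p(2,\F_d)$ lie in $\langle\eta\rangle$, so I may write $\lambda_+=\eta^{r}$, which forces $\lambda_-=\lambda_+^{\,d}=\eta^{dr}$. This already produces the shape $\{\eta^{r},\eta^{dr}\}$; conversely any such pair has product $\eta^{(d+1)r}\in\F_p^{\times}$ and is Frobenius-closed, so it is realized by some $G\in\GL_p(2,\F_d)$.

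It then remains to match the side condition on $r$ to the type-2 condition, i.e. to show that for $r$ in range the eigenvalues $\eta^{r},\eta^{dr}$ are distinct and avoid $\F_d$ exactly when $r$ is not a multiple of $(d+1)/2$. Both ``distinct'' and ``outside $\F_d$'' reduce to $\eta^{r}\notin\F_d$, because $\eta^{r}=\eta^{dr}$ iff $\eta^{r}=(\eta^{r})^{d}$ iff $\eta^{r}\in\F_d$. So the task is to compute the intersection $\F_d^{\times}\cap\langle\eta\rangle$ as a subgroup of the cyclic group $\langle\eta\rangle$. The claim I would establish is that $\eta^{(d+1)/2}=\bar\theta^{\,a(d+1)/2}$ is the least positive power of $\eta$ lying in $\F_d^{\times}=\langle\bar\theta^{d+1}\rangle$, so that $\F_d^{\times}\cap\langle\eta\rangle=\langle\eta^{(d+1)/2}\rangle$ and $\eta^{r}\in\F_d$ iff $(d+1)/2\mid r$. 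Feeding this back into the previous paragraph yields the equivalence in both directions.

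The step I expect to be the main obstacle is precisely this last intersection computation: controlling the $\gcd$s among $a=(d-1)/(p-1)$, $d+1$ and $d^2-1$ to pin down that the generator of $\F_d^{\times}\cap\langle\eta\rangle$ is $\eta^{(d+1)/2}$ rather than some other power, which is exactly where the factor $(d+1)/2$ (as opposed to $d+1$) enters. Here the parity of $a$ — equivalently of $n$, since $a=1+p+\cdots+p^{n-1}\equiv n\pmod 2$ — controls whether $a(d+1)/2$ is a multiple of $d+1$, so this parity must be handled with care and is the delicate point of the whole proof. Everything else is routine cyclic-group arithmetic once the norm identification $N^{-1}(\F_p^{\times})=\langle\eta\rangle$ and the Frobenius pairing $\lambda_-=\lambda_+^{\,d}$ are in place.
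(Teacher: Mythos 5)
Your plan follows the same skeleton as the paper's proof: work inside the cyclic group $\F_{d^2}^{\times}$, show that a type-2 element of $\GL_p(2,\F_d)$ has Frobenius-paired eigenvalues $\{\eta^{r},\eta^{dr}\}$, then translate ``the eigenvalues avoid $\F_d$'' into a divisibility condition on $r$. Your norm-map identification $N^{-1}(\F_p^{\times})=\langle\eta\rangle$ is a tidier route to the first half than the paper's direct manipulation (the paper deduces $(d-1)\mid k(p-1)$ from $\Delta^{p}=\Delta$ and sets $r=k(p-1)/(d-1)$), and that part of your argument is sound, provided you say ``every type-2 element'' rather than ``every element'' when concluding that the eigenvalues lie in $\langle\eta\rangle$.

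The genuine gap is exactly the step you defer, and it cannot be closed as stated: the claim $\F_d^{\times}\cap\langle\eta\rangle=\langle\eta^{(d+1)/2}\rangle$ is false when $n$ is odd. Writing $a=(d-1)/(p-1)$, we have $\eta^{r}=\bar\theta^{ra}$ and $\F_d^{\times}=\langle\bar\theta^{d+1}\rangle$, so $\eta^{r}\in\F_d$ iff $(d+1)\mid ra$, i.e. iff $\bigl((d+1)/\gcd(a,d+1)\bigr)\mid r$. Since $a\mid d-1$ and $\gcd(d-1,d+1)=2$, this gcd equals $2$ when $a$ is even ($n$ even) but $1$ when $a$ is odd ($n$ odd); in the latter case the intersection is $\langle\eta^{d+1}\rangle$ and the correct condition is ``$r$ is not a multiple of $d+1$.'' You should know that the paper's own proof makes the same slip, asserting $\gcd\bigl((d-1)/(p-1),d+1\bigr)=2$ unconditionally; in fact the ``only if'' direction of \cref{lem-eta} fails for odd $n$. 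Concretely, for $d=p=3$ let $G=\begin{pmatrix}0&-1\\1&0\end{pmatrix}$: then $t^{2}-4\Delta=2$ is a quadratic non-residue mod $3$, so $G$ is of type 2, yet its eigenvalues $\pm i$ are $\eta^{2}$ and $\eta^{6}$ (here $\eta=\bar\theta$ has order $8$), and both admissible exponents $r=2,6$ are multiples of $(d+1)/2=2$. The error is harmless downstream, since \cref{thm-MUBcycler} only invokes $r$ with $\gcd(r,d+1)=1$, which can never be a multiple of $(d+1)/2$, and its proof correctly uses $\gcd(a,d+1)=1$ for odd $n$. But any honest completion of your plan (or repair of the paper's) must state the condition as ``$r$ is not a multiple of $(d+1)/\gcd(a,d+1)$,'' i.e. $d+1$ for odd $n$ and $(d+1)/2$ for even $n$.
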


\begin{proof}
Assume that $G$ is of type 2, and let $\lambda_{\pm}\notin \fd{F}_d$ be its eigenvalues. Following (\ref{eq-lambdad}), we have $\lambda_{\pm}^d = \lambda_{\mp}$, so we may write
\begin{equation}
\lambda_{+} = \bar{\theta}^k \hskip 15mm \lambda_{-} = \bar{\theta}^{dk}
\end{equation}
for some integer $1 \le k \le d^2-1$. The assumption that $\lambda_{\pm}$ are not elements of $\F_d$ implies that $k$ is not a multiple of $d+1$. The fact that $\Delta \in \F_p$ means
\begin{equation}
\bar{\theta}^{pk(d+1)} = \Delta^p = \Delta = \bar{\theta}^{k(d+1)} \ ,
\end{equation}
or equivalently
\begin{equation}\label{eq-thetak}
\bar{\theta}^{k(p-1)(d+1)} = 1 \ ,
\end{equation}
implying that $(d-1) \mid k(p-1)$. Let $r = k(p-1)/(d-1)$, then $r$ is an integer in the range $0 \le r \le (p-1)(d+1)$. The eigenvalues can then be re-written as
\begin{equation}
\lambda_{+} = \eta^r \hskip 15mm \lambda_{-} = \eta^{dr} \ .
\end{equation}
The requirement that $\lambda_{+} \notin \F_d$ means 
\begin{equation}
\eta^{r} = \bar{\theta}^{r(d-1)/(p-1)} \notin \F_d \ ,
\end{equation} 
which is true if and only if $r(d-1)/(p-1)$ is not a multiple of $(d+1)$, which in turn is equivalent to
$r$ not being a multiple of $(d+1)/2$ because $\gcd((d-1)/(p+1),d+1)=2$.

Conversely, if $G$ has eigenvalues of the form $\lambda_{+} = \eta^r$ and $\lambda_{-}=\eta^{dr}$, where $r$ is not a multiple of $(d+1)/2$, then $\lambda_{\pm}$ are not in the field $\F_d$, and $G$ is therefore of type 2. One can further verify that its trace is in $\F_d$ and its determinant is in $\F_p$ by defining 
\begin{equation}
t \equiv \eta^r + \eta^{dr} \hskip 15mm \Delta \equiv \eta^{(d+1)r}
\end{equation}
and using the facts $\eta^{d^2}=\eta$ and $\eta^{(d+1)p}=\eta^{d+1}$ to check that
\begin{equation}
t^d = t \hskip 15mm \Delta^p = \Delta \ .
\end{equation}
\end{proof}
\begin{remark}
With Lemma \ref{lem-eta}, all type-2 elements of $\GL_p(2,\Fd)$ are now characterized by an integer $r$, via their eigenvalues. In the next theorem, we will pin down exactly which values of $r$ correspond to MUB-cyclers when they exist.\\
\end{remark}
\begin{theorem}\label{thm-MUBcycler}
Let $G \in \GL_p(2,\F_d)$ be of type 2 and let the integer $r$ be as in the statement of Lemma \ref{lem-eta}.
\begin{enumerate}
\item When $n$ is even, $G$ has suborder of at most $(d+1)/2$.
\item When $n$ is odd, $G$ has suborder $d+1$ if and only if $\gcd(r,d+1)=1$.
\end{enumerate}
\end{theorem}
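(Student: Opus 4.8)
The plan is to reduce the suborder of a type-2 element to a single gcd computation and then split on the parity of $n$.

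First I would record the characterization of the suborder that makes everything tractable. By the remark following \cref{lem-Am}, together with the fact that a type-2 matrix has distinct eigenvalues and hence is never a scalar multiple of the identity, the suborder of $G$ is the least positive integer $m$ with $s_m=0$, equivalently the least $m$ for which $G^m$ is proportional to $I$. Diagonalizing $G$ over $\F_{d^2}$ and using \cref{lem-eta} to write its eigenvalues as $\lambda_+=\bar\theta^{k}$ and $\lambda_-=\bar\theta^{dk}$ with $k=r(d-1)/(p-1)$, the condition $G^m\propto I$ becomes $\lambda_+^m=\lambda_-^m$, i.e.\ $\bar\theta^{(d-1)km}=1$. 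Since $\bar\theta$ is primitive of order $d^2-1=(d-1)(d+1)$, this is equivalent to $(d+1)\mid km$, whence
\begin{equation}
\mathrm{suborder}(G)=\min\{m>0:(d+1)\mid km\}=\frac{d+1}{\gcd(d+1,\,k)}.
\end{equation}
Everything now hinges on computing $\gcd(d+1,k)$, where $k=ra$ and $a=(d-1)/(p-1)=1+p+\dots+p^{n-1}$.

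For part~(1) I would argue by parity. When $n$ is even, $a$ is a sum of $n$ odd terms and so is even, while $d+1=p^n+1$ is also even. Hence $2\mid\gcd(d+1,a)\mid\gcd(d+1,ra)=\gcd(d+1,k)$, and the displayed formula gives $\mathrm{suborder}(G)\le(d+1)/2$.

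For part~(2), with $n$ odd, I would factor the coprimality as $\gcd(d+1,k)=1\iff\gcd(d+1,a)=1$ and $\gcd(d+1,r)=1$, so it suffices to establish the auxiliary fact $\gcd(d+1,a)=1$. This follows from two observations: $a\mid(d-1)$ forces $\gcd(a,d+1)\mid\gcd(d-1,d+1)=2$ (since $d$ is odd), while for $n$ odd, $a$ is a sum of an odd number of odd terms and is therefore itself odd, and an odd divisor of $2$ must equal $1$. Consequently $\gcd(d+1,k)=\gcd(d+1,r)$, and the formula yields $\mathrm{suborder}(G)=d+1$ exactly when $\gcd(r,d+1)=1$. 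I would also note consistency with the type-2 hypothesis that $r$ is not a multiple of $(d+1)/2$: the condition $\gcd(r,d+1)=1$ already forces this, since $(d+1)/2\mid r$ would give $(d+1)/2\mid\gcd(r,d+1)$.

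The main obstacle I anticipate is not any single deep step but pinning down the small number-theoretic lemmas precisely: the clean reduction to $\gcd(d+1,k)$ relies on correctly tracking the order of $\bar\theta$ and the relation $k=r(d-1)/(p-1)$, and the entire parity dichotomy rests on the elementary but easy-to-misstate identity $\gcd\!\bigl((d-1)/(p-1),\,d+1\bigr)\in\{1,2\}$ with its value fixed by the parity of $n$. Some care is also needed to confirm that a type-2 $G$ is genuinely diagonalizable over $\F_{d^2}$ (distinct eigenvalues), so that the equivalence $G^m\propto I\iff\lambda_+^m=\lambda_-^m$ is valid.
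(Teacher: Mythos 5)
Your proof is correct and follows essentially the same route as the paper: both reduce the suborder of a type-2 element to the least $m$ with $\lambda_+^m=\lambda_-^m$, write the eigenvalues as powers of $\bar{\theta}$ via Lemma~\ref{lem-eta}, and settle both cases using the parity of $(d-1)/(p-1)$ together with $\gcd(d-1,d+1)=2$. Your closed formula $\mathrm{suborder}(G)=(d+1)/\gcd\bigl(d+1,\,r(d-1)/(p-1)\bigr)$ is merely a more explicit packaging of the computation the paper carries out in place.
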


\begin{proof}
Let $\lambda_{\pm}$ be the eigenvalues of $G$ and the sequence $s_m$ be as defined in Lemma \ref{lem-Am}. We recall that the suborder of $G$ is the smallest positive integer $m$ for which $s_m= 0$, which in this case is equivalent to $\lambda_{+}^m = \lambda_{-}^m$, as $G$ is of type 2 and its eigenvalues are distinct.
\begin{enumerate}
\item When $n$ is even, $(d-1)/(p-1) = 1+p+\dotsc+p^{n-1}$ is an even integer, so $(d-1)/2$ is a multiple of $(p-1)$. It then follows from (\ref{eq-thetak}) that
\begin{equation}
\bar{\theta}^{k(d-1)(d+1)/2}=1 \ ,
\end{equation}
which implies $\lambda_{+}^{(d+1)/2} = \lambda_{-}^{(d+1)/2}$, or $s_{(d+1)/2}=0$. Therefore $G$ has suborder of at most $(d+1)/2$ and cannot be a MUB-cycler.

\item When $n$ is odd, $(d-1)/(p-1)$ is an odd integer. It follows from this, and the fact that $\gcd(d+1,d-1)=2$, that $(d-1)/(p-1)$ is co-prime to $d+1$. We have $\lambda_{+}^m = \lambda_{-}^m$ if and only if $\eta^{m(d-1)r}=1$, which in turn is true if and only if $mr(d-1)/(p-1)$ is a multiple of $d+1$. Therefore $G$ has suborder $d+1$ if and only if $r$ is co-prime to $d+1$.
\end{enumerate}
\end{proof}
In summary, in this section we have proved the non-existence of MUB-cyclers when the exponent $n$ is even. When $n$ is odd, we have identified all MUB-cycling elements in $\GL(2,\Fd)$ according to the characteristics of their eigenvalues. Lastly, we want to provide an explicit form for these MUB-cyclers. The proof in the Appendix of \cite{Appleby2008} can be extended to show that for any element in $G \in \GL(2,\F_d)$ with trace $t$ and determinant $\Delta$, where $t^2-4\Delta \ne 0$, there exists $S \in \SL(2,\F_d)$ such that
\begin{equation}\label{def-canonical}
G = S G_c S^{-1} \hskip 15mm
G_c = \begin{pmatrix}
0 & -\Delta \\
1 & t
\end{pmatrix},\end{equation}
where we call $G_c$ the canonical form of $G$. Therefore, an element of $\GL(2,\F_d)$ is a MUB-cycler if and only if it is conjugate to $G_0^r$ where
\begin{equation}
G_0 = \begin{pmatrix}
0 & -\eta^{(d+1)}\\
1 & \eta + \eta^{d}
\end{pmatrix} \ ,
\label{def-G0}
\end{equation}
$\eta$ is defined as in (\ref{def-eta}), and $r$ is an integer co-prime to $d+1$.  Note that the order of $G_0$ is $(p-1)(d+1)$ because this is the smallest integer $r$ such that $\eta^r$ and $\eta^{dr}$,  the eigenvalues of $G^r_0$, are both equal to $1$.\\
\begin{remark}
It follows that anti-symplectic MUB-cyclers exist if and only if the dimension $d=3$ (mod 4), a fact already shown in ~\cite{Appleby2009}. This is because $G_0^r$ is anti-symplectic if and only if $\eta^{r(d+1)}=-1$, which is true if and only if $r$ is an odd multiple of $(p-1)/2$.  If $d=1$ (mod 4) then $(p-1)/2$ is even, so no multiple of $(p-1)/2$ is co-prime to $d+1$. But if $d=3$ (mod 4) one can see that $(p-1)/2$ is co-prime to $d+1$, implying that $G_0^{r(p-1)/2}$ is an anti-symplectic MUB-cycler for every $r$ co-prime to $d+1$.
\end{remark}

\section{Eigenvectors of MUB cyclers}\label{sec:eigenvectors}

One can always find eigenvalues and eigenvectors of an ordinary unitary operator by diagonalizing it. However when it comes to g-unitaries, the situation is tricker. When dealing with g-unitaries, one has to be extra careful because much of our intuition about ordinary unitaries can fail for g-unitaries. For example, a scalar multiplication of an eigenvector of a g-unitary can change its eigenvalues, resulting in the possibility of a g-unitary having infinitely many eigenvalues (see the example below). Or in other cases, a g-unitary might not have any eigenvector at all (see the example below). We will start this section with an example of anti-unitaries, and then proceed to the analysis of the eigenvectors of a special kind of g-unitaries, namely the MUB-cyclers.

\begin{example}
Let $U_A$ be an anti-unitary over the complex field $\fd{C}$, which can be expressed as $U_A = UK$, where $U$ is a unitary and $K$ denotes complex conjugation. We notice that $U_A^2 = UKUK = U\bar{U}$, where $\bar{U}$ denotes the complex conjugate of $U$, is a unitary. Let $\qsp$ be an eigenvector of $U_A$ with the eigenvalue $\lambda$:
\be U_A \qsp = \lambda \qsp \ . \ee
It then follows that
\be U_A^2 \qsp = U K \lambda \qsp = \lambda^* U K \qsp = \abs{\lambda}^2 \qsp \ . \ee
Since $U_A^2$ is unitary, $\abs{\lambda}^2$ must be of modulus 1, and therefore $\lambda = e^{i \theta}$ is a phase. There are two things that follow from this. First of all, let $e^{i\phi}$ be any phase, then 
\be U_A (e^{i\phi} \qsp) = e^{-i\phi}\lambda \qsp = e^{i(\theta-2\phi)}e^{i\phi} \qsp  \ee
so $e^{i\phi}\qsp$ is an eigenvector of $U_A$ with eigenvalue $ e^{i(\theta-2\phi)}$. This means that $U_A$ has a continuum of eigenvalues, whose eigenvectors only differ by an overall phase, and that we can ensure the eigenvalue is 1 by adjusting the phase. Secondly, since $\abs{\lambda}^2$ is real and positive, we must have  $\abs{\lambda}^2 = 1$. Therefore $\qsp$ is an eigenvector of the unitary $U_A^2$ with unit eigenvalue. If $U_A^2$ does not have any eigenvalue equal to 1, then $U_A$ cannot have any eigenvector at all. For a concrete example in $\fd{C}^2$ consider the anti-unitary $U_A = UK$ where
\be U = \frac{1}{\sqrt{2}}\bmp1 & 1 \\ -1 & 1\emp \ , \quad U_A^2 = U \bar{U} = \bmp 0 & 1 \\ -1 & 0\emp \ . \ee
The eigenvalues of $U_A^2$ are $\pm i$, therefore $U_A$ has no eigenvector.
\end{example}

Wigner has characterized the eigenvectors of anti-unitaries \cite{Wigner1960}. It is not straightforward to generalize his results to g-unitaries. However, by restricting ourselves to a special kind of g-unitaries, namely those that have the MUB-cycling property, we are able to provide a complete characterization of their eigenvalues and eigenvectors. The results are summarized in Theorem \ref{thm-eigenvectors}, which states that MUB-cycling g-unitaries always have eigenvectors, which are unique up to multiplication by a scalar, and that we can always find an eigenvector with unit eigenvalue.

Let us first set up some notations and definitions. For the rest of this section, we will always assume that the dimension $d = p^n$ is an odd prime power where the exponent $n$ is odd. $G$ is a fixed element of $\GL_p(2,\fd{F}_d)$ with eigenvalues $\eta^r$ and $\eta^{rd}$ (as in Lemma \ref{lem-eta}), where $r$ is co-prime to $d+1$ so that $G$ is a MUB-cycler (by Theorem \ref{thm-MUBcycler}). We will use $t=\eta^r+\eta^{rd}$ and $\Delta = \eta^{r(d+1)}$ to denote the trace and determinant of $G$, respectively.
\begin{definition}
If we define the multiplicative order of $G$ to be the smallest positive integer $m$ for which $\Delta^m = 1$, then it follows from (\ref{def-eta}) that $\bar{\theta}^{mr(d+1)(d-1)/(p-1)}=1$ (where $\bar{\theta}$ is a primitive element of $\fd{F}_{d^2}$), which is true if and only if $mr$ is a multiple of $p-1$. Since $r$ is odd because it is co-prime to $d+1$, and $p-1$ is even, $m$ must be even. We will therefore use $2m_0$ to denote the multiplicative order of $G$.
\end{definition}
\begin{remark}
It then follows that $\Delta^{m_0} = \pm 1$. Since $2m_0$ is the smallest positive integer for which $\Delta^{2m_0} = 1$, we cannot have $\Delta^{m_0} = 1$, and therefore it must be the case that
\be \label{eq-deltam0} \Delta^{m_0} = - 1 \ . \ee
This implies that $G^{2m_0}$ is a symplectic matrix and $G^{m_0}$ is an anti-symplectic matrix, and correspondingly, $U_G^{2m_0}=U_{G^{2m_0}}$ is unitary and $U_G^{m_0}=U_{G^{m_0}}$ is anti-unitary.
\end{remark}
\begin{definition}
If $\{ \ket{x} \}$ denotes the standard basis, then the parity operator $A$ is defined as
\be \label{def-parity} A = \sum_x \ket{-x}\bra{x} \ .\ee
Alternatively, $A$ can also be defined from the unitary representation of the (unique) element of order 2 in the symplectic group:
\be \label{def-parity2} A = (-1)^{(p-1)/2}U_P \qquad \text{where} \qquad P = \bmp -1 & 0 \\ 0 & -1 \emp \ . \ee
\end{definition}

\begin{theorem}\label{thm-eigenvectors}
Let $d=p^n$ be an odd prime power, where the exponent $n$ is odd, and let $G \in \GL_p(2,\fd{F}_d)$ be a MUB cycler. Let $\fd{Q}(\omega)^d$ be the subspace of the Hilbert space consisting of all vectors whose components (in the standard basis) belong to the cyclotomic field $\fd{Q}(\omega)$.
\begin{enumerate}
\item There exists a non-zero $\qs \in \fd{Q}(\omega)^d$ such that $U_G \qs = \qs$.
\item $\ket{\phi} \in \fd{Q}(\omega)^d$ is an eigenvector of $U_G$ if and only if $\ket{\phi} = \mu \qs$ for some $\mu \in \fd{Q}(\omega)$.
\item The eigenspace of $U_G^{2m_0}$ with eigenvalue 1 is one-dimensional and spanned by $\qs$.
\item $\qsp$ is an eigenvector of the parity operator with eigenvalue $(-1)^{(p-1)/2}$.
\end{enumerate}
\end{theorem}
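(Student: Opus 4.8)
The four parts are tightly linked, and the plan is to prove them in the order 3, 1, 2, 4, with part~3 as the technical heart. The starting observation is the tower already set up before the statement: since $\Delta^{m_0}=-1$ by \cref{eq-deltam0}, the Galois part of $U_G^{m_0}=U_{G^{m_0}}$ is $g_{\Delta^{m_0}}=g_{-1}$, which is ordinary complex conjugation, so $U_G^{m_0}$ is a genuine anti-unitary $U_{S''}K$ defined on all of $\fd{C}^d$, while $U_G^{2m_0}=U_{G^{2m_0}}$ is an ordinary symplectic unitary (using faithfulness, \cref{thm-faithfulness}, since $n$ is odd). Everything then reduces to understanding the eigenvalue-$1$ eigenspace $V_1$ of $U_{G^{2m_0}}$.

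For part~3 I would first identify $G^{2m_0}$. Because $G$ is type~2 with $G^{d+1}=\Delta I$ (\cref{thm-GLtypes}) and $\Delta$ has order $2m_0$, a short computation with the eigenvalues $\eta^{r}$, $\eta^{rd}$ shows that $B:=U_{G^{2m_0}}$ has order exactly $d+1$ and that $G^{2m_0}$ generates the nonsplit (elliptic) torus of $\SL(2,\F_d)$, its powers $G^{2m_0 j}$ ($j=0,\dots,d$) having eigenvalues $\zeta^{\pm j}$ with $\zeta$ of order $d+1$. Thus $B$ is a unitary of order $d+1$ on a $d$-dimensional space, its eigenvalues are $(d+1)$-st roots of unity, and at least one is absent. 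The key claim is that $B$ is multiplicity-free: evaluating the traces $\Tr(B^j)=\Tr(U_{G^{2m_0 j}})$ as Gauss sums should give $|\Tr(B^j)|=1$ for every $j\not\equiv 0$, all ``aligned'' as $-\zeta^{jk_0}$, and feeding this into $m_k=\tfrac{1}{d+1}\sum_{j}\Tr(B^j)\zeta^{-jk}$ forces exactly one root of unity to have multiplicity $0$ and all others multiplicity $1$. Since part~1 will exhibit a vector in $V_1$, eigenvalue $1$ is present, so $\dim V_1=1$. This Gauss-sum evaluation for elliptic elements is the step I expect to be the main obstacle.

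Given $\dim V_1=1$, parts~1 and~2 are comparatively soft. The eigenvector $\ket{v}$ of $B$ can be chosen in $\fd{Q}(\omega)^d$ by solving $(B-\eye)\ket{v}=0$ over the cyclotomic field, and since $U_G$ commutes with $U_G^{2m_0}$ it maps $V_1$ into itself, giving $U_G\ket{v}=c\ket{v}$ with $c\in\fd{Q}(\omega)^{\times}$. Iterating, $U_G^{2m_0}\ket{v}=\big(\prod_{i=0}^{2m_0-1}g_\Delta^{\,i}(c)\big)\ket{v}=\ket{v}$, so the norm of $c$ for the cyclic extension $\fd{Q}(\omega)/F$, where $F$ is the fixed field of the order-$2m_0$ automorphism $g_\Delta$, equals $1$; Hilbert's Theorem~90 then supplies $\mu$ with $g_\Delta(\mu)/\mu=c^{-1}$, and $\qs:=\mu\ket{v}$ satisfies $U_G\qs=\qs$, proving part~1. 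For part~2, any eigenvector $\qsp\in\fd{Q}(\omega)^d$ of $U_G$ is an eigenvector of the anti-unitary $U_G^{m_0}$; by the argument of the Example preceding the statement its square $U_G^{2m_0}$ then fixes $\qsp$, so $\qsp\in V_1=\fd{Q}(\omega)\,\qs$, while conversely every $\mu\qs$ is an eigenvector with eigenvalue $g_\Delta(\mu)/\mu$.

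Finally, part~4 is a clean corollary. Since $U_G\qs=\qs$ we have $U_G^{\,m_0(d+1)}\qs=\qs$ by iterating the fixed point. But $G^{m_0(d+1)}=(G^{d+1})^{m_0}=(\Delta I)^{m_0}=\Delta^{m_0}I=-I=P$, so by faithfulness $U_G^{\,m_0(d+1)}=U_P$, whence $U_P\qs=\qs$. Combining with the identity $A=(-1)^{(p-1)/2}U_P$ from \cref{def-parity2} gives $A\qs=(-1)^{(p-1)/2}\qs$, and since $A$ is linear the same eigenvalue holds for every $\qsp=\mu\qs$. This establishes that the MUB-cycler's eigenvector is an eigenvector of the parity operator with eigenvalue $(-1)^{(p-1)/2}$.
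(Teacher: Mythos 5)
Your parts 1, 2 and 4 are correct and are, modulo packaging, the paper's own arguments: the Hilbert-90 rescaling is \cref{lem-UGeigenvector}, the anti-unitary squaring trick is \cref{lem-UG2m0}, and the identity $G^{m_0(d+1)}=\Delta^{m_0}I=P$ combined with faithfulness is \cref{lem-parity}. The gap is part 3, which you rightly call the technical heart — and which you do not actually prove. Your multiplicity-free claim rests on the assertion that the traces $\Tr\bigl(U_{G^{2m_0}}^u\bigr)$ ``should'' all have modulus one and be aligned as $-\zeta^{uk_0}$; that is precisely the step carrying the content of the paper's \cref{lem-dimS1}, and it cannot be waved through. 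The paper does it by importing the closed-form trace formula $\Tr(U_S)=l(\Tr(S)-2)$ for symplectic unitaries ($l$ the Legendre symbol; Theorem 5 of \cite{Appleby2009}), writing $\Tr(G^{2m_0u})-2=(\eta^{rm_0u}-\eta^{-rm_0u})^2$, and showing from $\Delta^{m_0}=-1$ in \cref{eq-deltam0} that $\eta^{rm_0u}-\eta^{-rm_0u}\in\Fd$ exactly when $u$ is odd; hence $\Tr\bigl(U_{G^{2m_0}}^u\bigr)=(-1)^{u+1}$ for $1\le u\le d$, and $\dim\mathcal{S}_1=\frac{1}{d+1}\bigl(d+\sum_{u=1}^d(-1)^{u+1}\bigr)=1$.

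Moreover, even granting your weaker, unaligned form of the claim, your logic is circular. Multiplicity-freeness of an order-$(d+1)$ unitary on a $d$-dimensional space tells you that exactly one $(d+1)$-st root of unity is missing from the spectrum, but not which one. You close this hole by saying that part 1 will exhibit a vector in $V_1$, so eigenvalue $1$ is present — but your part 1 begins by solving $(B-\eye)\ket{v}=0$ over $\fd{Q}(\omega)$, which has a nonzero solution only if $1$ is already known to be an eigenvalue of $B$, i.e.\ only if part 3 is already in hand. The paper never faces this issue because it computes the multiplicity of the eigenvalue $1$ itself (your Fourier coefficient at $k=0$, namely $\frac{1}{d+1}\sum_u\Tr(B^u)$): with the explicit trace values this comes out equal to $1$, which establishes existence and uniqueness in one stroke; a cyclotomic fixed vector of $B$ then exists because the entries of $U_{G^{2m_0}}$ lie in $\fd{Q}(\omega)$, so the singular system $(U_{G^{2m_0}}-\eye)X=0$ can be solved over that field. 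To repair your argument you must either carry the trace evaluation far enough to identify the missing eigenvalue (with the paper's values it is $-1=\zeta^{(d+1)/2}$, not $1$), or supply an independent reason that $1$ lies in the spectrum; the aligned-Gauss-sum form you posit, with $k_0$ left unknown, cannot do either.
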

\begin{proof}
The theorem in an immediate consequence of the following lemmas.
\end{proof}

\begin{lemma}\label{lem-UG2m0}
With notations and definitions as above, suppose $\qsp \in \fd{Q}(\omega)^d$ is an eigenvector of $U_G$ with eigenvalue $\lambda \in \fd{Q}(\omega)$, i.e.
\be\label{eq-UGphi}
U_G \qsp = \lambda \qsp \ ,
\ee
then it holds that
\be
U^{2m_0}_G \qsp = \qsp \ .
\ee
\end{lemma}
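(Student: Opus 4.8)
The plan is to exploit the two facts recorded in the remark just before the lemma: that $U_G^{m_0}=U_{G^{m_0}}$ is anti-unitary, because $\det(G^{m_0})=\Delta^{m_0}=-1$, while $U_G^{2m_0}=U_{G^{2m_0}}$ is an ordinary unitary, because $\Delta^{2m_0}=1$ makes $G^{2m_0}$ symplectic. The strategy is to first show that $\qsp$ is simultaneously an eigenvector of the anti-unitary $U_G^{m_0}$, then to reproduce the anti-unitary squaring argument from the example at the beginning of \cref{sec:eigenvectors} so as to pin the eigenvalue of $U_G^{2m_0}$ on $\qsp$ to a real non-negative number, and finally to invoke unitarity to force that number to be $1$.

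First I would record the one piece of non-linear bookkeeping on which everything rests: for any scalar $\nu\in\fd{Q}(\omega)$ and any $\ket{\chi}\in\fd{Q}(\omega)^d$, since $g_\Delta$ acts entry-wise and is a field automorphism, $U_G(\nu\ket{\chi}) = U_S g_\Delta(\nu\ket{\chi}) = g_\Delta(\nu)\,U_S g_\Delta(\ket{\chi}) = g_\Delta(\nu)\,U_G\ket{\chi}$. Feeding the hypothesis $U_G\qsp=\lambda\qsp$ into this rule and iterating gives, by a one-line induction, $U_G^{k}\qsp = \big(\prod_{i=0}^{k-1} g_\Delta^{i}(\lambda)\big)\qsp$ for every $k\ge 1$. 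In particular $\qsp$ is an eigenvector of $U_G^{m_0}$ with eigenvalue $\mu \equiv \prod_{i=0}^{m_0-1} g_\Delta^{i}(\lambda)\in\fd{Q}(\omega)$.

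Next I would apply $U_G^{m_0}$ once more. Writing $U_G^{m_0}=U_{S'}g_{-1}$ with $g_{-1}$ complex conjugation, the same entry-wise computation shows that $U_G^{m_0}$ is genuinely anti-linear on $\fd{Q}(\omega)^d$, so $U_G^{2m_0}\qsp = U_G^{m_0}(\mu\qsp) = \overline{\mu}\,U_G^{m_0}\qsp = \abs{\mu}^2\qsp$, exactly mirroring the qubit example. Thus the unitary operator $U_G^{2m_0}$ has the non-zero eigenvector $\qsp$ with eigenvalue $\abs{\mu}^2$, which is manifestly real and non-negative. Since every eigenvalue of a unitary operator has modulus $1$, this forces $\abs{\mu}^2=1$, whence $U_G^{2m_0}\qsp=\qsp$, as claimed.

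The computations are all routine; the only point demanding genuine care is the non-linearity of $U_G$, namely remembering that scalars are pulled through $U_G$ with a Galois twist $\nu\mapsto g_\Delta(\nu)$ rather than unchanged, and correspondingly that $U_G^{m_0}$ acts anti-linearly. The main conceptual hurdle — identifying which power of $U_G$ is anti-unitary and which is unitary — has already been cleared by the remark establishing $\Delta^{m_0}=-1$, so once the scalar-pull-through rule is in place the proof is essentially a transcription of the anti-unitary example into the g-unitary setting.
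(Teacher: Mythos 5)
Your proof is correct, and up to its final step it is exactly the paper's argument: you iterate the scalar-pull-through rule to get $U_G^{m_0}\qsp = \mu\qsp$ with $\mu = \lambda\, g_\Delta(\lambda)\cdots g_\Delta^{m_0-1}(\lambda)$ (the paper's $\kappa$), and then use the anti-linearity of $U_G^{m_0}$ (via $\Delta^{m_0}=-1$, so its Galois part is complex conjugation) to obtain $U_G^{2m_0}\qsp = \mu^*\mu\,\qsp$. Where you diverge is in forcing $\mu^*\mu=1$. The paper invokes faithfulness of the representation (\cref{thm-faithfulness}) together with $G^{d+1}=\Delta I$ from \cref{thm-GLtypes} to conclude $U_G^{2m_0(d+1)}=U_{G^{2m_0(d+1)}}=\eye$, hence $(\kappa^*\kappa)^{d+1}=1$, and then uses the fact that $\kappa^*\kappa$ is real and positive. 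You instead observe that $\qsp$ is a non-zero eigenvector of the unitary operator $U_G^{2m_0}$, so its eigenvalue has modulus $1$, and a non-negative real number of modulus $1$ equals $1$. Your finish is the more elementary one---it is literally the argument the paper itself uses in the anti-unitary qubit example opening \cref{sec:eigenvectors}, and it needs only the unitarity of $U_G^{2m_0}$, which the remark preceding the lemma already supplies---whereas the paper's route additionally exhibits the finite order of the operator, $U_G^{2m_0(d+1)}=\eye$, tying the conclusion to the group structure of $\GL_p(2,\Fd)$. Both are sound, and nothing in your version is circular: the unitarity of $U_G^{2m_0}$ rests on $\Delta^{2m_0}=1$ (and the identification $U_G^{2m_0}=U_{G^{2m_0}}$ recorded in the remark), not on the eigenvalue claim being proved.
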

\begin{proof}
Applying $U_G$ to (\ref{eq-UGphi}) repeatedly we will obtain
\be \label{eq-UGm0} U_G^{m_0} \qsp = \kappa \qsp \ , \ee
where
\be \kappa = \lambda g_{\Delta}(\lambda) \dots g_{\Delta}^{m_0-1}(\lambda) \ . \ee
Then we apply the anti-unitary $U_G^{m_0}$ to (\ref{eq-UGm0}) to obtain
\be U_G^{2m_0} \qsp = \kappa^* \kappa \qsp \ , \ee
where $\kappa^*$ is the complex conjugate of $\kappa$ and $U_G^{2m_0}$ is unitary. As $n$ is odd, by Theorem \ref{thm-faithfulness} the representation of $\GL_p(2,\F_d)$ is faithful, and therefore 
\be U_G^{2m_0(d+1)} = U_{G^{2m_0(d+1)}} = \eye \ . \ee 
It then follows that $(\kappa^* \kappa)^{d+1} = 1$, which implies $\kappa^* \kappa = 1$ because it has to be a real positive number. Hence, we conclude that $\qsp$ is an eigenvector of $U_G^{2m_0}$ with unit eigenvalue.
\end{proof}

\begin{lemma}\label{lem-dimS1}
With notations and definitions as above, let $\mathcal{S}_{1}$ be the eigenspace of $U_G^{2m_0}$ corresponding to eigenvalue $1$, then $\mathcal{S}_{1}$ is one-dimensional.
\end{lemma}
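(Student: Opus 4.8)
The plan is to set $V := U_G^{2m_0} = U_{G^{2m_0}}$ and study its spectrum directly, since $\mathcal{S}_1 = \ker(V-\eye)$. First I would fix the group-theoretic data. Writing $g_0 := G^{2m_0}$, its determinant is $\Delta^{2m_0}=1$ (as $2m_0$ is the multiplicative order of $\Delta$), so $g_0 \in \SL(2,\F_d)$; and since $G$ is a type-2 MUB-cycler its order is $2m_0(d+1)$ (combining $G^{d+1}=\Delta I$ from \cref{thm-GLtypes} with the suborder $d+1$). Hence $g_0$ has order exactly $d+1$, so it generates the non-split torus $T$ and is its unique maximal torus; by faithfulness of the representation for odd $n$ (\cref{thm-faithfulness}), $V^{d+1}=\eye$ with $V$ of order $d+1$, so every eigenvalue of $V$ is a $(d+1)$-th root of unity and $V^0,\dots,V^d$ realise all of $T$.

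For the upper bound I would use that the orthogonal projector onto $\mathcal{S}_1$ is $\frac{1}{d+1}\sum_{j=0}^{d}V^j$, so $\dim\mathcal{S}_1 = \frac{1}{d+1}\sum_{j=0}^{d}\Tr(V^j)$. Here $V^0=\eye$ contributes $d$; the involution $g_0^{(d+1)/2}$ is the central element $-I$, so $V^{(d+1)/2}=U_{-I}=(-1)^{(p-1)/2}A$ contributes $\pm 1$; and the remaining $d-1$ matrices $g_0^j$ are regular elliptic (the only elements of $T$ with eigenvalues in $\F_d$ are $\pm I$), so after reducing to the canonical form $\begin{pmatrix}0 & -1\\ 1 & t\end{pmatrix}$ of \cref{def-canonical} each $\Tr(U_{g_0^j})$ is a normalised Gauss sum $h(\beta)\sum_x \omega^{\tr(cx^2)}$ with $c=(t-2)/(2\beta)\neq 0$, hence of modulus $1$. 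Therefore $\dim\mathcal{S}_1 \le \frac{1}{d+1}\bigl(d+1+(d-1)\bigr)=\frac{2d}{d+1}<2$, giving $\dim\mathcal{S}_1 \le 1$.

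For the lower bound I would exploit the anti-unitary $B := U_G^{m_0}=U_{G^{m_0}}$, which is genuinely anti-unitary since $\det(G^{m_0})=\Delta^{m_0}=-1$, and which satisfies $B^2=V$ and $BV=VB$. Anti-linearity gives $V(Bv)=B(Vv)=\bar\mu\,Bv$ whenever $Vv=\mu v$, so $B$ carries the $\mu$-eigenspace of $V$ bijectively onto the $\bar\mu$-eigenspace and the multiplicities obey $m_\mu=m_{\bar\mu}$. The only self-conjugate $(d+1)$-th roots are $\pm 1$. On the $(-1)$-eigenspace $B^2=-\eye$, i.e. $B$ restricts to an anti-unitary squaring to $-\eye$; such a map forces even dimension (for any $v$ there one checks $v\perp Bv$), so $m_{-1}$ is even. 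Reducing $d=\sum_\mu m_\mu$ modulo $2$, the conjugate pairs contribute the even amounts $2m_\mu$ and $m_{-1}$ drops out, leaving $m_1\equiv d\equiv 1 \pmod 2$. Thus $\dim\mathcal{S}_1=m_1$ is odd, hence nonzero, and together with the upper bound $\dim\mathcal{S}_1=1$.

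I expect the main obstacle to be the lower bound rather than the upper one: the trace estimate is a routine modulus computation for quadratic Gauss sums, but showing $\mathcal{S}_1\neq 0$ genuinely requires the interplay between the anti-unitary $B$ — which supplies both the conjugation symmetry $m_\mu=m_{\bar\mu}$ and the even multiplicity of the eigenvalue $-1$ — and the parity of $d$. A secondary point that must be handled carefully is the order bookkeeping: confirming that $g_0$ has order precisely $d+1$ and that $V$ inherits this order through faithfulness, since both the averaging projector and the count of regular elliptic powers depend on it.
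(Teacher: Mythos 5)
Your proof is correct, but it takes a genuinely different route from the paper's. The paper computes $\dim\mathcal{S}_1$ \emph{exactly} in a single pass: it invokes the closed-form trace formula $\Tr(U_S)=l(t-2)$ (Theorem 5 of the cited Appleby reference), shows $\Tr(G^{2m_0u})\neq 2$ for $1\le u\le d$, and then determines precisely when $\eta^{rm_0u}-\eta^{-rm_0u}$ lies in $\F_d$ (iff $u$ is odd), yielding $\Tr(U_{G^{2m_0}}^u)=(-1)^{u+1}$ and hence $\dim\mathcal{S}_1=\frac{1}{d+1}\bigl(d+\sum_{u=1}^d(-1)^{u+1}\bigr)=1$ on the nose. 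You instead squeeze the dimension between two bounds: an upper bound that needs only the crude estimate $\abs{\Tr(V^j)}\le 1$ for $j\neq 0$ (modulus of a normalised Gauss sum, plus $\Tr(U_{-I})=\pm 1$), and a lower bound that is an ingredient absent from the paper — a Kramers-degeneracy argument using the anti-unitary $B=U_G^{m_0}$, which forces $m_\mu=m_{\bar\mu}$, even multiplicity at $-1$ (via $v\perp Bv$ when $B^2=-\eye$), and hence $m_1\equiv d\equiv 1\pmod 2$. What each buys: the paper's computation is sharper and shorter given the cited exact trace formula, and the exact signs $(-1)^{u+1}$ are reusable data; your route avoids both the Legendre-symbol evaluation and the delicate parity analysis of $u$, replacing them with a soft structural argument, and it establishes $\mathcal{S}_1\neq 0$ intrinsically (the paper only gets non-triviality as a byproduct of the exact count, and still needs it in Lemma \ref{lem-UGeigenvector}). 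The price you pay is the extra order bookkeeping — you correctly verify that $g_0=G^{2m_0}$ has order exactly $d+1$ and that $g_0^{(d+1)/2}=\Delta^{m_0}I=-I$, both of which your "regular elliptic" count and your identification of the central involution genuinely require — plus the standard but unstated facts that trace is invariant under the conjugation $g_0^j=SG_cS^{-1}$ (licensed by faithfulness, \cref{thm-faithfulness}) and that the quadratic Gauss sum over $\F_d$ has modulus $\sqrt{d}$.
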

\begin{proof}
To determine the dimensionality of $\mathcal{S}_{1}$, we will calculate the trace of the projection operator onto that subspace. We first note the following linear algebraic fact. If $U$ is a unitary of order $k$, and $\tau$ is an eigenvalue of $U$ ($\tau$ therefore has to be a $k$-th root of unity), then the projection operator onto the eigenspace corresponding to $\tau$ is given by
\be P_{\tau} = (\eye + \tau^{-1}U + \dotsc + \tau^{-(k-1)}U^{k-1})/k  \ . \ee
One can prove this by verifying that $P_{\tau}^2 = P_{\tau}$ so that $P_{\tau}$ is a projection operator, and that $P_{\tau}$ projects any vector into an eigenvector with eigenvalue $\tau$ and leaves all eigenvectors with eigenvalue $\tau$ invariant. Particularly in our case, the projection operator onto $\mathcal{S}_1$ is given by
\be\label{def-P1} P_1 = \frac{1}{d+1} \sum_{u=0}^d U_{G^{2m_0}}^u \ , \ee
and the dimensionality of $\mathcal{S}_1$ is therefore
\be \label{eq-dimS1} \dim \mathcal{S}_1  =   \Tr(P_1) =  \frac{1}{d+1} \sum_{u=0}^d \Tr \left(U_{G^{2m_0}}^u \right)\ . \ee
To calculate these traces we make use of a result from Theorem 5 in Ref.~\cite{Appleby2009}, applicable to any symplectic $S \in \SL(2,\fd{F}_d)$, namely
\be \Tr (U_S) = l(t-2) \ , \ee
where $t = \Tr(S)$ and $l(x)$ is the Legendre symbol defined in (\ref{def-Legendre}). Since the eigenvalues of $G$ are $\eta^r$ and $\eta^{dr}$, we have
\be 
\begin{split}
\Tr(G^{2m_0u}) & = \eta^{2rm_0 u} + \eta^{2drm_0 u} \\
& = \eta^{2rm_0 u} + \eta^{-2rm_0 u} \\
& = (\eta^{rm_0 u} - \eta^{-rm_0 u})^2 + 2 \ ,
\end{split} 
\ee where in the second step we use the fact that $\Delta^{2m_0} = \eta^{2rm_0}\eta^{2drm_0} = 1$. Note that $\Tr(G^{2m_0u})=2 \ $ if and only if $\eta^{2rm_0u}=1$, or equivalently, $2rm_0u$ is a multiple of $(p-1)(d+1)$. Since $2m_0$ is the order of $\Delta$, which is an element of a group of order $p-1$,  $2m_0$ must be a factor of $(p-1)$. Therefore $ru$ is a multiple of $(d+1)(p-1)/2m_0$, hence a multiple of $d+1$. Taking into account the fact that $r$ is co-prime to $d+1$, we deduce that $u$ must be a multiple of $d+1$ and therefore must be zero, since $0\le u \le d$. Therefore, for $1 \le u \le d$, we have $\Tr(G^{2m_0u})\ne2$, and
\be \Tr \left(U_{G^{2m_0}}^u \right)\ = \Tr \left(U_{G^{2m_0u}} \right) = 
l\left((\eta^{rm_0 u} - \eta^{-rm_0 u})^2\right)  \ , \ee
which equals $1$ if $\eta^{rm_0 u} - \eta^{-rm_0 u} \in \fd{F}_d$, and $-1$ otherwise. To determine this, we notice
\be \begin{split}
(\eta^{rm_0 u} - \eta^{-rm_0 u})^d &= \eta^{drm_0 u} - \eta^{-drm_0 u} \\
&= (-1)^{u+1} (\eta^{rm_0 u} - \eta^{-rm_0 u}) \ , 
\end{split} \ee
where in the last step we make use of the fact that $\eta^{drm_0} =- \eta^{-rm_0}$ since $\Delta^{m_0} = -1$ according to (\ref{eq-deltam0}). Hence, $\eta^{rm_0 u} - \eta^{-rm_0 u} \in \fd{F}_d$ if and only if $u$ is odd, and therefore
\be \Tr \left(U_{G^{2m_0}}^u \right) = (-1)^{u+1} \qquad 1 \le u \le d \ . \ee
For $u=0$, clearly $\Tr (U^u_{G^{2m_0}}) = \Tr (\eye) = d$. We now evaluate (\ref{eq-dimS1}) to conclude the proof:
\be \dim \mathcal{S}_1 = \frac{1}{d+1} \left( d + \sum_{u=1}^d (-1)^{u+1} \right) = 1 \ . \ee
\end{proof}
\begin{lemma}\label{lem-UGeigenvector}
Every MUB-cycling $U_G$ has a non-zero eigenvector $\qs \in \fd{Q}(\omega)^d$ with unit eigenvalue, i.e.
\be U_G \qs = \qs \ . \ee
\end{lemma}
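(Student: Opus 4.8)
The plan is to build the eigenvector from the two preceding lemmas and then correct its eigenvalue by a Galois-theoretic rescaling. First I would invoke Lemma~\ref{lem-dimS1}: the eigenspace $\mathcal{S}_1$ of the unitary $U_G^{2m_0}=U_{G^{2m_0}}$ belonging to eigenvalue $1$ is one-dimensional. Since $U_{G^{2m_0}}$ has all entries in $\fd{Q}(\omega)$ and the eigenvalue $1$ also lies in $\fd{Q}(\omega)$, the kernel of $U_{G^{2m_0}}-\eye$ is the solution set of a linear system over $\fd{Q}(\omega)$; hence $\mathcal{S}_1$ is spanned by a nonzero vector $\qs\in\fd{Q}(\omega)^d$.

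Next I would produce an eigenvector of $U_G$ itself. Because $U_G$ commutes with $U_G^{2m_0}$, maps $\fd{Q}(\omega)^d$ into itself, and is invertible (with inverse $U_{G^{-1}}$), the vector $U_G\qs$ is again a nonzero element of $\mathcal{S}_1\cap\fd{Q}(\omega)^d$. As this intersection is one-dimensional over $\fd{Q}(\omega)$, we must have $U_G\qs=\mu\qs$ for some $\mu\in\fd{Q}(\omega)^{\times}$. This already gives an eigenvector; the remaining, and genuinely delicate, task is to arrange that the eigenvalue equals $1$, keeping in mind the anti-unitary example showing that rescaling a g-unitary eigenvector changes its eigenvalue.

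To pin down the eigenvalue I would iterate the relation $U_G\qs=\mu\qs$ using the power rule for g-unitaries, obtaining $U_G^{k}\qs=\bigl(\prod_{i=0}^{k-1}g_{\Delta}^{i}(\mu)\bigr)\qs$. Taking $k=2m_0$ and combining with Lemma~\ref{lem-UG2m0}, which gives $U_G^{2m_0}\qs=\qs$, yields the constraint $\prod_{i=0}^{2m_0-1}g_{\Delta}^{i}(\mu)=1$. Since $g_{\Delta}$ has order exactly $2m_0$ in the cyclic Galois group of $\fd{Q}(\omega)/\fd{Q}$, this product is precisely the relative norm $N_{\fd{Q}(\omega)/E}(\mu)$, where $E$ is the fixed field of $\langle g_{\Delta}\rangle$.

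Finally I would apply Hilbert's Theorem~90 to the cyclic extension $\fd{Q}(\omega)/E$: the norm condition $N(\mu)=1$ furnishes $\nu\in\fd{Q}(\omega)^{\times}$ with $g_{\Delta}(\nu)/\nu=\mu^{-1}$. Replacing $\qs$ by $\nu\qs$ and computing $U_G(\nu\qs)=g_{\Delta}(\nu)\,U_G\qs=g_{\Delta}(\nu)\,\mu\,\qs=\mu\,(g_{\Delta}(\nu)/\nu)\,(\nu\qs)=\nu\qs$ shows the rescaled vector, still in $\fd{Q}(\omega)^d$, satisfies $U_G\qs=\qs$. The main obstacle is not the existence of an eigenvector, which drops out of the one-dimensionality in Lemma~\ref{lem-dimS1}, but the normalization of its eigenvalue to $1$; this is exactly where the norm identity coming from Lemma~\ref{lem-UG2m0} together with Hilbert~90 carries the argument.
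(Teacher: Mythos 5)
Your proposal is correct and follows essentially the same route as the paper's own proof: obtain a cyclotomic spanning vector of $\mathcal{S}_1$ from Lemma~\ref{lem-dimS1}, use commutation with $U_G^{2m_0}$ plus one-dimensionality to get $U_G\qs=\mu\qs$ with $\mu\in\fd{Q}(\omega)$, derive the norm-one condition $\mu\, g_{\Delta}(\mu)\cdots g_{\Delta}^{2m_0-1}(\mu)=1$ from Lemma~\ref{lem-UG2m0}, and then rescale using Hilbert's Theorem~90 (the paper's Theorem~\ref{thm-Hilbert90}). Your fixed-field/relative-norm phrasing of the Hilbert~90 step is just a more explicit formulation of the same variant the paper invokes, so the two arguments coincide.
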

\begin{proof}
From Lemma \ref{lem-dimS1} we know that the unitary $U_{G^{2m_0}}$ has exactly one eigenvalue equal to 1, implying $\det(U_{G^{2m_0}} - \eye) =0$, which in turns means the system of linear equations $(U_{G^{2m_0}} - \eye) X = 0$ has a non-trivial solution. Since the matrix elements of $U_{G^{2m_0}}$ are in the cyclotomic field $\fd{Q}(\omega)$, there exists a non-zero vector $\qsp \in \fd{Q}(\omega)^d$ so that $U_{G^{2m_0}}\qsp = \qsp$. Since
\be U_{G^{2m_0}}U_{G} \qsp = U_G U_{G^{2m_0}} \qsp = U_G \qsp \ee
and since $\mathcal{S}_1$ is one-dimensional, we must have
\be U_G \qsp = \lambda \qsp \ , \qquad \lambda \in \fd{Q}(\omega) \ . \ee
Repeatedly applying $U_G$ to this equation and recalling the fact $U_G^{2m_0} = \eye$, we see that $\lambda$ has to satisfy
\be \lambda g_{\Delta}(\lambda)\dotsc g_{\Delta}^{2m_0-1}(\lambda) = 1 \ .\ee
By Theorem \ref{thm-Hilbert90} (a variant of Hilbert's Theorem 90), there exists $\mu \in \fd{Q}(\omega)$ such that
\be \lambda = \mu /g_{\Delta}(\mu) \ . \ee
If we define $\qs = \mu \qsp$, it immediately follows that $U_G \qs  = \qs$ as desired.
\end{proof}
\begin{lemma}\label{lem-parity} Any eigenvector of a MUB-cycler $U_G$ is also an eigenvector of the parity operator $A$ defined in (\ref{def-parity2}), with eigenvalue $(-1)^{(p-1)/2}$.
\end{lemma}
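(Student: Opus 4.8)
The plan is to recognise the parity matrix $P$ as a specific power of the MUB-cycler $G$, so that the parity operator becomes (up to its fixed sign) a power of the g-unitary $U_G$; the claim then collapses to the triviality that a vector fixed by $U_G$ is fixed by every power of $U_G$. Throughout I use the decomposition $U_G=U_S g_\Delta$ and the notation of this section.

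First I would reduce to the distinguished eigenvector. By \cref{lem-UGeigenvector} there is a nonzero $\qs\in\fd{Q}(\omega)^d$ with $U_G\qs=\qs$, and combining \cref{lem-UG2m0} with \cref{lem-dimS1} shows that this $\qs$ spans $\mathcal{S}_1$ and that every eigenvector of $U_G$ is a scalar multiple $\mu\qs$ with $\mu\in\fd{Q}(\omega)$. Since $A=(-1)^{(p-1)/2}U_P$ is an ordinary (linear) unitary, $A(\mu\qs)=\mu A\qs$, so it suffices to prove $A\qs=(-1)^{(p-1)/2}\qs$, i.e. $U_P\qs=\qs$.

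The substantive step is the matrix identity $P=G^{(d+1)m_0}$. Because a MUB-cycler is of type 2, \cref{thm-GLtypes} gives $G^{d+1}=\Delta I$ from (\ref{eq-Gd1}); raising this to the $m_0$-th power and using $\Delta^{m_0}=-1$ from (\ref{eq-deltam0}) yields
\be
G^{(d+1)m_0}=(G^{d+1})^{m_0}=\Delta^{m_0}I=-I=P ,
\ee
where $P=\bmp -1 & 0 \\ 0 & -1 \emp$ is precisely the order-$2$ symplectic matrix of (\ref{def-parity2}). Since $n$ is odd, the representation is faithful by \cref{thm-faithfulness}, so $U_P=U_{G^{(d+1)m_0}}=U_G^{(d+1)m_0}$.

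Finally, because $U_G\qs=\qs$ the vector $\qs$ is fixed by every power of $U_G$ (a purely set-theoretic fact, valid for the non-linear $U_G$ as well), hence $U_P\qs=U_G^{(d+1)m_0}\qs=\qs$. Therefore $A\qs=(-1)^{(p-1)/2}U_P\qs=(-1)^{(p-1)/2}\qs$, and by the reduction above $A\qsp=(-1)^{(p-1)/2}\qsp$ for every eigenvector $\qsp$ of $U_G$. The only real content lies in the identity $P=G^{(d+1)m_0}$, which is where the type-2 relation $G^{d+1}=\Delta I$ and the sign $\Delta^{m_0}=-1$ enter; once it is in hand, faithfulness and the fixed-point observation close the argument with no further obstacle.
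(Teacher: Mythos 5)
Your proposal is correct and follows essentially the same route as the paper: the identity $G^{(d+1)m_0} = \Delta^{m_0} I = -I = P$ (from \cref{thm-GLtypes} and \cref{eq-deltam0}), faithfulness of the representation, and the fact that eigenvectors of $U_G$ are fixed by $U_G^{2m_0}$. The only cosmetic difference is that the paper applies \cref{lem-UG2m0} directly to an arbitrary eigenvector $\qsp$ and writes $A = (-1)^{(p-1)/2}\left(U_G^{2m_0}\right)^{(d+1)/2}$, whereas you first reduce to the unit-eigenvalue eigenvector via \cref{lem-dimS1} and then extend by linearity of $A$; both closures are immediate once the matrix identity is in hand.
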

\begin{proof}
Appealing again to the fact that $\Delta^{m_0}=-1$, and the result from Theorem \ref{thm-GLtypes} that $G^{d+1} = \Delta I$, we have $G^{(d+1)m_0} = P$ (where $P = -I$), and therefore
\be \begin{split} A &= (-1)^{(p-1)/2}U_P \\
&= (-1)^{(p-1)/2} U_{G^{(d+1)m_0}} \\
& = (-1)^{(p-1)/2} (U_G^{2m_0})^{(d+1)/2} 
\end{split} \ee
Let $\qsp$ be an eigenvector of $U_G$, then by Lemma \ref{lem-UG2m0} we have $U_G^{2m_0}\qsp = \qsp$. It follows that $A \qsp = (-1)^{(p-1)/2} \qsp$, which concludes the proof of the lemma.\\
\end{proof}
\begin{remark}
Since every MUB-cycling g-unitary has exactly one eigenvector (up to a scalar multiplication), $\qsp$ is an eigenvector of $U_{G_0^r}$ if and only if it is an eigenvector of $U_{G_0}$, where $G_0$ is defined in (\ref{def-G0}). As every MUB-cycler is conjugate to $G_0^r$ (for some $r$ co-prime to $d+1$), it then follows that the eigenvectors of all MUB-cycling g-unitaries form a single orbit under the extended Clifford group.

\end{remark}

\section{MUB-balanced states}\label{sec:MUBbalanced}

In section \ref{sec:eigenvectors} we showed that when the dimension $d$ is an odd power of an odd prime, every MUB-cycling g-unitary has an eigenvector, which is unique up to a scalar multiplication. Additionally, if $d = 3$ mod $4$, as we will show in this section, these eigenvectors have an extra property: they are MUB-balanced states. The concept of MUB-balanced states was recently introduced by Amburg \etal \cite{Amburg2014}. Rotationally invariant states previously constructed by Sussman and Wootters in even prime power dimensions \cite{Wootters2007, Sussman2007} also have this property. These states all belong to a larger class of quantum states called \MUSs ~\cite{Wootters2007, Sussman2007,Amburg2014,Appleby2014S}. In the case $d = 1$ mod $4$, our numerical calculations in low dimensions (up to $d=31$) show that the eigenvectors of MUB-cyclers are neither MUB-balanced states nor \MUS.

Given a full set of MUBs, a MUB-balanced state is one whose measurement outcome probabilities with respect to every basis are the same up to a permutation. Let $\qs$ be a normalized state, let $\ket{b,v}$ denote the MUB vectors, where $b\in \{0,1,...,d-1,\infty\}$ labels the bases, and $v \in \{0,1,...,d-1\}$ labels the vectors in a basis, and let 
\be p_{b,v} = \abs{\braket{\psi}{b,v}}^2 \ee
be the measurement probabilities. Then $\qs$ is a MUB-balanced state if and only if for each basis $b$, there exists a permutation $\sigma$ such that
\be p_{b,v} = p_{0,\sigma(v)} \ee
for all $v$. It follows from an argument in \cite{Wootters2007} that MUB-balanced states have to be \MUS. For completeness, it is worth providing a sketch of this argument. Let 
\be H_b = -\log_2 \left( \sum_{v} p_{b,v}^2 \right) \ee
be the quadratic R\'enyi entropy in basis $b$. One can show that the total entropy $T = \sum_b H_b$ satisfies the inequality
\be\label{ineq-renyi}
T \ge (d+1) \log_2\left( \frac{d+1}{2}\right)
\ee
which turns out to be saturated if and only if for all $b$
\be \sum_v p_{b,v}^2 = \frac{2}{d+1}.\ee
States that saturate the bound in \cref{ineq-renyi} are called minimum uncertainty states (\MUS). For a MUB-balanced state, $\sum_v p_{b,v}^2$ is independent of $b$. Together with the fact that
\be \sum_{b,v} p_{b,v}^2 = 2, \ee
it clearly follows that a MUB-balanced state is consequently a \MUS. MUB-balancedness is therefore a stricter condition than  minimum uncertainty.

The main result of this section is stated in the following theorem.

\begin{theorem}\label{thm-MUBbalanced}
Let the dimension $d=p^n$ be a prime power satisfying $d = 3 \mod 4$ (the exponent $n$ therefore has to be odd), and let $G \in \GL_p(2,\Fd)$ be a MUB-cycler. Let $\qsp$ be a normalized eigenvector of $U_G^{2m_0}$ with eigenvalue 1 as defined in \cref{thm-eigenvectors}, then $\qsp$ is MUB-balanced.
\end{theorem}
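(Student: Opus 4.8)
The plan is to turn MUB-balancedness into a coprimality statement about the Galois automorphism $g_\Delta$. I would fix the eigenvector $\qs\in\fd{Q}(\omega)^d$ with $U_G\qs=\qs$ supplied by \cref{lem-UGeigenvector}; since the $\qsp$ of the statement spans the same ray (the eigenspace being one-dimensional by \cref{thm-eigenvectors}), it induces the same probabilities $p_{b,v}=\abs{\braket{\psi}{b,v}}^2$, so it suffices to argue with $\qs$. Balancedness means that for each basis $b$ the multiset $\{p_{b,v}\}_v$ equals $\{p_{0,v}\}_v$. Because $G$ is a MUB-cycler its projective action $\phi$ on the $d+1$ bases is a single $(d+1)$-cycle (\cref{thm-MUBcycler}); and because $U_G$ conjugates displacements by $G$ (\cref{sec:arithmetic}), it carries each rank-one MUB projector to another one, $U_G\ketbra{b,v}U_G^{-1}=\ketbra{\phi(b),\pi_b(v)}$ for suitable permutations $\pi_b$ of the vector labels --- an identity at the level of projectors, hence free of any phase or normalization ambiguity.

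The engine of the proof is the identity $p_{\phi(b),\pi_b(v)}=g_\Delta(p_{b,v})$. Using $U_G^{-1}\ket{\psi}=\ket{\psi}$ together with the adjoint property \cref{def-UGdagger} and \cref{eq-UGdagger}, I would compute
\be
p_{\phi(b),\pi_b(v)}=\bra{\psi}U_G\ketbra{b,v}U_G^{-1}\ket{\psi}=\overline{g_\Delta\!\left(\bra{\psi}\ketbra{b,v}\ket{\psi}\right)}=\overline{g_\Delta(p_{b,v})}=g_\Delta(p_{b,v}),
\ee
where the last equality holds because $p_{b,v}$ is real and, the cyclotomic Galois group being abelian, $g_\Delta$ commutes with complex conjugation, so $g_\Delta(p_{b,v})$ is again real. (Carrying out this computation with the unnormalized $\qs$ gives the relation for unnormalized weights; the norm $\langle\psi|\psi\rangle$ is itself $g_\Delta$-fixed, since summing over any basis recovers it, so the relation descends to the normalized $p_{b,v}$.) Iterating along $c_0=0$, $c_{k+1}=\phi(c_k)$ then yields $\{p_{c_k,v}\}_v=g_\Delta^{k}\bigl(\{p_{0,v}\}_v\bigr)$ as multisets.

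It remains to show that $g_\Delta$ fixes the multiset $\{p_{0,v}\}_v$; let $L$ be the length of its orbit under $\langle g_\Delta\rangle$. Closing the cycle, $\phi^{d+1}=\mathrm{id}$ forces $g_\Delta^{d+1}$ to fix the multiset, so $L\mid(d+1)$. On the other hand \cref{eq-deltam0} gives $\Delta^{m_0}=-1$, whence $g_\Delta^{m_0}=g_{-1}$ is precisely complex conjugation; as every $p_{0,v}$ is real this fixes the multiset pointwise, so $L\mid m_0$. Hence $L\mid\gcd(m_0,d+1)$, and the whole theorem comes down to proving $\gcd(m_0,d+1)=1$.

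The hypothesis $d\equiv3\pmod4$ is exactly what makes this gcd trivial, and this coprimality is the crux I would single out as the main point (the operator manipulations above are routine once the adjoint calculus of \cref{sec:arithmetic} is in hand). Since $2m_0$ is the multiplicative order of $\Delta$ in the group $\F_p^{\times}$ of order $p-1$, we have $m_0\mid(p-1)/2$. With $n$ odd, $d=p^n\equiv p\pmod4$, so $p\equiv3\pmod4$ and $(p-1)/2$ is odd; thus $m_0$ is odd. Any prime $q\mid m_0$ then divides $p-1$, giving $p\equiv1\pmod q$ and $d+1=p^n+1\equiv2\pmod q$, which is nonzero modulo the odd prime $q$; hence no prime divides both $m_0$ and $d+1$, i.e. $\gcd(m_0,d+1)=1$. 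Therefore $L=1$, so $g_\Delta$ fixes $\{p_{0,v}\}_v$ and every basis carries the same distribution, proving that $\qsp$ is MUB-balanced. (When $d\equiv1\pmod4$ one instead finds $(p-1)/2$ even, $\gcd(m_0,d+1)$ may exceed $1$, and the argument --- correctly --- breaks down.)
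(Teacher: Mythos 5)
Your proof is correct, and it reaches the conclusion by a genuinely different route than the paper at the decisive step. The two arguments share the same first half: invariance of the eigenvector under $U_G$ turns the Galois automorphism into a bridge between the probability distributions of successive bases (your ``engine'' identity $p_{\phi(b),\pi_b(v)}=g_\Delta(p_{b,v})$ is the paper's $p_{b,v}=g_\Delta^k\left(p_{0,\sigma(v)}\right)$ in projector form, and your observation that the norm is $g_\Delta$-fixed plays the role of the paper's argument that $\mu_k=1$). Where you diverge is in showing that $g_\Delta$ actually fixes the basis-$0$ distribution. The paper does this via a quadratic-residue construction: for $d\equiv 3\pmod 4$ both $-1$ and $\Delta$ are non-residues, so $-\Delta=x^2$ for some $x\in\Fd$, which yields the operator identity $g_{\Delta}(U_F)=U_S U_F^* U_S^{-1}$ in \cref{eq-gUF}; applied to $P_1\propto\ketbra{\psi}$ this gives the explicit internal permutation $g_\Delta(p_{0,v})=\mu\, p_{0,xv}$. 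You instead run a purely group-theoretic argument on the orbit length $L$ of the probability multiset under $\langle g_\Delta\rangle$: $L\mid d+1$ because the M\"obius action closes after $d+1$ steps, and $L\mid m_0$ because $\Delta^{m_0}=-1$ (\cref{eq-deltam0}) makes $g_\Delta^{m_0}$ complex conjugation, which fixes the real probabilities pointwise; then $d\equiv3\pmod 4$ forces $p\equiv3\pmod4$, hence $(p-1)/2$ odd, hence $m_0$ odd, hence $\gcd(m_0,d+1)=1$ and $L=1$. Your route is more elementary --- no quadratic residues, no auxiliary symplectic matrix, nothing beyond the conjugation calculus of \cref{sec:arithmetic} plus the divisibility facts --- and it isolates the role of the hypothesis $d\equiv3\pmod4$ as a single clean coprimality statement. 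What it gives up is the explicit form of the intra-basis permutation ($v\mapsto x^k\sigma(v)$), which is precisely what the paper's method supplies and later exploits in the remark connecting these states to the Wigner-function construction of Amburg \etal. Both proofs fail for $d\equiv1\pmod4$ for their respective reasons: yours because $\gcd(m_0,d+1)$ can then exceed $1$, the paper's because $-\Delta$ need not then be a residue.
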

\begin{remark}
The g-unitary $U_G$ plays a crucial part in the following proof. However, for the practical purpose of calculating $\qsp$, it suffices to work with the ordinary unitary $U_G^{2m_0}$. Moreover, since the eigenstates of MUB-cycling g-unitaries form a single orbit of the extended Clifford group, we only need to prove the theorem for the case of MUB-cycling anti-unitaries. We will proceed with the general case of an arbitrary g-unitary nevertheless. This way one will see why it does not work for the case $d = 1 \mod 4$.
\end{remark}
\begin{proof}
Recall from \cref{lem-eta} and \cref{thm-MUBcycler} that $\Delta =\det(G)$ can be written as
\be \Delta = \theta^{r(d-1)/(p-1)}, \ee
where $\theta$ is a primitive element of $\F_d$, and $\gcd(r,d+1)=1$. By the assumption that $n$ is odd, it follows that $r(d-1)/(p-1)$ is odd, so $\Delta$ is a quadratic non-residue. By Lemma 1 of reference \cite{Appleby2009}, when $d = 3\mod 4$ we also have -1 being a quadratic non-residue. Therefore $-\Delta$ is a quadratic residue, i.e. there exists $x \in \F_d$ such that $x^2 = -\Delta$. If we let $F$ be arbitrary
\be F = \bmp \alpha & \beta \\ \gamma & \delta \emp \ee
and define a symplectic
\be S = \bmp x^{-1} & 0 \\ 0 & x \emp, \ee
then it can be verified straightforwardly that
\be SK_{-1}FK_{-1}S^{-1} =  \bmp \alpha & \Delta^{-1}\beta \\ \Delta \gamma & \delta \emp= K_{\Delta}FK_{\Delta}^{-1}.
\ee
Note that $K_{-1}$ is represented by complex conjugation, so in the g-unitary representation we have
\be \label{eq-gUF}
U_S U_F^* U_S^{-1} = g_{\Delta}(U_F).
\ee
From \cref{thm-eigenvectors} we know that there exists $\qs$ such that $U_G\qs = \qs$. We can write
\be \ketbra{\psi} = \lambda P_1 \ee
for some constant $\lambda$, where $P_1$ is the projection operator defined in \cref{def-P1}. Noting that \cref{eq-gUF} can be applied to $P_1$ and letting $v$ be arbitrary, we now can calculate
\be \begin{split}
g_{\Delta}(p_{0,v}) = g_{\Delta}\left(\abs{\braket{0,v}{\psi}}^2\right) 
& = g_{\Delta}\left(\braket{0,v}{\psi}\braket{\psi}{0,v}\right) \\
& = g_{\Delta}\left(\bra{0,v}\lambda P_1\ket{0,v}\right) \\
& = g_{\Delta}(\lambda) g_{\Delta}(\bra{0,v}) U_S P_1^*U_S^{-1} g_{\Delta}(\ket{0,v}) \\
& = \frac{g_{\Delta}(\lambda)}{\lambda^*} g_{\Delta}(\bra{0,v}) U_S \ket{\psi}^*\bra{\psi}^*U_S^{-1} g_{\Delta}(\ket{0,v}) \\
& = \frac{g_{\Delta}(\lambda)}{\lambda^*} \abs{\bra{\psi}^*U_S^{-1} g_{\Delta}(\ket{0,v})}^2 \\
& = \frac{g_{\Delta}(\lambda)}{\lambda^*} \abs{\braket{\psi}{0,xv}}^2 \\
& = \mu p_{0,xv},
\end{split} \ee
where $\mu = g_{\Delta}(\lambda)/\lambda^*$ is a constant, and in the penultimate step we have used \cref{eq-USMUBs}. Repeating this formula $k$ times for an arbitrary integer $k$, we obtain
\be g_{\Delta}^k (p_{0,v}) = \mu_k~ p_{0,x^kv} \ee
where $\mu_k =g_{\Delta}^{k-1}(\mu)g_{\Delta}^{k-2}(\mu)...\mu$ is independent of $v$.

Since $U_G$ is a cycling g-unitary, for every basis $b$ there exists an integer $k$ such that
\be \ket{b,v} \doteq U_G^k \ket{0,\sigma(v)} \ee
for all $v$. In the above expression ``$\doteq$'' means ``equal up to a phase'' and $\sigma$ is a permutation dependent only on $b$. It follows that
\be \begin{split}
\braket{\psi}{b,v} &\doteq \bra{\psi}U_G^k\ket{0,\sigma(v)}\\
& = g_{\Delta}^k\left(\braket{U_G^{-k}\psi}{0,\sigma(v)}\right)\\
& = g_{\Delta}^k\left(\braket{\psi}{0,\sigma(v)}\right).
\end{split} \ee
Consequently,
\be p_{b,v} = g_{\Delta}^k\left(p_{0,\sigma(v)}\right) = \mu_k p_{0,x^k\sigma(v)}. \ee
Since $p_{b,v}$ are probabilities, we must have
\be \sum_v p_{b,v} = \sum_v p_{0,x^k\sigma(v)} = 1. \ee
This implies $\mu_k = 1$, which in turn implies that the normalized state 
\be \qsp = \frac{\qs}{\sqrt{\abs{\braket{\psi}{\psi}}}} \ee
is MUB-balanced.\\
\end{proof}

\begin{remark}
The MUB-balanced states established in this theorem are identical to those constructed by Amburg {\it et al} using a completely different method \cite{Amburg2014}. In their paper, the orbit of states is generated from the state corresponding to the discrete Wigner function
\be W_{\p} = \frac{1}{d(d+1)}\left(1-d\delta_{\p,\mathbf{0}} +\sum_{x\in \fd{F}_d^*}l(x^2+1)\omega^{\tr(xp_1^2 +x p_2^2)}\right). \ee
Let us define a specific element in $\GL(2,\Fd)$
\be G = \bmp \alpha & \beta \\ -\beta & \alpha \emp, \ee
with
\be \alpha = (\eta + \eta^d)/2 \hskip 15mm \beta = i_M(\eta - \eta^d)/2, \ee
where $i_M = \eta^{(p-1)(d+1)/4}$ is a modular analogue of $i$ (notice that $i_M^2=-1$), and $\eta$ is defined in Lemma \ref{lem-eta}. Using basic finite-field facts (see Section \ref{sub:finitefields}), one can check that
\be
\alpha^d =\alpha\hskip 15mm \beta^d = \beta,
\ee
so $\alpha$ and $\beta$ belong to the field $\fd{F}_d$. Thus $G$ is indeed an element of $\GL(2,\Fd)$. Furthermore, $G$ has trace
\be \Tr(G) = \eta + \eta^d \ee
and determinant
\be \Delta = \det (G) = \eta^{d+1}, \ee
and therefore, by \cref{thm-MUBcycler}, it is a MUB-cycler. Since
\be (\alpha p_1 + \beta p_2)^2 + (-\beta p_1 + \alpha p_2)^2 = \Delta(p_1^2 + p_2^2), \ee
it clearly follows that
\be \label{eq-WGp}
W_{G\p} = g_{\Delta}(W_{\p}). 
\ee
Let $\rho$ be the density matrix corresponding to $W_{\p}$ given by:
\be \rho = \sum_{\p} W_{\p} A_{\p}, \ee
where the phase point operators $A_{\p}$ are defined in \cref{def-Ap}. It follows from \cref{eq-WGp} and \cref{eq-UGAp} that
\be U_G \rho U_G^{-1} =  \rho. \ee
In view of a result from reference \cite{Amburg2014} that the state corresponding to $W_{\p}$ is a pure state, one can now conclude that this state is an eigenvector of the MUB-cycling g-unitary $U_G$. The Wigner functions of these MUB-balanced states for dimension 7 and 11 are visualized in \cref{fig-balancedstates}.
\begin{figure}[h]
\centering
\includegraphics[scale=1]{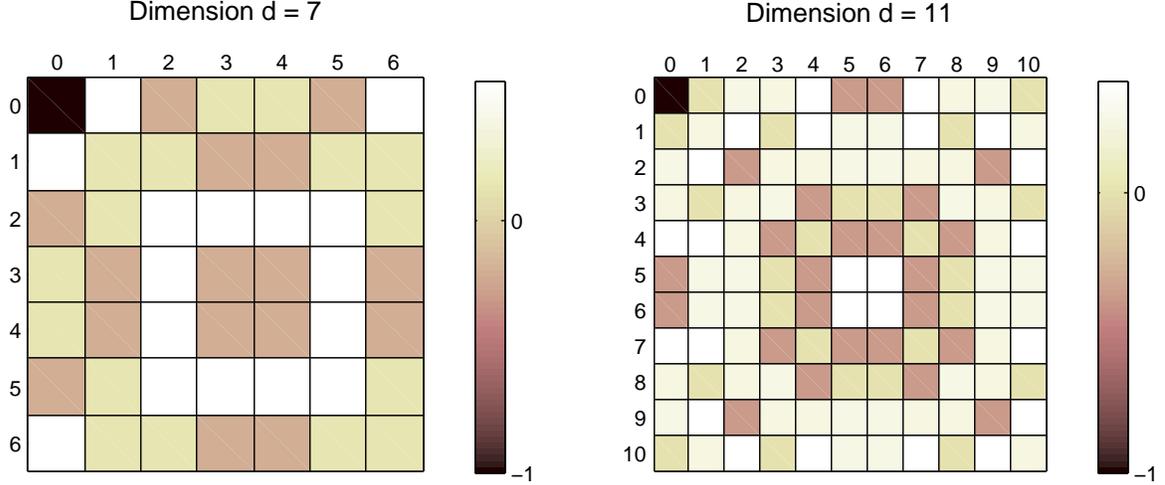}
\caption[Wigner function of MUB-balanced states in $d=7$ and 11]{Color plots of the scaled Wigner function ($dW_{\p}$) corresponding to $\rho$ in dimension $d=7$ and $11$. The axes of the discrete phase space are labeled by the two components of $\p=(p_1,p_2)$. The rotational invariance property is manifested via concentric ``circles'' on the discrete plane.}
\label{fig-balancedstates}
\end{figure}

\end{remark}

One might want to ask how many MUB-balanced states arising from MUB-cycling g-unitaries there are. Let us consider the MUB-cycler $U_{G_0}$ where $G_0$ is the matrix with multiplicative order $2m_0 = p-1$ defined in \cref{def-G0}. Let $\qs$ be the corresponding MUB-balanced state. By \cref{thm-eigenvectors}, we know $\qs$ is the unique eigenvector of the unitary $U_{G_0}^{p-1}$ with eigenvalue 1. Therefore it will be left invariant by any element $V$ of the extended Clifford group satisfying
\be V U_{G_0}^{p-1} V^{-1} = U_{G_0}^{(p-1)k} \ee
for some integer $k$. It turns out that the only possible values for $k$ are $\pm 1$ and
\be
V = U_{G_0^{m(p-1)/2}}
\ee
when $k=1$, or 
\be
V = U_{G_0^{m(p-1)/2}F}
\ee
when $k=-1$, where 
\be 
F = \bmp 0 & i_M\eta^{(d+1)/2} \\ i_M\eta^{-(d+1)/2} & 0 \emp
\ee
is a symplectic matrix satisfying $FG_0^{p-1}F^{-1} = G_0^{-(p-1)}$ and $m$ can take any integer value from 0 to $2d+1$. Hence, there are $4d+4$ possibilities for $V$.
\begin{conjecture}\label{conj-MUBbalanced}
We conjecture that there are no other unitaries or anti-unitaries in the extended Clifford group that leave $\qs$ invariant. As the order of the extended Clifford group is $2d^3(d^2-1)$ \cite{Gehles2002}, the number of MUB-balanced states would therefore be $d^3(d-1)/2$.
\end{conjecture}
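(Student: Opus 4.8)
The plan is to prove the conjecture by the orbit--stabilizer theorem. The MUB-balanced states arising from MUB-cyclers form a single orbit of the extended Clifford group $\ESL(2,\Fd)\ltimes\Fd^2$, whose order is $2d^3(d^2-1)$, so the number of such states equals $2d^3(d^2-1)/\abs{\mathrm{Stab}(\qs)}$, where $\mathrm{Stab}(\qs)$ fixes the ray $\qs$ (equivalently the density matrix $\rho=\ketbra{\psi}$). Since $2d^3(d^2-1)/(4(d+1))=d^3(d-1)/2$, the whole conjecture reduces to $\abs{\mathrm{Stab}(\qs)}=4(d+1)$. The preceding discussion already exhibits $4(d+1)$ stabilizing elements, so the entire content is the reverse inclusion: \emph{every} extended-Clifford element fixing $\qs$ is one of those listed.

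First I would pass to Wigner functions. Write a general element as $V=U_F D_{\bu}$ with $F\in\ESL(2,\Fd)$ (so $\Delta=\det F=\pm1$) and $\bu\in\Fd^2$. Using the covariance \cref{eq-UGAp} of the phase-point operators together with their displacement-covariance, $V\rho V^{-1}$ has Wigner function $\p\mapsto g_{\Delta}(W_{F^{-1}\p-\bu'})$ for a $\bu'$ determined by $\bu$; hence $V$ fixes $\qs$ iff
\be g_{\Delta}\!\left(W_{F^{-1}\p-\bu'}\right)=W_{\p}\qquad\forall\,\p. \ee
The term $-d\,\delta_{\p,\mathbf 0}$ distinguishes the origin, so matching the value at $\p=\mathbf 0$ forces $\bu'=\mathbf 0$ and only linear $F$ can survive. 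Because $d\equiv 3\bmod4$ makes $W_\p$ real (the sum $\sum_x l(x^2+1)\omega^{\tr(xq(\p))}$ is invariant under $x\mapsto-x$, where $q(\p)=p_1^2+p_2^2$), and because $\Delta=\pm1$, the Galois map $g_\Delta$ acts trivially on each value $W_\p$, so the condition collapses to $W_{F^{-1}\p}=W_\p$ for all $\p$: $F$ must be a linear symmetry of the Wigner profile.

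Next I would use that for $\p\ne\mathbf 0$ the profile depends on $\p$ only through $q$, which (since $-1$ is a non-residue) is the norm form of $\Fd[i]=\fd{F}_{d^2}$, with similitude group $\fd{F}_{d^2}^{*}\rtimes\langle c\rangle$, $c$ the Frobenius. The forward direction is intrinsic: any $F$ with $q(F^{-1}\p)=\pm q(\p)$ for all $\p$ satisfies $W_{F^{-1}\p}=W_\p$ (the $-$ case using evenness of the profile), and the similitudes with $\det F=\pm1$ number exactly $4(d+1)$---for each of $\det=\pm1$ there are $d+1$ rotation-type maps of norm-ratio equal to $\det$ and $d+1$ reflection-type maps. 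The reverse direction---that \emph{only} such similitudes preserve the profile---requires the level-set lemma that $W_{\p}=W_{\p'}$ (for $\p,\p'\ne\mathbf 0$) holds iff $q(\p)=\pm q(\p')$, equivalently that $\Phi(s)=\sum_{x\in\Fd^{*}}l(x^2+1)\,\omega^{\tr(xs)}$ separates the pairs $\{s,-s\}$.

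The hard part will be exactly this separation property. Parametrising the conic $y^2-x^2=1$ by $u\in\Fd^{*}$ through $x=(u^{-1}-u)/2$ rewrites $1+\Phi(s)$ as the Kloosterman-type sum $\sum_{u\ne0}\omega^{\tr(\frac{s}{2}(u^{-1}-u))}$, which manifestly depends only on $s^{2}$; this already gives the easy implication and the evenness used above. The converse---that distinct values of $s^{2}$ yield distinct Kloosterman sums, so no linear map outside the similitude group can fix the profile---is a statement about the absence of accidental coincidences among Kloosterman sums, precisely the kind of fact that is genuinely hard (and presumably why this remained a conjecture). I expect this to be the main obstacle; a realistic route is either a closed-form Salié-type evaluation of $\Phi(s)$ displaying its level sets, or a Fourier argument on $\Fd^{2}$ recovering the form $q$ from $\p\mapsto W_\p$ up to sign, thereby pinning $F$ to the similitude group without full injectivity of $\Phi$. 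Granting the lemma, a short connectedness argument upgrades the pointwise relation $q(F^{-1}\p)\in\{\pm q(\p)\}$ to a single global sign, so $F$ is a similitude of ratio $\pm1$, $\abs{\mathrm{Stab}(\qs)}=4(d+1)$, and the count $d^{3}(d-1)/2$ follows.
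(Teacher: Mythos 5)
There is no paper proof to compare against here: the statement is left as a conjecture, supported only by the explicit list of $4d+4$ invariance elements preceding it and by the numerical checks in \cref{tab-MUBbalanced} (it reappears as open problem 5). Judged on its own terms, your scaffolding is sound and even sharpens the paper's discussion. The orbit--stabilizer reduction to $\abs{\mathrm{Stab}(\qs)}=4(d+1)$ is exactly the right formulation; pinning $\bu'=\mathbf{0}$ at the origin can be made fully rigorous (one computes $\Phi(0)=\sum_{x\in\Fd^*}l(x^2+1)=-2$, so $W_{\mathbf{0}}=-1/d$, while the Weil bound $\abs{\Phi(s)}\le 2\sqrt{d}+1$ gives $W_{\p}\ge -2\sqrt{d}/(d(d+1))>-1/d$ for all $\p\ne\mathbf{0}$); the evenness of the profile in $s$, the reality of $W_\p$ killing $g_{\pm1}$, and your count of determinant-$\pm1$ similitudes of $q=p_1^2+p_2^2$ reproduce precisely the paper's $4d+4$ stabilizing elements; and the polynomial-identity argument upgrading the pointwise sign $q(F^{-1}\p)=\pm q(\p)$ to a single global sign works because the relevant degree-$4$ polynomial cannot vanish on all of $\Fd^2$ without vanishing identically.

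The genuine gap is the one you flag yourself, and it is the entire content of the conjecture: the level-set lemma that $\Phi(s)=\Phi(s')$ forces $s'=\pm s$. After your conic parametrization this becomes the assertion that the Kloosterman sums $K(c)=\sum_{u\ne 0}\omega^{\tr(cu+u^{-1})}$ take pairwise distinct values as $c$ runs over a single square class of $\Fd$ (here $c\sim -s^2/4$, a non-residue since $-1$ is a non-residue for $d=3 \bmod 4$). But $g_t(K(c))=K(t^2c)$, so all the values you must separate are Galois conjugates of one another; no rationality or field-theoretic invariant can distinguish them, and distinctness of Kloosterman sums is a recognized hard problem whose known results are confined to prime fields and do not cover $\Fd=\fd{F}_{p^n}$ with $n$ odd, which is exactly the case at hand. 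Note also that your lemma is strictly stronger than what the conjecture needs: level sets of $W_\p$ could merge accidentally without creating a new \emph{linear} symmetry in $\ESL(2,\Fd)$, so the lemma could conceivably fail while the conjecture holds --- which means even your fallback (recovering $q$ up to sign from $W$ by a Fourier argument, without injectivity of $\Phi$) is not optional polish but the actual missing proof, and it is only named, not executed. As it stands, the proposal is an honest and useful reduction of the conjecture to a sharp open question about coincidences of character sums --- consistent with, but no stronger than, the paper's numerical evidence in dimensions $7$, $11$, and $19$.
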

\begin{table}[h]
\centering
\begin{tabular}{cccc}
\rowcolor{gray!25} Dimension &  ord($\mathcal{E}_d$) & MUB-cyclers & MUB-balanced\\ 
$d=7$ &  32,928 & 504 & 1,029\\ 
$d=11$ &  319,440 & 2,200 & 6,655\\ 
$d=19$ &  4,938,480 & 24,624 & 61,731\\ 
\end{tabular}
\vskip 4mm
\parbox{10.5cm}{
\caption[Number of MUB-balanced states in $d=7,11,$ and 19]{The order of the extended Clifford group, the number of MUB-cyclers and the number of distinct MUB-balanced states found in dimensions 7, 11 and 19.}
\label{tab-MUBbalanced}}
\end{table}

We have checked the conjecture in detail for dimensions $d = 7,11,19$. The numbers reported in \cref{tab-MUBbalanced} agree with our prediction. Originally, we expected that our construction technique would yield many new MUB-balanced states in addition to those found by Amburg {\it et al}. Our results, however, suggest that Amburg {\it et al} have indeed constructed the entire set. If this is true, it means that MUB-balanced states form a highly distinguished geometrical structure. They seem to be even rarer than SICs: in most of the dimensions that have been analyzed, there exist more than one orbits of SICs, whereas MUB-balanced states seem to come only in a single orbit. They ``have no right to exist'', as Amburg {\it et al} has put it. Of course this should not be taken literally, as we know they do exist after all. Our work provides one way of explaining their right of existence, by unveiling a new underlying g-unitary symmetry.

\chapter{Summary and Outlook}\label{chap:summary}

\section{Summary of main results}
We chose to study SICs as we believed such a symmetric structure in the Hilbert space could reveal deep insights into quantum theory. One of our results reveals the geometric significance of SICs on the cone of non-negative operators: SICs are the closest to being orthogonal bases. Explicitly, they form the only sets of $d^2$ normalized positive semidefinite operators that minimize a class orthogonality measures $K_t$ defined to be
\be \label{def-Kt}
K_t = \sum_{i\ne j} \left(\Tr (A_i A_j)\right)^t \hskip 10mm t\in\fd{R},~t > 1.
\ee

Studying SICs naturally led us to a study of the Weyl-Heisenberg symmetry - the group symmetry that finite dimensional quantum mechanics was built upon in the very early years. While studying orbits of quantum states under the WH group, we observed that under certain conditions, among $d^2$ vectors in the orbits one can find sets of $d$ vectors that are linearly dependent. We proved in \cref{thm-linde} that if the initial vector belongs to certain eigenspaces (depending on the dimension $d$) of the Zauner unitary, and if a set of $d$ vectors consists of certain combinations of triplets and singlets, linear dependency will occur. In dimensions $d=3$ this explains the 3 linear dependencies that arise from arbitrary SIC fiducial vectors in the known SIC family. Interestingly, there are special SICs in this family that give rise to 9 linear dependencies. This fact is connected to the Hesse configuration in the theory of elliptic curves. We performed an exhaustive numerical search for linear dependencies in dimensions $d=4$ to 9. The only other case where we observed extra linear dependencies from a SIC fiducial (compared to an arbitrary initial vector from the same eigenspace of $\UZ$) is when the SIC fiducial lies in $\mcH_{\eta^2}$ in dimension $d=8$. In dimension $d=6$ and $d=9$ we analyzed in detail the relations among normal vectors of hyperplanes spanned by linearly dependent sets. Besides some orthogonality relations, we observed that some of these normal vectors always formed 2-dimensional and 3-dimensional SICs, even though the initial vector is not a SIC fiducial. We were able to provide an analytical explanation for the observed 2-dimensional small SICs in $d=6$.

SICs are also the motivation for the construction of g-unitary operators. In \cref{chap:gunitary} we describe a toy model for SICs, in which g-unitaries were constructed from the simple Galois groups of cyclotomic field extensions, and we studied their actions on MUBs. We have gone through a large amount of technical details, which are summarized in \cref{thm-faithfulness,thm-GLtypes,thm-MUBcycler,thm-eigenvectors,thm-MUBbalanced}. However, the picture can be better summarized in words.

G-unitaries are constructed to generalize the notion of anti-unitaries. However, their action is restricted only to vectors whose components belong to some special number field, which is the cyclotomic field in our case. In this case, g-unitaries play role in the description of MUBs in odd prime power dimensions. Their actions on the MUB vectors can be interpreted as rotations in Bloch space, just as any ordinary unitary operator.

We studied projective transformations that permute mutually unbiased bases. In odd prime power dimension $d=p^n$ where $n$ is odd, we showed that there are transformations that cycle through the full set of $d+1$ bases and that can be realized by g-unitaries. If $d=3$ mod 4, these transformations can be effected by anti-unitaries, but if $d=1$ mod 4, we need to appeal to g-unitaries. We studied the eigenvectors of these MUB-cycling g-unitaries and showed every MUB-cycling g-unitary always had a unique eigenvector (up to a scalar multiplication). Furthermore, we can always choose a scalar so that this eigenvector is invariant under the g-unitary, in which case we proved that it is also invariant under the parity operator.

When $d=3$ mod 4, we proved that the invariant eigenvectors of MUB-cycling g-unitaries are MUB-balanced states. Our construction can be considered as a supplement to work done by Amburg \etal \cite{Amburg2014} in two ways. First, while the construction of Amburg \etal was done using Wigner functions, our construction was done directly in the Hilbert space. Secondly, our technique provides a connection to the original construction in even prime power by Wootters and Sussman \cite{Wootters2007} in which MUB-balanced states were found to be eigenvectors of MUB-cycling unitaries. In our case, because of the lack of cycling unitaries in odd prime power dimensions, we instead used cycling g-unitaries. Originally, we expected that our construction would yield more MUB-balanced states, since we were able to expose the underlying symmetry and found all such states with this symmetry. However, the results suggest that these states all lie on a single orbit the the extended Clifford group, which is remarkable since even rare quantum states like SICs come in more than one orbits in many dimensions that have been analyzed.

G-unitary symmetry for the simple case of cyclotomic field extensions has provided us a thorough understanding of the ``right to exist'' of a distinguished class of quantum states, namely MUB-balanced states. Extending our analysis to g-unitaries that are applicable to SIC states or some other special quantum states is a non-trivial task. However, we hope that our results can be considered as a first useful step in the direction of solving the SIC problem, as we believe that symmetries play very important roles in solving hard problems in physics.

\section{List of open problems}

The following list brings up research questions or problems we have not been able to answer that are worth further investigation.

\begin{enumerate}
\item As discussed in \cref{sub:numerical}, there are more linearly dependent sets observed numerically than what can be accounted for by \cref{thm-linde}. An explanation for this is still left open.
\item Small 3-dimensional SICs arising from normal vectors of the hyperplanes spanned by linearly dependent sets in $d=9$ are observed, but have not yet been fully understood.
\item An exhaustive search for small SICs in $d=12$ has not been done. It is also an open question if small SICs can be found in other dimensions rather than $d=6$ and 9.
\item In the study of g-unitaries, we only found eigenvectors for a special class of them, namely those that have the MUB-cycling property. It is an open question whether other g-unitaries also have eigenvectors, and how to find them.
\item We have counted the number of MUB-balanced states for small dimensions in \cref{tab-MUBbalanced}. It remains as a conjecture that the number of MUB-balanced states in every odd prime power dimensions equal to 3 mod 4 is $d^3(d-1)/2$.

\end{enumerate}

\appendix
\chapter{Appendices}\label{chap:app}

\section{Field theory}\label{sec:fieldtheory}

The theory of fields is a major branch in mathematics that has been studied extensively. In the scope of the thesis we use a number of elementary facts in field theory. We devote this section as an introduction to fields, field extensions, Galois automorphisms, and finite fields, for readers with little or no background in field theory. Besides finite fields, which play an important role in the theory of Mutually Unbiased Bases, we also discuss infinite fields, especially cyclotomic fields, which are crucial for our construction of g-unitaries in \cref{chap:gunitary}. With the purpose of helping the readers quickly grasp the relevant key concepts, we avoid unnecessarily technical definitions and derivations as much as we can. Rigorous treatments of the subjects can be found in textbooks on fields and Galois theory \cite{Stewart1972, Roman2006, Milne2003, Lidl1997}.\\

\begin{definition} A field $\fd{F}$ is a set together with two operations addition and multiplication (denoted by $+$ and $\cdot$) such that for all $a,b,c \in \fd{F}$ the following axioms hold.
\begin{enumerate}
\item Closure:  $a+b$ and $a \cdot b$ are in $\fd{F}$.
\item Associativity: $a+(b+c) = (a+b)+c$ and $a \cdot (b \cdot c) = (a \cdot b) \cdot c$.
\item Commutativity: $a+b = b+a$ and $a\cdot b = b \cdot a$.
\item Identity elements: there exists an additive identity $0$ such that $a + 0 = a$ for all $a \in \fd{F}$, and there exists a multiplicative identity $1$ such that $a \cdot 1 = a$ for all $a \in \fd{F}$. 
\item Inverses: for every $a \in \fd{F}$ there exists an element $-a$ such that $a+ (-a) = 0$, and for every non-zero $a \in \fd{F}$ there exists an element $a^{-1}$ such that $a \cdot a^{-1} = 1$. These imply the existence of subtraction and division operations.
\item Distributivity: $a \cdot (b+c) = (a \cdot b) + (a \cdot c)$.
\end{enumerate}
\end{definition}
\begin{remark}
A field $\fd{F}$ consists of two abelian groups ($\fd{F}$ under addition, and $\fd{F}\setminus\{0\}$ under multiplication), whose operations are compatible in the sense of the distributivity law.
\end{remark}

\begin{example} Common examples include the field of all rational numbers $\fd{Q}$ and field of all real numbers $\fd{R}$ under ordinary addition and multiplication. There also exist finite fields, i.e. fields with a finite number of elements. For example, if $p$ is a prime number, then the set of integers $\fd{Z}_p=\{0,1,..,p-1\}$ with addition and multiplication modulo $p$ forms a field, called a prime field. To see why multiplicative inverses exist in $\fd{Z}_p$, take any non-zero element $x \in \fd{Z}_p$ and consider $p-1$ products $xk$ where $k=1,2,...,p-1$. As $p$ is prime we cannot have $x(k_1-k_2) = 0$ (mod $p$) for $k_1 \ne k_2$ (mod $p$). Therefore these products are distinct and they must take all $p-1$ non-zero values in $\fd{Z}_p$ including 1.
\end{example}
\subsection{Field extensions}

\begin{example} The concept of field extensions is best illustrated by the example of the construction of $\fd{C}$, the complex field that quantum physicists are very familiar with. We start with the real field $\fd{R}$ and an observation that the polynomial  $x^2 + 1$ is irreducible over $\fd{R}$. If we define an imaginary number $i$ by the property $i^2 + 1 = 0$, then it does not belong to $\fd{R}$, and we can extend $\fd{R}$ to include $i$ by defining the complex field $\fd{C}$ as the set of all numbers of the form:
\begin{equation}\label{def-C} \fd{C} \equiv \{a + i b: a,b \in \fd{R}\}.\end{equation}
The addition rule in the new field $\fd{C}$ can be straightforwardly defined as
\begin{equation} (a_1+ib_1) + (a_2+ib_2) = (a_1+a_2) + i(b_1+b_2).\end{equation}
For the multiplication rule we use the defining property of $i$, namely $i^2 = -1$, to derive
\begin{equation}
\begin{split}
(a_1+ib_1)(a_2+ib_2) &= a_1 a_2 + i(a_1 b_2 + a_2 b_1) + i^2 b_1 b_2 \\
&= (a_1 a_2 -b_1 b_2) + i(a_1 b_2 + a_2 b_1).
\end{split}
\end{equation}
One can think of a complex number $(a+ib)$ as a 2-component vector $(a,b)$ in a real vector space. The dimensionality of this vector space is equal to the degree of the polynomial defining $i$, namely 2. The construction of $\fd{C}$ in (\ref{def-C}) can be generalized as follows.
\end{example}

Given a field $\fd{F}$ and a number $h \notin \fd{F}$, let $\fd{E}$ be the smallest field containing both $h$ and $\fd{F}$, denoted by $\fd{E} \equiv \fd{F}(h)$. $\fd{F}$ is called the ground field, $\fd{E}$ is called the extended field or the extension field, the field extension (not to be confused with the extension field) is denoted by $\fd{E}/\fd{F}$ (reads as $\fd{E}$ over $\fd{F}$), and $h$ is called a field generator. Field generators need not be unique. For example, for the field extension $\fd{C}/\fd{R}$ instead of $i$ we could also use $-i$ as a generator.

\begin{definition} Assume that $h$ is algebraic over $\fd{F}$, meaning that it is a root of a polynomial with coefficients in $\fd{F}$. Among all such polynomials that admit $h$ as a root, let
\begin{equation}\label{def-minimalpolynomial}
P(x) = x^n + c_{n-1}x^{n-1} + ... + c_1 x + c_0 \qquad (c_i \in \fd{F})
\end{equation}
be the one with the lowest degree having the leading coefficient equal to one ($P(x)$ is called the minimal polynomial of $h$ over $\fd{F}$). The extended field $\fd{E}$ can then be constructed as
\begin{equation}\label{def-extensionfield}
\fd{E} \equiv \fd{F}(h) = \{f_0 + f_1 h + ...+f_{n-1} h^{n-1}: f_i \in \fd{F}\} \ .
\end{equation}
\end{definition}

\begin{remark}
We do not  include higher powers of $h$ in (\ref{def-extensionfield}) because they can be reduced to powers smaller than $n$ by the property of $h$ being a root of a polynomial of degree $n$. One can see that $\fd{E}$ is closed under addition, subtraction, and multiplication, and it can be shown to be also closed under division \cite{Roman2006}. $\fd{E}$ can be regarded as an $n$-dimensional vector space over $\fd{F}$, and $n$ is called the degree of the field extension $\fd{E}/\fd{F}$.
\end{remark}

\subsection{Galois automorphisms}\label{sub:Galois}

\begin{definition} Given a field extension $\fd{E}/\fd{F}$ ($\fd{E}$ is an extension of $\fd{F}$), a Galois automorphism of the extension $\fd{E}/\fd{F}$ is defined as an automorphism of $\fd{E}$ that fixes elements in $\fd{F}$. In other words, it is a bijective mapping $g: \fd{E} \to \fd{E}$ that has the following properties.
\begin{enumerate}
\item $g(e_1 + e_2) = g(e_1) + g(e_2)$ for all  $e_1,e_2 \in \fd{E}$.
\item $g(e_1  e_2) = g(e_1)  g(e_2)$ for all  $e_1,e_2 \in \fd{E}$.
\item $g(f) = f$ for all  $f \in \fd{F}$.
\end{enumerate}
\end{definition}
\begin{remark} It follows from the definition that if a Galois automorphism $g$ takes an element $x \in \fd{E}$ to the ground field $\fd{F}$, then $x$ has to belong to the ground field itself, or else $g$ fails to be a bijective mapping.
\end{remark}

\begin{definition} The Galois automorphisms form a group called the Galois group of the extension $\fd{E}/\fd{F}$, denoted by $\Gal(\fd{E}/\fd{F})$.
\end{definition}
One property of the Galois group is that its order is less than or equal to the degree of the field extension. To see why we first note that if $\fd{E} = \fd{F}(h)$ and $g \in \Gal(\fd{E}/\fd{F})$, then the value of $g(h)$ determines the value of $g(e)$ for every element $e \in \fd{E}$, as can be clearly seen from the defining properties of Galois automorphisms and from the construction of $\fd{E}$ in (\ref{def-extensionfield}). In other words, a Galois automorphism $g$ is completely specified by its action on a generator. Secondly, we notice that when we apply $g$ to both sides of the equation $P(h) = 0$, where $P(x)$ is the minimal polynomial of $h$ with the form given in (\ref{def-minimalpolynomial}), $g$ does not act on the polynomial coefficients because they are in the ground field $\fd{F}$, therefore
\begin{equation}
(g(h))^n + a_{n-1}(g(h))^{n-1} + ... + a_1 g(h) + a_0 = 0,
\end{equation}
which implies that $g(h)$ is also a root of $P(x)$. Since $P(x)$ has degree $n$, the number of values of $g(h)$ is at most $n$, and hence, so is the number of Galois automorphisms. When the Galois group has the same order as the degree of the extension, the extension is called a Galois extension. This has an important mathematical implication, namely the fundamental theorem of Galois theory. Although this theorem is not invoked in the thesis, we do want to note that all the field extensions used in the thesis are in fact Galois extensions.

\begin{example} Let us go back to the example of the extension of the real field $\fd{R}$ to the complex field $\fd{C}$. If $g: \fd{C} \to \fd{C}$ is a Galois automorphism of the extension $\fd{C}/\fd{R}$, then it must satisfy
\begin{equation}
g(i)g(i) = g(i^2) = g(-1) = -1 \ ,
\end{equation}
which implies either $g(i)=i$ or $g(i)=-i$. If $g(i) = i$, then for any $a,b \in \fd{R}$ we have $g(a+ib) = g(a) +g(i)g(b) = a + ib$, meaning that $g$ is the identity mapping. If $g(i) = -i$, then $g(a+ib) = a-ib$, so $g$ is complex conjugation. The Galois group for the extension $\fd{C}/\fd{R}$ therefore consists of only two elements: the identity mapping and complex conjugation. It is a Galois extension because the group has order 2, which is same as the degree of the extension.
\end{example}
When viewed as functions from $\fd{E}$ to $\fd{F}$, Galois automorphisms are linearly independent, as shown in the following theorem.
\begin{theorem}[Dedekind]\label{thm-Dedekind} Given an extension $\fd{E}/\fd{F}$, let $\{g_i\}_{i=1}^n$ be its Galois group then $g_i$'s are linearly independent functions from $\fd{E}$ to $\fd{F}$, meaning that for $a_1, a_2,...,a_n \in \fd{E}$,
\begin{equation}
a_1 g_1(x) + a_2 g_2(x) + \dotsc + a_n g_n(x) = 0 \quad \text{ for all } x \in \fd{E}
\end{equation}
if and only if each and every $a_i=0$. It then follows that any non-zero linear combination of Galois automorphisms is a non-zero mapping.\\
\begin{remark}
Dedekind's theorem in fact holds for a larger class of functions called characters. Here we restrict ourselves to Galois automorphisms, but the proof is the identical.
\end{remark}
\begin{proof}
Suppose we have a zero function
\begin{equation}\label{eq-dedekind}
a_1 g_1 + a_2 g_2 + \dotsc + a_k g_k= 0 \ 
\end{equation}
for some $a_1, a_2,...,a_k \in \fd{E}$. We will prove by induction that all the coefficients $a_i$ must be zero. As none of the Galois automorphisms is zero, the statement is clearly true for $k=1$. For $k > 1$, because $g_1 \ne g_k$, we can find an element $x_0 \in \fd{E}$ such that $g_1(x_0) \ne g_k(x_0)$. Multiplying equation (\ref{eq-dedekind}) by $g_k(x_0)$ we get
\begin{equation}\label{eq-dedekind2}
a_1 g_k(x_0) g_1 + \dotsc + a_k g_k(x_0) g_k = 0\ .
\end{equation}
Evaluate (\ref{eq-dedekind}) at $x_0x$ we get
\begin{equation}
a_1 g_1(x_0) g_1(x) + \dotsc + a_k g_k(x_0) g_k(x) = 0 \ ,
\end{equation}
which, as $x$ can take any value in $\fd{E}$, implies
\begin{equation}
a_1 g_1(x_0) g_1 + \dotsc + a_k g_k(x_0) g_k = 0 \ .
\end{equation}
Subtracting (\ref{eq-dedekind2}) from the above equation we obtain
\begin{equation}
a_1 \left[g_1(x_0) -g_k(x_0)\right]g_1 + \dotsc + a_{k-1} \left[g_{k-1}(x_0) -g_k(x_0)\right] g_{k-1}= 0 \ ,
\end{equation}
which, by the induction hypothesis, implies $a_1 = 0$. Removing the $a_1$ term in (\ref{eq-dedekind}) and appealing to the induction hypothesis again, we then deduce $a_2 = \dotsc = a_k = 0$.
\end{proof}

\end{theorem}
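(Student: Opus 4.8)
The plan is to prove the only nontrivial direction—that a vanishing linear combination forces all coefficients to be zero—by induction on the number $k$ of automorphisms appearing with nonzero coefficients, the key resource being that each $g_i$ is not merely additive but also \emph{multiplicative}. The reverse implication is immediate, so I would focus entirely on showing that if $g_1,\dots,g_k \in \Gal(\fd{E}/\fd{F})$ are distinct and
\[
a_1 g_1(x) + a_2 g_2(x) + \dotsc + a_k g_k(x) = 0 \qquad \text{for all } x \in \fd{E},
\]
with $a_i \in \fd{E}$, then $a_1 = \dots = a_k = 0$.

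For the base case $k=1$, since $g_1$ is an automorphism it satisfies $g_1(1)=1 \ne 0$, so $a_1 g_1(x) = 0$ for all $x$ forces $a_1 = 0$. For the inductive step with $k>1$, I would assume the claim for every relation of fewer than $k$ terms. Because $g_1 \ne g_k$, there is some $x_0 \in \fd{E}$ with $g_1(x_0) \ne g_k(x_0)$. The central manoeuvre is to manufacture a second relation: substituting $x \mapsto x_0 x$ and invoking multiplicativity $g_i(x_0 x) = g_i(x_0)\,g_i(x)$ gives $\sum_i a_i g_i(x_0)\, g_i(x) = 0$ for all $x$. Multiplying the original relation by the scalar $g_k(x_0)$ and subtracting, the $g_k$ term cancels, leaving
\[
\sum_{i=1}^{k-1} a_i \bigl(g_i(x_0) - g_k(x_0)\bigr) g_i(x) = 0 \qquad \text{for all } x \in \fd{E},
\]
a dependence among only $k-1$ automorphisms. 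The induction hypothesis kills every coefficient; in particular $a_1\bigl(g_1(x_0) - g_k(x_0)\bigr) = 0$, and since $g_1(x_0) \ne g_k(x_0)$ this forces $a_1 = 0$. Deleting the $a_1$ term from the original relation leaves $k-1$ terms, so a second application of the hypothesis yields $a_2 = \dots = a_k = 0$. The closing assertion that any nonzero combination is a nonzero mapping is then just the contrapositive.

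The one genuinely delicate point—and the step I would watch most carefully—is guaranteeing that the reduced relation is not trivial for spurious reasons. This hinges on two facts: first, that multiplicativity of the $g_i$ is precisely what makes the substitution $x \mapsto x_0 x$ produce a \emph{new} relation rather than a rescaling of the old one; and second, that the choice of $x_0$ with $g_1(x_0) \ne g_k(x_0)$ keeps the surviving coefficient of $g_1$ nonzero whenever $a_1 \ne 0$. Everything else is bookkeeping, and no property of the $g_i$ beyond their being distinct field homomorphisms is used—which is exactly why the argument carries over verbatim to arbitrary distinct characters, as noted in the remark preceding the statement.
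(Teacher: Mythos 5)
Your proposal is correct and follows essentially the same argument as the paper's proof: the same induction on $k$, the same choice of $x_0$ with $g_1(x_0)\ne g_k(x_0)$, the same pair of relations (one from multiplying by the scalar $g_k(x_0)$, one from substituting $x\mapsto x_0x$ and using multiplicativity), and the same two applications of the induction hypothesis after cancelling the $g_k$ term. Your explicit flagging of multiplicativity as the key resource, and of why the reduced relation stays nontrivial, matches the logic the paper uses implicitly.
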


\subsection{Cyclotomic fields}

\begin{definition} A cyclotomic field \gls{Qw} is an extension field generated from the rational field $\fd{Q}$ and a primitive $N$-th root of unity $\omega=e^{2 \pi i/N}$. Although cyclotomic fields can be defined for any $N$, we will restrict ourselves to the case when $N=p$ is a prime number. In such a case, the minimal polynomial of $\omega$ over $\fd{Q}$ is
\begin{equation}
P(x) = 1 + x + x^2+ ... + x^{p-1} \ ,
\end{equation}
and the elements of the cyclotomic field $\fd{Q}(\omega)$ are of the form
\begin{equation}
\fd{Q}(\omega) = \{q_0 + q_1 \omega + ... + q_{p-2}  \omega^{p-2}: q_i \in \fd{Q}\} \ .
\end{equation}
\end{definition}

The extension $\fd{Q}(\omega)/\fd{Q}$ is of degree $p-1$. Let $g: \fd{Q}(\omega) \to \fd{Q}(\omega)$ be a Galois automorphism of the field extension. We then have the identity
\begin{equation}
(g(\omega))^p = g(\omega^p) = g(1) = 1 \ ,
\end{equation}
which implies that $g(\omega)= \omega^k$ for some integer $k$ in the range $1\le k \le p-1$ (the value of $g(\omega)$ cannot be 1 because $\omega$ does not belong to the ground field). If we specifically denote the Galois automorphism that maps $\ \omega \mapsto \omega^k$ by \gls{gk}, then $\{g_k\}_{k=1}^{p-1}$ forms the Galois group of order $p-1$ of this field extension. Note that $g_1$ is the identity mapping, and $g_{p-1}$ is complex conjugation. A general element $g_k$ can be thought of as a generalization of complex conjugation, and it is often called a Galois conjugation. Galois conjugations are the building blocks for our later construction of g-unitaries. For now we want to mention one more property of theirs, as stated in the following theorem.\\
\begin{theorem}[a variant of Hilbert's Theorem 90 \cite{Hilbert1998}]\label{thm-Hilbert90} Let $g \in \Gal(\fd{Q}(\omega)/\fd{Q})$ be a Galois automorphism of order $m$, meaning that $m$ is the smallest positive integer for which $g^m$ is the identity mapping. If $\lambda \in \fd{Q}(\omega)$ satisfies
\be \lambda g(\lambda) ... g^{m-1}(\lambda) = 1 \ , \ee 
then there exists $\mu \in \fd{Q}(\omega)$ such that 
\be \label{eq-Hilbert90} \lambda = \mu / g(\mu) \ . \ee
\end{theorem}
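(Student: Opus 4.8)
The plan is to follow the classical route to Hilbert's Theorem 90 for a cyclic extension, building an explicit $\mu$ as a ``Lagrange resolvent'' and then invoking the independence of the Galois automorphisms (already available as \cref{thm-Dedekind}) to guarantee that this $\mu$ is nonzero. First I would introduce the partial products of the $g$-orbit of $\lambda$: set $b_0 = 1$ and $b_i = \lambda\, g(\lambda)\cdots g^{i-1}(\lambda)$ for $1 \le i \le m-1$, so that each $b_i \in \fd{Q}(\omega)$ and, crucially, the full product $b_m = \lambda\, g(\lambda)\cdots g^{m-1}(\lambda)$ equals $1$ by hypothesis. For an as-yet-unspecified $\alpha \in \fd{Q}(\omega)$ I would then define
\be
\mu = \sum_{i=0}^{m-1} b_i\, g^i(\alpha).
\ee

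The heart of the argument is the identity $\lambda\, g(\mu) = \mu$. To see it, I would compute $g(\mu) = \sum_{i} g(b_i)\, g^{i+1}(\alpha)$ and observe that $\lambda\, g(b_i) = b_{i+1}$ follows directly from the definition of the $b_i$. Multiplying through by $\lambda$ and reindexing gives $\lambda\, g(\mu) = \sum_{k=1}^{m} b_k\, g^k(\alpha)$; the top term $k = m$ collapses using $g^m = \mathrm{id}$ together with $b_m = 1$, so it contributes exactly $b_0\, g^0(\alpha) = \alpha$, restoring the missing $i = 0$ summand and yielding $\lambda\, g(\mu) = \mu$. Dividing by $g(\mu)$ then produces $\lambda = \mu / g(\mu)$, which is \cref{eq-Hilbert90}, provided only that $\mu \ne 0$.

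The one genuinely non-formal step — and the main obstacle — is ensuring that $\alpha$ can be chosen so that $\mu \ne 0$; without this the final division is meaningless. This is exactly where \cref{thm-Dedekind} (Dedekind's theorem on the linear independence of Galois automorphisms) does the work: since $g$ has order $m$, the maps $g^0, g^1, \ldots, g^{m-1}$ are $m$ distinct automorphisms of $\fd{Q}(\omega)$, and the expression $\alpha \mapsto \sum_{i} b_i\, g^i(\alpha)$ is a linear combination of them whose leading coefficient $b_0 = 1$ is nonzero. By Dedekind this combination is not the zero function on $\fd{Q}(\omega)$, so some $\alpha \in \fd{Q}(\omega)$ gives $\mu \ne 0$. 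I would close by noting that $\mu$, and hence $\mu / g(\mu)$, lies in $\fd{Q}(\omega)$, completing the proof.
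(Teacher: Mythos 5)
Your proposal is correct and is essentially the paper's own proof: your resolvent $\mu = \sum_{i=0}^{m-1} b_i\, g^i(\alpha)$ with $b_0 = 1$, $b_i = \lambda\, g(\lambda)\cdots g^{i-1}(\lambda)$ is exactly the paper's map $T(x) = x + a_1 g(x) + \dots + a_{m-1} g^{m-1}(x)$, the identity $\lambda\, g(\mu) = \mu$ is verified by the same telescoping relation $\lambda\, g(b_i) = b_{i+1}$ together with $b_m = 1$, and nonvanishing is obtained in both cases by invoking \cref{thm-Dedekind} on the distinct automorphisms $g^0,\dots,g^{m-1}$. No gaps; nothing further to add.
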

\begin{proof}
Let us consider a mapping $T:\fd{Q}(\omega) \to \fd{Q}(\omega)$ defined as
\be
T(x) = x + a_1 g(x) + a_2 g^2(x) + ...  + a_{m-1}g^{m-1}(x) \ ,
\ee
where $a_i = \lambda g(\lambda) ... g^{i-1}(\lambda)$. Note that 
\be 
\lambda g(a_i) = a_{i+1} \qquad \text {for }i = 1,...,m-2 
\ee
and
\be
\lambda g(a_{m-1}) = \lambda g(\lambda)...g^{m-1}(\lambda) = 1 \ , \ee
therefore we have
\be \lambda g(T(x)) = T(x) \ . \ee
From Theorem \ref{thm-Dedekind} we know that $T(x)$ is not a zero mapping, so there exists $x_0 \in \fd{Q}(\omega)$ such that $T(x_0) \ne 0$. If we define $\mu \equiv T(x_0)$, then (\ref{eq-Hilbert90}) is immediately obtained.
\end{proof}

\subsection{Finite fields}\label{sub:finitefields}

We have mostly dealt with fields that are infinite in size so far. Finite fields also come into play in the thesis, especially in the theory of Mutually Unbiased Bases (MUB) and in technical manipulations for MUB cyclers and their eigenvectors.

\begin{definition} A finite field (somewhat confusingly also called a Galois field) is a field that has a finite number of elements, called its order. 
\end{definition}
The prime field $\fd{Z}_p$ for any prime number $p$ is an example of a finite field, as previously mentioned. There are also finite fields of other orders. However, it is well known (dating back to Galois) that finite fields only exist for which the order is a prime power $p^n$, and that for every prime power there exists a unique (up to an isomorphism) field of this order. Thus, we can refer to a finite field only by its order, and we shall denote the finite field of order $d$ by \gls{Fd}, where $d$ must be a prime power for $\fd{F}_d$ to exist. We will not provide the proof here, but will instead give a concrete example of how to generate a larger finite extension field from a prime field $\fd{F}_p$.

\begin{example} Consider the finite field $\fd{F}_2 = \{0,1\}$ of order 2, and let $P(x) = x^2 + x + 1$. We see that $P(0) = P(1) = 1$, so $P(x)$ does not have a root in $\fd{F}_2$. We then define $\lambda$ to be a root of $P(x)$ and define
\begin{equation}
\fd{F}_2(\lambda)= \{a + \lambda b: a,b \in \fd{F}_2\}.
\end{equation}

One can see that $\fd{F}_4 \equiv \fd{F}_2(\lambda) = \{0,1,\lambda, \lambda + 1 \}$ has 4 elements and its addition and multiplication tables can be calculated using the identity $\lambda^2 + \lambda + 1 = 0$ as shown in \cref{tab-F4}. In general, if we start from an irreducible polynomial of degree $r$ in $\fd{F}_p$, we will be able to extend the field to $\fd{F}_{p^r}$.\\

\begin{table}[h]
\centering
\begin{minipage}[b]{0.45\linewidth}\centering
\begin{tabular}{M{0.85cm} M{0.85cm} M{0.85cm} M{0.85cm} M{0.85cm}}
\cellcolor{gray!65}\boldmath$\textcolor{white}{+}$ & \cellcolor{gray!25}$0$ & \cellcolor{gray!25}$1$ & \cellcolor{gray!25}$\lambda$ & \cellcolor{gray!25}$\lambda +1$ \\ 
\cellcolor{gray!25}$0$ & $0$ & $1$ & $\lambda$ & $\lambda +1$ \\ 
\cellcolor{gray!25}$1$ & $1$ & $0$ & $\lambda+1$ & $\lambda$ \\ 
\cellcolor{gray!25}$\lambda$ & $\lambda$ & $\lambda + 1$ & $0$ & $1$ \\ 
\cellcolor{gray!25}$\lambda+1$ & $\lambda + 1$ & $\lambda$ & $1$ & $0$ \\ 
\end{tabular}
\end{minipage}
\hspace{0.45cm}
\begin{minipage}[b]{0.45\linewidth}\centering
\begin{tabular}{M{0.85cm} M{0.85cm} M{0.85cm} M{0.85cm} M{0.85cm}}
\cellcolor{gray!65}\boldmath$\textcolor{white}{\cdot}$ &  \cellcolor{gray!25}$0$ &  \cellcolor{gray!25}$1$ &  \cellcolor{gray!25}$\lambda$ &  \cellcolor{gray!25}$\lambda +1$ \\
\cellcolor{gray!25}$0$ & $0$ & $0$ & $0$ & $0$ \\
\cellcolor{gray!25}$1$ & $0$ & $1$ & $\lambda$ & $\lambda+1$ \\
\cellcolor{gray!25}$\lambda$ & $0$ & $\lambda$ & $\lambda+1$ & $1$ \\
\cellcolor{gray!25}$\lambda+1$ & $0$ & $\lambda+1$ & $1$ & $\lambda$ \\
\end{tabular}
\end{minipage}
\vskip 2mm
\caption[Addition and multiplication tables for $\fd{F}_4$]{Addition and multiplication tables for the finite field $\fd{F}_4$.}
\label{tab-F4}
\end{table}

\end{example}

We would like to mention a few basic properties of finite fields \cite{Lidl1997} that are frequently used in Section \ref{sec:MUBcyclers} and Section \ref{sec:eigenvectors}. First of all, every finite field admits a primitive element $\theta$ (there could be more than one) such that every non-zero element $\lambda$ in the field can be written as $\lambda = \theta^k$ for some non-negative integer $k$ smaller than the order of the field. Secondly, for $\fd{F}_p$ and its extension field $\fd{F}_{d}$, where $d = p^n$ is a prime power, the following statements hold:
\begin{enumerate}
\item $a^d=a  \hskip 7mm \forall a \in \fd{F}_d$.
\item $(a+b)^p = a^p + b^p  \hskip 7mm \forall a,b \in \fd{F}_{d}$.
\item $\forall a \in \fd{F}_{d}$, $a^p=a$ if and only if $a \in \fd{F}_p$.
\end{enumerate}

\subsection{Field trace}

\begin{definition} Given a Galois extension $\fd{E}/\fd{F}$, the field trace of an element $e\in \fd{E}$ denoted by \gls{trace}$(e)$ (we use the lower case to denote the field trace to distinguish it from the matrix trace) is defined as the sum of all Galois conjugates of $e$:
\begin{equation}\label{def-tr}
\tr(e) = \sum_{g \in \Gal{(\fd{E}/ \fd{F})}} g(e).
\end{equation}
\end{definition}
One notices that $\tr(e)$ is left invariant by every Galois automorphism in the Galois group, which implies that $\tr(e)$ belongs to the ground field $\fd{F}$. Thus, field trace is a mapping from $\fd{E}$ to $\fd{F}$. Its following properties can be straightforwardly verified.
\begin{enumerate}
\item $\tr(e_1 + e_2) = \tr(e_1) + \tr(e_2)$ for all $e_1,e_2 \in \fd{E}$.
\item $\tr(f e) = f \tr(e)$ for all $e \in \fd{E}$ and $f \in \fd{F}$.\\
\end{enumerate}

\begin{example} For the $\fd{C}/\fd{R}$ extension, the Galois group only has 2 elements: the identity and complex conjugation. Therefore taking the field trace of a complex number,
\begin{equation}
\tr(c) = c + c^*,
\end{equation}
is just the same as taking its real part (modulo a factor of 2).
\end{example}

For the extension $\fd{F}_d / \fd{F}_p$, where $d = p^n$ is a prime power (this will be the context in which we use field trace), there is a concrete formula for the field trace given by \cite{Lidl1997}
\begin{equation} \label{def-tr-primepower}
\tr(x) = x + x^{p} + x^{p^2} + ... + x^{p^{n-1}} \quad \forall x \in \fd{F}_d \ .
\end{equation}

\section{Finite-field construction of Clifford unitaries}\label{sec:Clifford-app}
The Clifford group is introduced in \cref{sub:Clifford}, in which we describe a construction of Clifford unitaries in a $d$-dimensional Hilbert space for a general $d$. We refer to those as the ordinary Clifford group. When $d$ is a prime or a prime power, a finite field of order $d$ exists (see \cref{sub:finitefields}), providing us an aid to express the Weyl-Heisenberg group and Clifford unitaries in a slightly different way that is more convenient for certain purposes, such as in the constructions of MUBs (in \cref{sec:MUBs}) and Galois-unitaries (in \cref{sec:gunitaries}). These variants of the WH and Clifford groups are referred to as Galoisian variants. Galoisian Clifford groups in turn come in two versions: a full and a restricted one, where the latter is a subgroup of the former \cite{Appleby2009}. Here we are only interested in the restricted version. To better illustrate how finite fields come into play, we will start with the simpler case of prime dimensions before going into the generalized case of prime power dimensions.\\

\begin{note}This section is a review of some results already fully described by Appleby in \cite{Appleby2009}. There are other constructions for unitary representations of $\SL(2,\fd{F}_d)$ (for example by Chau \cite{Chau2005}). We choose to use the one in \cite{Appleby2009} only for convenience.
\end{note}
\subsection{In odd prime dimensions $d=p$}\label{sub:Clifford-prime}
We start with the case of the Hilbert space's dimension $d=p$ being an odd prime. Let $\omega$ be a $p$-th primitive root of unity
\be \omega = e^{2\pi i /p}, \ee
and let $\{\ket{x}:x =0,1,...,p-1\}$ be the standard basis, where the labels on the states are elements of the finite field $\fd{Z}_p$, i.e. they are integers modulo $p$. 
\begin{definition}We define the shift operator \gls{X} and the phase operator \gls{Z} by their action on the basis states in the usual way:
\be X\ket{x} = \ket{x+1} \hskip 15mm Z\ket{x} = \omega\ket{x}. \ee
\end{definition}
\begin{definition}
The Weyl-Heisenberg displacement operators \gls{Du} are defined to be
\be \label{def-Du-prime}
D_{\bu} = \omega^{u_1 u_2/2}X^{u_1}Z^{u_2} \hskip 15mm \bu=\bmp u_1 \\ u_2 \emp, \ee
where the two components $u_1$ and $u_2$ are elements of $\fd{Z}_p$ and $1/2$ denotes the inverse of 2 in that field.
\end{definition}
\begin{definition}
For any two vectors $\bu$ and $\bv$ in the discrete phase space $\fd{Z}_p^2$, the symplectic form \gls{Omega} is defined as
\be \Omega(\bu,\bv) = u_2 v_1 - u_1 v_2. \ee
\end{definition}
The phases $\omega^{u_1u_2/2}$ in \cref{def-Du-prime} allow us to write the group law in the form
\be \label{eq-grouplaw-p}
D_{\bu}D_{\bv} = \omega^{\Omega(\bu,\bv)}D_{\bu+\bv}.\ee

The symplectic form \gls{Omega} has a geometrical meaning: if we consider $\bu$ and $\bv$ as vectors on the real plane $\fd{R}^2$, whose components happen to be integers modulo $p$, then \gls{Omega} is the area of the parallelogram spanned by $\bu$ and $\bv$. Therefore it is sometimes also called the symplectic area. Let $G$ be a linear transformation on the phase space, i.e. $G$ is a $2\times2$ matrix whose entries are in $\fd{Z}_p$, and let
\be \Delta = \det(G).\ee
It can be seen that under this linear transformation, the symplectic form is scaled by a factor of $\Delta$:
\be \label{eq-OmegaG}
\Omega(G\bu,G\bv) = \Delta \Omega(\bu,\bv). 
\ee

Next, we want to define unitaries $U_S$, which are called symplectic unitaries for a reason that will become clear in a moment, so that their action on the Weyl-Heisenberg displacement operators takes the form
\be U_S D_{\bu} U_S^{-1} = D_{S\bu}.\ee
Two conditions are required for this \cite{Appleby2009}. The first condition is that $S$ has to be a linear transformation. The second one comes from the group law \cref{eq-grouplaw-p}, from which it can be seen that $S$ has to preserve the symplectic form. In view of \cref{eq-OmegaG}, we deduce that $S$ must have determinant one and must therefore belong to the symplectic group $\SL(2,\fd{Z}_p)$. Their faithful (as opposed to only projective) unitary representation is given by \cite{Appleby2009}
\be \label{def-US-prime}
S = \bmp \alpha & \beta \\ \gamma & \delta \emp 
\hskip 3mm \rightarrow \hskip 3mm 
U_S =
\begin{cases} l(\alpha ) \sum_x\omega^{\alpha \gamma x^2/2}\ket{\alpha x}\bra{x} &\text{if } \beta = 0 \\
\frac{e^{i\phi}}{\sqrt{d}}\sum_{x,y} \omega^{\frac{\delta x^2 - 2xy + \alpha y^2}{2\beta}} \ket{x}\bra{y} 
&\text{if }\beta \neq 0, 
\end{cases}
\ee
where $\det(S) = \alpha \delta - \beta \gamma = 1$ mod $d$,
\be \label{def-Legendre}
l(x) = 
\begin{cases}\phantom{-}1 &\text{ if } x \in {\bf Q}\\
 -1 &\text{ if } x \in {\bf N}\\
\phantom{-}0 &\text{ if } x = 0, \end{cases}
\ee
and
\be
e^{i\phi} =
\begin{cases}
(-1)^k l(-\beta) & \text{if } d = 4k+1\\
i(-1)^{k+1} l(-\beta) & \text{if } d = 4k+3.
\end{cases}
\ee
In number theory, \gls{Legendre} is known as the Legendre symbol. \gls{Q} is the set of quadratic residues, i.e. elements of $\fd{Z}_p$ that can be written as the square of another non-zero element, and \gls{N} is the set of quadratic non-residues. 

The Clifford group then consists of all possible products of symplectic unitaries and displacement operators $U_S D_\bu$. In other words, it is isomorphic to the semidirect product of the symplectic group and the Weyl-Heisenberg group
\be
\SL(2,\fd{Z}_p) \ltimes \fd{Z}_p^2 \ .
\ee

\subsection{In odd prime power dimensions $d=p^n$}\label{sub:Clifford-primepower}
When the dimension $d = p^n$ is an odd prime power, we use the elements of the field $\fd{F}_d$ to label the elements of the Weyl-Heisenberg group just like in the odd prime case. However, $\Fd$ in general contains abstract elements which are not ordinary numbers, so the definitions in the previous section have to be slightly adjusted (since, for example in the definition of $D_\bu$ in \cref{def-Du-prime}, one cannot raise $\omega$, $X$, and $Z$  to the power of an abstract field element).\\
\begin{definition}
With $\omega=e^{2\pi i/p}$ still being a primitive $p$-th root of unity, we now define the shift and the phase operators as follows:
\be X_u\ket{x} = \ket{x+u} \hskip 15mm Z_u\ket{x} = \omega^{\tr(xu)}\ket{x}, \ee
where $x,u \in \fd{F}_d$ and the field theoretic trace $\tr(x)$ is a mapping from $\fd{F}_d$ to the ground field $\fd{F}_p$ defined in \cref{def-tr-primepower} and specifically denoted in the lower case to be distinguished from the trace of linear operators.\\
\end{definition}
\begin{definition} The Weyl-Heisenberg displacement operators are defined to be
\be
D_{\bu} = \omega^{\tr(u_1 u_2 /2)}X_{u_1}Z_{u_2},
\ee
where the two components $u_1$ and $u_2$ of $\bu$ are elements in $\fd{F}_d$.
\end{definition}

The WH group constructed this way is isomorphic to the direct product of $n$ copies of the WH group defined in the previous section for prime dimension $p$ \cite{Appleby2009}. The Clifford group, just like in the previous case, is the semidirect product of the symplectic group $\SL(2,\fd{F}_d)$ and the WH group. A faithful unitary representation of $\SL(2,\fd{F}_d)$ is given by \cite{Appleby2009}
\be \label{def-US-primepower}
S = \bmp \alpha & \beta \\ \gamma & \delta \emp 
\hskip 3mm \rightarrow \hskip 3mm 
U_S =
\begin{cases} l(\alpha ) \sum_{x\in\fd{F}_d}\omega^{\tr(\alpha \gamma x^2/2)}\ket{\alpha x}\bra{x} &\text{if } \beta = 0 \\
\frac{e^{i\phi}}{\sqrt{d}}\sum_{x,y \in \fd{F}_d} \omega^{\tr(\frac{\delta x^2 - 2xy + \alpha y^2}{2\beta})} \ket{x}\bra{y} 
&\text{if }\beta \neq 0, 
\end{cases}
\ee
where 
\be e^{i\phi} = (-i)^{-n(p+3)/2}l(-\beta) \ee
and \gls{Legendre} is again the Legendre symbol defined in \cref{def-Legendre}, but now for the field $\fd{F}_d$. One can  verify that this representation reduces to \cref{def-US-prime} when $d=p$.

\cleardoublepage 	
\phantomsection  
\renewcommand*{\bibname}{References}
\addcontentsline{toc}{chapter}{\textbf{References}}
\printbibliography

\glsaddallunused 

\end{document}